\documentclass[a4paper, 11pt]{article}
\usepackage[margin=3cm]{geometry}
\usepackage[utf8]{inputenc}
\usepackage[T1]{fontenc}
\usepackage{yfonts}
\usepackage{dsfont}
\usepackage{amsfonts,amsmath,amsxtra,amssymb,latexsym,amscd,amsthm,pb-diagram,fancyhdr,euscript}
\usepackage{latexsym}
\usepackage{mathrsfs}
\usepackage{mathtools}
\usepackage{multicol}
\usepackage{setspace}
\usepackage{slashed}
\usepackage[linktocpage, colorlinks=true, linkcolor={blue}, citecolor={blue}]{hyperref}
\usepackage{cleveref}
\usepackage{fancyhdr}
\usepackage{physics}
\usepackage{graphicx}
\usepackage{enumitem}
\usepackage[font=small,labelfont=bf]{caption}
\graphicspath{{./Figures/}}
\pagestyle{fancy}

\allowdisplaybreaks
\usepackage[sortcites,backend=bibtex,style=numeric,sorting=anyvt,doi=false,url=false,giveninits=true,isbn=false,maxbibnames=99]{biblatex}
\addbibresource{MyLibrary.bib}
\AtEveryBibitem{\clearfield{note}}
\AtEveryBibitem{\clearfield{month}}
\AtEveryBibitem{\clearfield{day}}

\title{\textbf{Scattering of Dirac Fields in the Interior of Kerr-Newman(-Anti)-de Sitter Black Holes via Comparison and Symmetry Operators}}
\author{\textbf{Mokdad Mokdad\thanks{University Grenoble Alps - Fourier Institute;  remi-mokdad.mokdad@univ-grenoble-alpes.fr} { }~~ and  Milos Provci\thanks{University of Münster; milos.provci@uni-muenster.de}}}
\date{}

%\Crefformat{section}{\S#2#1#3} % see manual of cleveref, section 8.2.1
%\Crefformat{subsection}{\S#2#1#3}
%\Crefformat{subsubsection}{\S#2#1#3}

\theoremstyle{theorem}
\newtheorem{theorem}{Theorem}[section]

\theoremstyle{theorem}
\newtheorem{definition}{Definition}[section]

\theoremstyle{proposition}
\newtheorem{proposition}{Proposition}[section]

\theoremstyle{corollary}
\newtheorem{corollary}{Corollary}[section]

\theoremstyle{lemma}
\newtheorem{lemma}{Lemma}[section]

\theoremstyle{theorem}
\newtheorem{remark}{Remark}[section]

\theoremstyle{theorem}
\newtheorem*{remark*}{Remark}

\theoremstyle{theorem}
\newtheorem{hypo}{Hypothesis}[section]

\theoremstyle{example}

\newcommand{\M}{\mathcal{M}}
\newcommand{\T}{\mathcal{T}}
\renewcommand{\d}{\partial}

\renewcommand{\H}{\mathcal{H}}
\newcommand{\A}{\mathcal{A}}
\newcommand{\B}{\mathcal{B}}
\newcommand{\I}{\mathrm{I}}
\newcommand{\slD}{\slashed{D}{}}
\newcommand{\slDs}{\slashed{D}_{\mathcal{S}^2}}
\newcommand{\s}{\mathcal{S}}
\newcommand{\R}{\mathbb{R}}
\newcommand{\C}{\mathbb{C}}
\newcommand{\g}{\mathbf{g}}
\newcommand{\f}{\tilde{f}}

\renewcommand{\S}{\mathbb{S}}
\newcommand{\diff}{\,\mathrm{d}}

\newcommand{\diag}{\mathrm{diag}}

\newcommand{\inner}[2]{\langle{#1},{#2}\rangle}
\newcommand{\Hsp}[1]{{H^{#1}_{\operatorname{sp}{}}}}
\newcommand{\Bm}{B^{-1/2}}
\newcommand{\Bnorm}[2]{\norm{\Bm #1 \Bm #2}}

\newcommand{\milbracket}[1]{\Bigg[\hspace{-4 pt}\Bigg[ #1 \Bigg]\hspace{-4 pt}\Bigg]}
\newcommand{\nablas}{\tilde{\nabla}}
\newcommand{\Q}{\mathcal{Q}}
\newcommand{\slcalD}{\slashed{\mathcal{D}}}

\begin{document}
	
\maketitle

\begin{abstract}
	In this paper we construct a scattering theory for the massive and charged Dirac fields in the interiors of sub-extremal Kerr-Newman(-anti)-de Sitter black holes. More precisely, we show existence, uniqueness and asymptotic completeness of scattering data for such Dirac fields from the event horizon of the black hole to the Cauchy horizon. Our approach relies on constructing the wave operators where the Hamiltonian of the full dynamics is time-dependent. To prove asymptotic completeness, we use two methods. The first involves a comparison operator, while for the second we introduce and employ a symmetry operator of the Dirac equation. 
\end{abstract}

{\bf Keywords.} Scattering theory, Black hole interior, Kerr-Newman(-Anti)-de Sitter metric, Dirac equation, symmetry operator. 

\vspace{0.1in}

{\bf Mathematics subject classification.} 35Q41, 35Q75, 83C57.

\tableofcontents

%\newpage

\section{Introduction}\label{sec:introduction}

%\edit{\begin{enumerate}
%		\item Increased interest in the
%		 interior because of the CCC.
%		\item history of scattering
%		\item history of scattering in the interior
%		\item blue shift and CHI
%		\item recent development
%		\item our result
%		\item main difficulty
%		\item method and approach
%		\item future work
%		\begin{enumerate}
%			\item extension to all $\lambda<0$
%			\item geometric interpretation
%			\item transfer of regularity and $C^1$-blow-up
%			\item other fields
%		\end{enumerate}
%	\item expectations and speculations
%	\item organisation of the paper
%	\end{enumerate}}

 The scattering approach made its debut in general relativity decades ago and has now become a standard tool in probing many phenomena of black hole physics as well as in exploring their mathematics. From superradiance and Hawking radiation to stability of black hole exteriors,  scattering in the exterior regions of black hole  spacetimes was and still is being used extensively, and reviewing its literature in any sufficient degree of coverage is beyond the scope of this work. We therefore provide a sample which is by no means an exhaustive list for works on scattering outside black holes but which the reader can find a complete overview of the subject  therein, in addition to more references  \cite{daude_direct_2017,daude_sur_2004,joudioux_conformal_2012,bachelot_asymptotic_1994,bachelot_gravitational_1991,bachelot_hawking_1999,bachelot_scattering_1992,bachelot_scattering_1997,dafermos_scattering_2018,dimock_classical_1986,dimock_scattering_1985,dimock_scattering_1986,hafner_scattering_2004,hafner_sur_2003,bony_scattering_2005,georgescu_asymptotic_2017,pham_peeling_2017,besset_scattering_2021,nicolas_conformal_2016,nicolas_scattering_1995,mason_conformal_2004,mokdad_reissner-nordstrom-sitter_2017,mokdad_maxwell_2016}.    

In contrast, the list of mathematical studies of scattering theories in the interior regions of black holes is shy and young. Although the original motivation for these studies goes back to the 1970s due to a work by Penrose and Simpson \cite{simpson_internal_1973}, it is only recently that the subject has experienced a boom in the interest of the mathematics and physics communities of General Relativity. Nevertheless, this interest is growing very rapidly and many groups and researchers are actively working on problems inside black holes.

Shortly after formulating his singularity theorem with Hawking, Penrose put forth the cosmic censorship conjecture in 1969 in an attempt to save determinism in General Relativity. He suggested that the Cauchy horizon should be unstable and proposed the mechanism of blue-shift at the horizon as a cause for the instability. And in the aforementioned work of himself and Simpson, they ran numerical simulations in the interior of a Reissner-Nordström black hole and noticed a divergence in the energy flux of electromagnetic radiations measured by an observer crossing the Cauchy horizon. Following these observations, first McNamara \cite{mcnamara_instability_1997} and then  Chandrasekhar and Hartle \cite{chandrasekhar_crossing_1982}, used stationary scattering to study the blue-shift instabilities and the $C^1$-blow-up at the Cauchy horizon for gravitational perturbations. These perturbations are governed by wave equations, namely the Regge-Wheeler-Zerilli equations of polar and axial perturbations. Motivated by these works, in 2017 Dafermos and Shlapentokh-Rothman \cite{dafermos_time-translation_2015} provided a treatment of blue-shift instabilities for the scalar wave equation on a Kerr black hole using their earlier work of scattering in the exterior region \cite{dafermos_scattering_2018}. Kehle and Shlapentokh-Rothman in 2019 studied scattering of the geometric wave equation inside Reissner-Nordström black holes \cite{kehle_scattering_2019}. In their work, they used stationary scattering and constructed a complete scattering theory for linear waves in the interior of a Reissner-Nordström black hole. In addition, they proved negative results, i.e.,  a breakdown of scattering, in some cosmological and massive cases. The scattering theory they obtained was constructed directly from the outer horizon to the inner one, without concatenating  scattering maps from an intermediate Cauchy hypersurface to the horizons. In 2022, Mokdad and Nasser showed that in fact the usual construction of scattering using intermediate operators fails \cite{mokdad_scattering_2022}. This breakdown of intermediate scattering happens in all spherically symmetric black holes which possess a dynamic interior similar to that of a sub-extremal Reissner-Nordström black hole. In \cite{mokdad_scattering_2022}, it was shown that the obstruction to the classical construction of scattering lies in the behaviour of solutions at the zero spatial frequency and high angular momentum (spherical harmonic mode). In addition, the analytic framework used there shows that the phenomenon is generic to systems with exponentially decaying potentials, and not limited to the black hole context. Also in 2022 Luk, Oh, and Shlapentokh-Rothman gave another proof of the linear instability of the Reissner-Nordström Cauchy horizon by studying the linear scalar fields and analysing the scattering map at the zero frequency, providing the link between the blue-shift instability and the breakdown of intermediate scattering found earlier in \cite{mokdad_scattering_2022}. The same year, Sbierski provided a mathematical proof of the blue-shift instability at the Cauchy horizon in a sub-extremal Kerr black hole for the Teukolsky equation of spin-2 fields, also using a scattering approach \cite{sbierski_instability_2022}.

For Dirac fields, the first study was done by Häfner, Mokdad and Nicolas in 2021 \cite{hafner_scattering_2021}. In this work, a complete scattering theory for charged and massive Dirac fields was established in the interior of spherically symmetric black holes generalising a sub-extremal Reissner-Nordström black hole. Unlike the scalar field, the Dirac field naturally determines a current which is divergence-free, regardless of the spacetime. Should the spacetime be globally hyperbolic, we have a well-posed initial value problem with the current flux across a hypersurface of constant time being the norm, and the vanishing divergence translates to its conservation across this spacelike foliation. This enables the construction of the scattering map as a composition of two intermediate scattering maps defined on a spacelike hypersurface at a finite time. In \cite{hafner_scattering_2021}, the scattering was first achieved using the wave operators and then reinterpreted geometrically as the (inverse of the) trace operators. Shortly after, Mokdad obtained the same geometric scattering theory using the conformal approach \cite{mokdad_conformal_2022}. The latter did not rely on \cite{hafner_scattering_2021} and the Goursat problem was directly solved using a method  that transfers the setup to systems of waves followed by a reinterpretation of the solution as a Dirac field, see also \cite{mokdad_conformal_2019}. As a natural continuation of these works, in the current paper, we obtain a complete scattering theory in the language of wave operators for charged and massive Dirac fields inside sub-extremal Kerr-Newman-de Sitter and Kerr-Newman-Anti-de Sitter black holes, as well as all their admissible subfamilies, and in fact, in slightly more general spacetimes (see Section \ref{sec:configurations}).

In the above mentioned work \cite{hafner_scattering_2021} on Dirac fields, the spherical symmetry was used to simplify the arguments and reduce the problem to $1+1$-dimensions, and asymptotic completeness follows from Cook's method. This symmetry allows in particular an immediate decomposition of the hamiltonian on the spin-weighted spherical harmonics since the only angular part of the hamiltonian in that case is the Dirac operator on the sphere acting on Dirac bispinors: $\slD$ (defined in \eqref{eq:dirac op on s2}). The operator $\slD$ itself decomposes in a nice manner (see \eqref{eq:decomposion slD}). This decomposition is, however, not indispensable, and in fact was not used in the second work \cite{mokdad_conformal_2022}. In rotating black holes, this symmetry is lost and decomposing the hamiltonian becomes considerably more involved, and the reduction to 1+1-dimensions is not as straightforward. Nonetheless, it is still possible to decompose the hamiltonian on stable subspaces. This was done to establish scattering theories in the exterior of Kerr-type black holes \cite{daude_sur_2004,daude_direct_2017, hafner_scattering_2004, borthwick_scattering_2022}, see also \cite{belgiorno_absence_2009} for such a separation of variables but not necessarily for scattering purposes. The main difference between the exterior and the interior regions is that the hamiltonian becomes time-dependent, and this renders some decomposition methods that were used for the exterior, not so convenient for the interior of the black hole. For example, in \cite{daude_direct_2017}, the authors following \cite{belgiorno_absence_2009} decompose the full angular part of the hamiltonian, which is now more complicated than the $\slD$ operator and involves other angular derivatives due to the rotation of the black hole. The decomposition is done using the eigenvalues of the angular operator, however, these eigenvalues depend on the frequency, i.e., on the Fourier dual variable of the time variable. The resulting scattering theory is then at a fixed energy/frequency. In the interior, what was the time variable outside the black hole becomes a space variable, albeit still being the only natural variable for taking the Fourier transform. It is therefore not clear if such a decomposition would be useful in the interior. 

In principle, a suitable decomposition for the interior is probably still possible. However, it turns out that by using different techniques that are adapted to time-dependent hamiltonians and their dynamics which are $2-$parameter evolution systems, one can directly obtain useful  estimates on the evolution of the field and its derivatives. Here, we borrow techniques and tools from the theory of regularly generated dynamics which allow us to treat the problematic terms coming from the lack of symmetry in an adequate manner and without decomposing. As one expects, the Dirac operator on the sphere is the part that requires the most effort and treating it constitutes the main difficulty in this work. As in the spherically symmetric case in the work \cite{hafner_scattering_2021}, thanks to the exponential decay in time of the hamiltonian, no propagation estimates are needed to compare with the free dynamics. And in Cook's method, most of the terms in the difference between the full and the simplified hamiltonians are controlled in a straightforward manner by the initial data and therefore, it is clear that they decay exponentially fast. The $\slD$ term is however much more complicated since no simple commutation relation holds between it and the dynamics. To overcome this difficulty, we found two independent methods, each has its own interest.  

The first method relies on a comparison operator $B$ whose role is to help proving the necessary control in time on the $\slD$ derivatives of the Dirac field solution, this is the estimate in \Cref{prop:slD u bounded} of the paper. A crucial property of this comparison operator is that it generates a norm on the scattering states that is equivalent to the first Sobolev norm. Ultimately, this comes from a similar but fundamental property of $\slDs$, the Dirac operator acting on \textit{spinors} on the sphere. Namely, it defines a norm equivalent to the spinorial Sobolev norm. Then, the main idea of the proof of \Cref{prop:slD u bounded} is in essence an application of Grönwall's lemma where the exponent is essentially an operator norm involving the commutator of the full hamiltonian and the comparison operator $B$. Estimating this operator norm constitutes the majority of the calculations. 

From another side, in the second method, we use a symmetry operator for the Dirac equation. Similarly to the Carter operator which commutes with the scalar wave equation, there exits a second order symmetry operator $\mathcal{Q}$ for our massive Dirac equation in the general KN(A)dS which commutes with the Hamiltonian and it is remarkably well-suited to the form of the equations used here. In fact, we use this operator to provide a simplified proof which avoids the use of the operator $B$. The simplification occurs in the proof via Cook's method and bypasses the use of \Cref{prop:slD u bounded}. Although the proof using the symmetry operator is simpler, we do not expect it to be robust under perturbations of the fields, while we expect the proof using \Cref{prop:slD u bounded} to hold the same and to be more apt for studying perturbations. 

In this paper we provide both proofs, and even though we only use this symmetry operator to prove scattering without discussing it further, we expect it to have important applications in other contexts. To our knowledge, this is the first time the symmetry operator $\mathcal{Q}$ we introduce here appears in the literature, and may have been unknown before. The closest work we found that discusses a symmetry operator for Dirac fields, is the work by S. Jacobsson and T. Bäckdahl \cite{jacobsson_second_2023}. However, it seems that the only way to obtaining a symmetry operator from their framework is still by trial and error, and a somewhat elaborate computer algebra calculations, and therefore it does not simply follow from there work. We plan on discussing in more depth these symmetry operators in potential future collaboration.

As in previous works, it is interesting to reinterpret the analytic results geometrically. We postpone the geometric interpretation of our results to a future work since the geometry is considerably more complicated than the spherical case. One of the difficulties, even in the spherically symmetric case, concerning the geometric interpretation of the wave operators as the inverses of the trace operators is due to the charge, which require a change in the choice of the gauge of the ambient electromagnetic potential, so that it suits each horizon. In addition to the geometric interpretation, we also plan on studying the transferability of the regularity of a Dirac field from one horizon to the other in the non-Killing directions, in particular, in the transversal direction to the Cauchy horizon. From the scattering theory we construct here, it is readily seen that the regularity in the Killing directions is  transferred. It is worth mentioning that there is currently a similar work that will appear soon \cite{mokdad_Instabilities_2024} studying the regularity of Dirac fields in the transversal direction at the Cauchy horizon in the spherically symmetric case.  Regarding the spacetimes we consider here, we have excluded some geometries from our analysis, namely, the extreme black holes and part of the Anti-de Sitter subfamily. Both of these exclusions are for technical reasons related to the methods we use and, a priori, are not absolute obstacles for scattering. The extreme cases have double horizons and the decay there is not sufficient for our arguments. While some negative values of the cosmological constant create new coordinate singularities in the metric and some terms in our calculations become unbounded, consequently, different arguments are needed to control them. Thus, as part of the future plan, we would like to extend our result to all admissible values of negative cosmological constant and to the extreme black holes. Indeed, it would be interesting to see how the change in the geometry effects the scattering results, and not only the method. 

Modulo this restriction, this current work finishes the construction of the basic scattering theory for Dirac fields in the interior of classical black holes. Notwithstanding, other more exotic black hole spacetimes (e.g. a Vaidya-type metric) with suitable interior structure can be interesting to look at scattering in their interiors. On the other hand, and up to our knowledge, no scattering theories have been dedicated to the Maxwell fields, which we expect to be more involved than Dirac fields, since electromagnetic fields share many properties with the scalar fields, e.g. the absence of a positive definite and conserved quantity in the interior region.

The paper is organised as follows: In the next section we collect the used notations. In \Cref{sec:geometric setting} we present the metric and then we define the interior of a sub-extremal black hole in \Cref{subsec:interior}. There, we discuss the precise conditions on the free parameters of the metric in order to have a dynamic interior bounded by two horizons in each of the well-know subfamilies of the exact solutions to Einstein's equations. Afterwards we state the hypothesis we require on the metric components. In \Cref{subsec:asymptotics} we introduce a Regge-Wheeler-type variable and give the asymptotics of the relevant horizon function with respect to this variable. We finish the geometric setup with \Cref{subsec:PND} where we provide an adapted tetrad to our geometry and foliation.

\Cref{sec:Diracfield} introduces the Dirac fields starting with the bispinor bundle and the spinor components in the abstract index notation. The Dirac equation is in \Cref{subsec:Diracequation} and there we discuss the Dirac current and its conservation. In \Cref{subsec:NP-formalism} we recall the Newman-Penrose formalism and we give the projection of the Dirac equation on the spin-frame associated to our choice of normalised null tetrad. Next, we reformulate the problem as a Schrödinger equation with a hamiltonian. Before proceeding to scattering theory, we collect important properties and results for the Dirac operator $\slDs$ on the sphere in \Cref{sec:slDs}.

The scattering theory itself and the main results are in \Cref{sec:scattering}. We discuss the full and the simplified dynamics, after which we state the technical result of \Cref{prop:slD u bounded}. Then we state and prove the existence of the wave operators and their inverses as strong limits, as well as their unitarity.

\Cref{sec:B-operator} is dedicated to the proof  of \Cref{prop:slD u bounded}. It consists of the properties of the comparison operator, the commutator calculations and estimates. 

Finally, in the first appendix, we collect and prove the conditions on the parameters of the black hole to have a dynamic interior with two horizons. In the second appendix, we provide the main calculations for the spin coefficients used in the Newman-Penrose formalism in \Cref{subsec:NP-formalism}. 
 
\subsection*{Acknowledgment} The authors are deeply grateful to Thierry Daudé for his help and support on several occasions and for his valuable input to this work. We would like also to express our special thanks for Dietrich Häfner for his insightful observation regarding the symmetry operator which allowed a simplified proof of the main result. M. Mokdad acknowledges the appreciated support form the London Mathematical Society through the Atiyah-UK fellowship.

\subsection*{Notations and conventions}
We summarise here some of the notations and conventions used in this paper.
\begin{itemize}
	\item  For two functions $f$ and $g$, we write $f\lesssim g$ if there exists a constant $C>0$ such that $f(x)\le C g(x)$ for all $x$. We say that $f$ and $g$ are  equivalent, denoted by $f\eqsim g$, if $f\lesssim g$ and $g \lesssim f$. Also, we write $f\sim g$ if $f\eqsim g$ holds asymptotically, i.e., for all $x\ge x_0$. In particular, this implies 
	\begin{equation*}
	\lim\limits_{x\to\infty} \frac{f(x)}{g(x)} = C,
	\end{equation*} for some constant $C>0$. The hidden constant in $\lesssim$ may change from line to line in the calculations. 
%	Furthermore, if we wish to emphasize the dependence of the (hidden) constant $C$ on a parameter $\ell$, we use the symbol $\lesssim_{\ell}$.
	
	% We use $A\lesssim B$ to imply the inequality $A(x)\le C B(x)$ for some constant $C>0$ which may change from line to line. As before, the dependence on a parameter is indicated by $\lesssim_{\ell}$. Clearly, $A\eqsim B$ if and only if $A\lesssim B$ and $B\lesssim A$.
%	\item  We say that $A$ and $B$ are  equivalent, and write $A\eqsim B$, if $A\lesssim B$ and $B \lesssim A$, i.e., there exist two constants $C_1>0$ and $C_2>0$ such that $C_1 A(x) \le B(x) \le C_2 A(x)$ for all $x$. As before, the dependence on a parameter $\ell$ is indicated by $\eqsim_{\ell}$.
	\item We denote by $\mathcal{C}_c^\infty(\mathcal{O})$ the collection of smooth compactly supported functions on a manifold $\mathcal{O}$ with values in $\mathbb{C}$, and by ${\cal C}^\infty_c ({\cal O}; F)$ the space of smooth sections of a fibre bundle $F$ which are compactly supported in ${\cal O}$. We use $\Gamma({\cal O}; F)$ to denote the space of all sections of $F$ over $\cal O$.
	\item In many parts of the paper, we rely on the abstract index formalism and notation for both spinors and vectors, see \cite{penrose_spinors_1987}.
	\item For a generic variable $x$, we use the shortened notation for derivation $\d_x=\frac{\d}{\d x}$, $D_x$ denotes $-i\d_x$, as well as  the abstract index notation $l^a\d_a$ for a generic vector field $l$.
	\item For Lorentzian metrics, we adopt the signature convention $(+\, -\, - \,\, - )$.
	\item  $\s^2$ is the unit 2-sphere with $\omega=(\theta,\varphi)\in  (0, \pi)_\theta \times (0, 2\pi)_\varphi$, 
\end{itemize}

\section{Geometric setting}\label{sec:geometric setting}

This section serves to present our geometric framework.

\subsection{The Kerr-Newman(-anti)-de Sitter metrics}\label{subsec:Kerr-metric}
We shall be considering a spacetime given by a 4-manifold $\mathcal{K}$ endowed with a Lorentzian metric $\g$ from the Kerr-Newman(-anti)-de Sitter (KN(A)dS) family of metrics, describing an eternal black hole, possibly with rotation and charge, in a spacetime with a cosmological constant that may be zero -- in which case we get the Kerr-Newman family of metrics. The suffixes de Sitter and anti-de Sitter refer to the cases when the cosmological constant is respectively positive and negative. 

The spacetime $(\mathcal{K},\g)$ is thus an exact solution of the Einstein-Maxwell coupled equations, and in Boyer-Lindquist coordinates, is given by $\mathcal{K} = \mathbb{R}_t \times \mathbb{R}^+_r \times \s^2_\omega$ and the metric $\g$ has the form:
\begin{multline}\label{eq:metric 0}
	\g = \frac{\Delta_r - \Delta_\theta a^2 \sin^2\theta}{\lambda^2 \rho^2}\diff t^2 + 	\left(\Delta_r a^2 \sin^2\theta - \Delta_\theta(r^2+a^2)^2 \right) \frac{\sin^2\theta}{\lambda^2\rho^2}\diff \varphi^2\\
-
	\frac{\rho^2}{\Delta_r} \diff r^2 -
	\frac{\rho^2}{\Delta_\theta} \diff \theta^2 +
	\left(\Delta_\theta (r^2+a^2) - \Delta_r \right)\frac{2a\sin^2\theta}{\lambda^2\rho^2} \diff t \diff \varphi ,
\end{multline}
or more compactly,
\begin{equation}\label{eq:metric 1}
	\g = \frac{\Delta_r}{\lambda^2 \rho^2}  \alpha \otimes \alpha - \frac{\rho^2}{\Delta_r}  \diff r^2 - \frac{\rho^2}{\Delta_\theta}\diff\theta^2 - \frac{\Delta_\theta \sin^2\theta}{\lambda^2\rho^2} \beta \otimes \beta,
\end{equation}
where $M>0$, $Q\in\mathbb{R}$; and we restrict  $a\in\mathbb{R}$, $\Lambda \in \R$ such that, \begin{equation}\label{eq:lor cond}
	\Lambda a^2 > -3 \, ,
\end{equation}
meaning\footnote{The restriction on $(\Lambda, a)$ imposed by \eqref{eq:lor cond} ensures that $\g$ as a Lorentzian metric is smooth in $\theta$ and well-defined, which is evident from the two positive quantities in \eqref{Delta_Theta-and-lembda}, respectively.}  
\begin{equation}\label{Delta_Theta-and-lembda}
 \Delta_\theta(\theta) := 1 + \frac{\Lambda a^2}{3} \cos^2{\theta} > 0, \quad \lambda := 1 + \frac{\Lambda a^2}{3} > 0;
\end{equation}
%\begin{align*}
%	& \Delta_\theta(\theta) := 1 + \frac{\Lambda a^2}{3} \cos^2{\theta} > 0; &  && \lambda := 1 + \frac{\Lambda a^2}{3} > 0;
%	\\
%	& \text{and we set}\\
%	& \Delta_r(r) := (r^2 + a^2)\bigg(1 - \frac{\Lambda r^2}{3}\bigg) - 2Mr + \lambda^2 Q^2 ; &  && \rho(r, \theta):= \sqrt{r^2 + a^2 \cos^2\theta};
%	\\
%	& 
%	\alpha_a \diff x^a := \diff{t}-a \sin^2{\theta} \, \diff \varphi; & && \beta_a \diff x^a := a\, \diff{t}-(r^2+a^2)\diff{\varphi};
%	\\
%	& 
%	\alpha^a \partial_a = \frac{\lambda^2  (r^2+a^2)}{\Delta_r} \bigg( \partial_t + \frac{a}{r^2+a^2} \,\partial_\varphi \bigg); & && \beta^a \partial_a = \frac{\lambda^2}{\Delta_\theta \sin^2\theta} \bigg( a \sin^2\theta \, \partial_t + \partial_\varphi \bigg).
%\end{align*}
in addition, we have the horizon function
\begin{equation}\label{Deltar}
	 \Delta_r(r) := (r^2 + a^2)\bigg(1 - \frac{\Lambda r^2}{3}\bigg) - 2Mr + \lambda^2 Q^2;
\end{equation}
and 
\begin{equation}\label{rho}
\rho(r, \theta):= \sqrt{r^2 + a^2 \cos^2\theta};
\end{equation}
and finally, we have the 1-forms
\begin{align}\label{metric-1-forms}
\alpha_a \diff x^a := \diff{t}-a \sin^2{\theta} \, \diff \varphi,\quad \beta_a \diff x^a := a\, \diff{t}-(r^2+a^2)\diff{\varphi}.
\end{align}

As per usual notation for the Kerr-Newman parameters, $M$, $Q$ and $a$ are respectively the mass, the charge and the angular momentum per unit mass of the black hole. The de Sitter and anti-de Sitter aspects come from the quantity $\Lambda$, which is the cosmological constant. We refer to the quadruplet $(M,Q,a,\Lambda)$ as the free parameters of the KN(A)dS family of black holes. 

A priori, the metric \eqref{eq:metric 1} is to be understood as being defined for values of $r$ different from the roots of $\Delta_r$, yet, the metric can be regularly extended to cover the relevant roots of $\Delta_r$.  These roots correspond to important structures in the spacetime, namely the horizons, which are the null hypersurfaces $\{r=r_z\}$ for $r_z$ a positive root of $\Delta_r$. However, in this present work, we do not present the construction of the extensions of the spacetime $(\mathcal{K},\g)$ to include these roots since we perform no explicit calculations or local analysis there. For readers interested in these extensions, see e.g. \cite{borthwick_maximal_2018,hawking_large_1973,chandrasekhar_mathematical_1998,wald_general_2010,oneill_geometry_2014} for a general overview on the maximal analytic extensions. Also see \cite{hafner_scattering_2021,kehle_scattering_2019,sbierski_instability_2022} for brief descriptions of the constructions relevant to the interior of black holes. 

{Nonetheless, we shall see the horizons as asymptotic regions whose properties are essential for the dynamical approach of scattering used in this paper -- also known as time-dependent scattering. }

%\edit{Maybe add more on KN(A)dS}

For the reader's convenience, we give the expression of $\det \g = -{\lambda^{-4}}{\rho^4 \sin^2\theta}$, from which follows the coordinate expression of the volume form of the metric $\g$: 
\begin{align*}
	\mathrm{dVol}_\g = \frac{\rho^2 \sin\theta}{\lambda^2} \diff t \wedge \diff r \wedge \diff \theta \wedge \diff \varphi,
\end{align*}
which will be used later. 
To fix an orientation on the manifold $\mathcal K$, we take $\mathrm{dVol}_\g$ to be positively oriented, or equivalently, by taking the chart $(t,r,\omega)$ to be positively oriented.  

For more on the KN(A)dS families, the reader can consult the classics \cite{wald_general_2010,hawking_large_1973,chandrasekhar_mathematical_1998} or the discussions in \cite{hintz_global_2018,borthwick_maximal_2018,podolsky_accelerating_2006}.
\subsection{The interior of a KN(A)dS black hole}\label{subsec:interior}

The interior region of the black hole we discuss in our work is distinguished by its horizon boundaries. For this, we give the conditions on the free parameters of the KN(A)dS metric for which these relevant horizons exist. Indeed, the number of horizons in the spacetime $(\mathcal K, \g)$ equals the number of positive real roots of the horizon function $\Delta_r$. We are interested in the case where the two smallest positive roots are simple and $\Delta_r$ is strictly negative in between these two roots. This ensures that our spacetime possesses an interior
%\footnote{That is, non-stationary, as there will be no timelike Killing vector field in such a region.} 
bounded by an inner horizon of the black hole situated at the smallest root, called the Cauchy horizon; and an outer horizon at the larger root referred to as the black hole's event horizon.

\begin{remark}
The simplicity requirement on the multiplicity of the roots guarantees the asymptotics required for our method. Namely, a sufficiently fast decay of $\Delta_r$ as a function of the Regge-Wheeler variable, as discussed in section \ref{subsec:asymptotics}. This means that we  only treat the interior of a \underline{sub-extremal} KN(A)dS black hole.
\end{remark} 
Thus,
\begin{hypo}\label{hypo:1}
	We adopt the assumption that there exist $ r_\pm \in (0, +\infty) $ with $r_-<r_+$ such that:
		\begin{itemize}
			\item[(h1)] $ \Delta_r < 0 $ on $ (r_-, r_+) $,
			\item[(h2)] $ \Delta_r(r_\pm) = 0 \ne \Delta_r'(r_\pm) $.
	\end{itemize}
\end{hypo}
In view of these assumptions,
\begin{definition}\label{def:hypo1}
We define the \underline{interior of a subextremal KN(A)dS black hole} by the couple
\begin{equation}\label{eq:interior def}
(\mathcal M :=  \mathbb{R}_t \times (r_-, r_+)_r \times \s^2_\omega,\,\g{|_\mathcal{M}}),
\end{equation} viewed as a Lorentzian submanifold of the spacetime $(\mathcal K, \g)$, assuming \Cref{hypo:1}.
\end{definition}

\subsubsection{Configurations in subfamilies}\label{sec:configurations}

In what follows, we collect the conditions required on  the free parameters $(M,Q,a,\Lambda)$ of the different subfamilies of the KN(A)dS spacetimes so that \Cref{hypo:1} holds.

\begin{enumerate}[label=(\roman*)]
	
	\item \textbf{Non-black hole spacetimes ($M=0$):} Under {\textit{no}} configuration of the free parameters does \Cref{hypo:1} hold if $M=0$ since $\Delta_r$ will be positive between the two positive real roots when they exist (note that in this case, roots come in pairs of opposite sign). This immediately excludes spacetimes like \textit{Minkowski} $(M=0,Q=0,a=0,\Lambda=0)$, \textit{de Sitter} $(M=0,Q=0,a=0,\Lambda>0)$, or\textit{ anti-de Sitter} $(M=0,Q=0,a=0,\Lambda<0)$ spacetimes, to no surprise.
	
	\item \textbf{Exotic matter black holes} ($M<0$): In the case of negative mass, i.e. $M<0$, again no conditions on the parameters are compatible with \Cref{hypo:1}. This is proven in  \Cref{appendix:Config}.
	
	 {\textit{From now on we thereby assume $M>0$}}.

	\item \textbf{Non-cosmological black holes ($\Lambda=0$):} The horizon function associated to the \textit{Kerr-Newman} metric $(M>0,Q\ne0,a\ne 0,\Lambda=0)$, including its special cases, the \textit{Kerr} $(M>0,Q=0,a\ne 0,\Lambda=0)$ and the \textit{Reissner-Nordström} $(M>0,Q\ne0,a=0,\Lambda=0)$ metrics, reduces to $\Delta_r(r)=r^2-2Mr+a^2+Q^2$ and has two simple positive real roots if and only if $M^2>a^2+Q^2>0$. The roots $r_\pm$ for which \Cref{hypo:1} holds are:
	\begin{equation*}
	r_\pm=M\pm \sqrt{M^2-a^2-Q^2}.
	\end{equation*}
	Clearly, the condition $a^2+Q^2>0$ excludes the \textit{Schwarzschild} metric $(M>0,Q=0,a=0,\Lambda=0)$, as expected.  
	\item \textbf{Reissner-Nordström-type black holes ($a=0$):} For the \textit{Reissner-Nordström-de Sitter} $(M>0,Q\ne0,a= 0,\Lambda>0)$ metric, the different configurations of horizons were fully analysed in \cite{mokdad_reissner-nordstrom-sitter_2017}. While for the \textit{Reissner-Nordström-anti-de Sitter} case $(M>0,Q\ne0,a=0,\Lambda<0)$, it was analysed in \cite{hafner_scattering_2021}. Note that in both cases, the analyses in \cite{mokdad_reissner-nordstrom-sitter_2017,hafner_scattering_2021} show that a non-zero charge must be present in order for the \Cref{hypo:1} to hold, i.e., $Q\ne 0$ is a necessary condition for \Cref{hypo:1}. As one expects, this also excludes the de Sitter and anti-de Sitter versions of the Schwarzschild metric $(M>0,Q=0,a=0,\Lambda\gtrless 0)$. 
	
{	Since the scattering of massive and charged Dirac fields in the interior of Reissner-Nordström-type black holes was completely treated in \cite{hafner_scattering_2021}, we do not give here the explicit conditions for  \Cref{hypo:1} to hold in this case, and we refer the reader to the mentioned work.  }  
	
	\item \textbf{Rotating-de Sitter black holes ($\Lambda >0$, $a\ne0$):} The analysis for the\textit{ Kerr-de Sitter} metric  $(M>0,Q=0,a\ne0,\Lambda>0)$ was done by J. Borthwick in \cite{borthwick_maximal_2018}. The precise conditions on the parameters of a \textit{Kerr-Newman-de Sitter} metric $(M>0,Q\ne0,a\ne0,\Lambda>0)$ are analysed in the appendix. For both of these cases, we give the explicit conditions for which \Cref{hypo:1} holds, this is  \Cref{prop:distinct horizons} and its remark. 
	
	When $\Lambda > 0$, a third horizon may also be present in the spacetime $(\mathcal K, \g)$, called the cosmological horizon, at a larger root of $\Delta_r$. Evidently, this horizon is not included in the interior and turns out to have no effect on the scattering of Dirac fields in the interior region\footnote{In other cases, like the scattering of linear waves (see \cite{kehle_scattering_2019}), the cosmological constant $\Lambda$ has more influence on the existence of the scattering maps.}. That is, our construction of the scattering theory holds equally well whether $\Lambda=0$ or $\Lambda>0$.
	
	\item\textbf{Rotating-anti de Sitter black holes ($\Lambda <0$, $a\ne0$):} The admissible cases and conditions are given in the appendix. Note that, in the body of the paper, $\Lambda a^2>-3$ is assumed in order to retain Lorentzian signature.

\end{enumerate}

\subsubsection{Hypotheses on the metric}

All of the desirable cases in the above survey can be summarised using the following remark. 

\begin{remark}\label{rk:general horizon fn}
Although the main interest lies in the KN(A)dS form of the horizon function $\Delta_r$ that  appears in (\ref{Deltar}), our analysis applies to any other radial function (i.e., with no dependence on $t$, $\theta$ or $\varphi$) that satisfies  \Cref{hypo:1} and which is smooth on the compact interval $[r_-,r_+]\subset(0,+\infty)$ of the Hypothesis.	In fact, these are the only properties we use for the function $\Delta_r$.
\end{remark}
With the above remark in mind, we state the final assumptions on the components of the metric $\g$ needed for our approach to constructing the scattering theory. 

\begin{hypo}\label{hypo:2} 	For $ r_\pm \in (0, +\infty) $ with $r_-<r_+$, and the spacetime $(\mathcal M, \g)$ with 
	\begin{gather*}
			\mathcal M :=  \mathbb{R}_t \times (r_-, r_+)_r \times \s^2_\omega, \\
				\g := \frac{\Delta_r}{\lambda^2 \rho^2}  \alpha \otimes \alpha - \frac{\rho^2}{\Delta_r}  \diff r^2 - \frac{\rho^2}{\Delta_\theta}\diff\theta^2 - \frac{\Delta_\theta \sin^2\theta}{\lambda^2\rho^2} \beta \otimes \beta,
	\end{gather*}
where $\Delta_\theta$ and $\lambda$ are given as in (\ref{Delta_Theta-and-lembda}) with fixed  $(a,\Lambda)\in\R^2$; $\rho$ as in \eqref{rho}; $\alpha$ and $\beta$ as in (\ref{metric-1-forms}); we assume that
		\begin{enumerate}[label*=(H\arabic*)]
			\item $ \Delta_r\in\mathcal{C}^\infty([r_-r_+])$ is any smooth function\footnote{In fact, we only need $\Delta_r$ to be twice continuously differentiable, but we shall keep smoothness for consistence with previous works \cite{hafner_scattering_2021,mokdad_scattering_2022,mokdad_conformal_2022}.} of the variable $r$ only, \label{H1}
			\item $ \Delta_r < 0 $ on $ (r_-, r_+) $, \label{H2}
			\item $ \Delta_r(r_\pm) = 0 \ne \Delta_r'(r_\pm) $, \label{H3}

 	\item $\Lambda a^2>-3$. \label{H4}
	\end{enumerate}

\end{hypo}

{\begin{center}
		\underline{We admit \Cref{hypo:2} for the rest of the paper.}\end{center}}

Like $\mathcal K$, we orient $\mathcal M$ by declaring the chart $(t, r, \theta, \phi)$ as positively oriented. Moreover, we adopt the time orientation for which $-\partial_r$ is future oriented, which is indeed timelike on $\mathcal M$ thanks to \ref{H2}.

%\edit{PICTURE HERE?}

\subsection{The Regge-Wheeler coordinate}\label{subsec:asymptotics}

One way of dealing with the coordinate singularities at $r=r_\pm$ is to push them to infinity. This can be done using a Regge-Wheeler-type coordinate. This change of coordinates gives rise to simpler calculations and to an analytic framework adapted to the wave operators approach to scattering. Let
\begin{equation}\label{f}
	f(r):=\frac{\Delta_r(r)}{\lambda (r^2+a^2)}.
\end{equation} 
The Regge-Wheeler coordinate $\tau$ is then defined by requiring
\begin{align}\label{eq:reggewheeler}
	&\frac{\diff{\tau}}{\diff{r}}(r) = \frac{1}{f(r)},
\end{align}
and any initial condition $\tau(r_0)=0$ for some $r_0\in(r_-,r_+)$. Since $f<0$, $\tau$ is a strictly decreasing function of $r$ which ranges from $ -\infty $ to $ +\infty $ as $ r $ runs from $ r_+ $ to $ r_- $.

As $\d_r$ is timelike, $\d_\tau$ is also timelike, and $\tau$ will represent the time variable with respect to which we will view the evolution of Dirac fields in the interior region. On the other hand, $\d_t$ is spacelike as can be seen from (\ref{eq:metric 0}). Because of that, we relabel $t$ into $x$ to insinuate the spacelike nature of the coordinate variable $t$ in the interior. 

We now define the global chart $(\mathbb{R}_\tau \times \mathbb{R}_x \times \s^2_\omega, (t,x,\omega))$ on $\mathcal M$, and in practice, we will identify them with each other. 
In this chart, $(\tau, x, \theta, \varphi)$, the metric $\g$ in \eqref{eq:metric 1} becomes
\begin{equation}\label{eq:metric 2}
	\g =  \frac{f^2\rho^2}{-\Delta_r}\diff \tau^2 + \frac{\Delta_r}{\lambda^2 \rho^2}  \alpha \otimes \alpha - \frac{\rho^2}{\Delta_\theta}\diff\theta^2 - \frac{\Delta_\theta \sin^2\theta}{\lambda^2\rho^2} \beta \otimes \beta \, ,
\end{equation}	
where $r$ can now be seen as an implicit function of $\tau$, defined by \eqref{eq:reggewheeler}. Moreover, it is clear that $\partial_\tau$ has the same time orientation as $-\d_r$ and is thus future oriented.

Note the flip in the order of the variables $(\tau, x=t, \theta, \varphi)$ with respect to the original Boyer-Lindquist chart $(t=x, r, \theta, \varphi)$. This compensates for the negative sign of $f$ in (\ref{f}), and therefore, the new chart is still positively oriented on $\mathcal M$. This can also be seen from the expression of the metric 4-volume form in this chart:
\[\mathrm{dVol}_\g=\frac{-f\rho^2 \sin\theta}{\lambda^2}\diff \tau \wedge \diff x \wedge \diff\theta \wedge \diff\varphi.\]

The function $f$ contains the same information as $\Delta_r$, and in fact satisfies exactly the same hypotheses, namely, \ref{H1}, \ref{H2} and \ref{H3}, with the same $r_\pm$. The most important feature of $f(r(\tau))$ is its exponentially decaying asymptotics in terms of $\tau$: 
\begin{lemma}\label{lem:asymptotics}
	Let $ r_\pm \in (0, +\infty) $ with $r_-<r_+$, and let $f(r)$ be a real-valued function of one variable ($r$) satisfying\footnote{That is, in place of $\Delta_r$.} \ref{H1}--\ref{H3}. Assume moreover that \eqref{eq:reggewheeler} holds, and let $r(\tau)$ be the inverse function of $\tau(r)$. Finally, set 
	\begin{equation}\label{kappa_pm}
		\kappa_\pm := \frac{1}{2}\frac{\diff f }{\diff r}(r_\pm).
	\end{equation}
Then
\begin{enumerate}[label*=\roman*.]
	\item $\kappa_- < 0<\kappa_+$.
	\item  There exists a constant $\tau_L >0$, depending on the function $f$ only such that
	\begin{align*}
		&    r(\tau)-r_-\eqsim -f(r(\tau)) \eqsim e^{2\kappa_- \tau} \quad \forall \tau > \tau_L \geq 0 , \\
		&r_+-r(\tau)\eqsim -f(r(\tau)) \eqsim e^{2\kappa_+ \tau} \quad \forall \tau < -\tau_L \leq 0 .
	\end{align*}
i.e.,
\begin{equation}\label{eq:f asymptotics}
	\abs{r-r_\pm} \sim e^{2\kappa_\pm \tau} \quad \text{and} \quad \abs{f} \sim e^{2\kappa_\pm \tau} \qquad \text{as}~~ \tau\rightarrow\mp\infty. 
\end{equation}
\end{enumerate}
\end{lemma}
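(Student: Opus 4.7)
The plan is to handle (i) by an elementary sign argument and (ii) by Taylor-expanding $f$ around each simple zero and then integrating the defining ODE \eqref{eq:reggewheeler} explicitly.

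For (i), since $f<0$ on $(r_-,r_+)$ with $f(r_\pm)=0$, elementary difference quotients yield $f'(r_+)\ge 0$ and $f'(r_-)\le 0$; the hypothesis \ref{H3} upgrades these to strict inequalities, giving $\kappa_+>0$ and $\kappa_-<0$.

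For (ii), I would Taylor-expand near each horizon, writing $f(r)=2\kappa_\pm(r-r_\pm)+(r-r_\pm)^2 g_\pm(r)$ with $g_\pm$ smooth, whence
$$\frac{1}{f(r)} = \frac{1}{2\kappa_\pm(r-r_\pm)} + h_\pm(r),$$
with $h_\pm$ smooth (hence bounded) in a neighborhood of $r_\pm$. Integrating \eqref{eq:reggewheeler} from $r_0$ produces
$$\tau(r) = \frac{1}{2\kappa_\pm}\ln\!\left|\frac{r-r_\pm}{r_0-r_\pm}\right| + \int_{r_0}^{r} h_\pm(s)\,\diff s,$$
and exponentiating gives $|r-r_\pm|\eqsim e^{2\kappa_\pm\tau}$ as $\tau\to\mp\infty$, since the bounded integral of $h_\pm$ has an exponential pinched between two positive constants. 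The corresponding statement $|f|\eqsim e^{2\kappa_\pm\tau}$ then follows from $|f(r)|\eqsim|r-r_\pm|$ near $r_\pm$, once again by the Taylor expansion together with $\kappa_\pm\ne 0$. A suitably large $\tau_L>0$ renders both equivalences valid on $\{|\tau|>\tau_L\}$.

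No deep obstacle arises; the principal work is careful bookkeeping of signs. In particular, because $f<0$ on $(r_-,r_+)$ and $\diff\tau/\diff r=1/f<0$, one has $\tau\to+\infty$ iff $r\to r_-^+$ and $\tau\to-\infty$ iff $r\to r_+^-$, which is consistent with the ``$\mp\infty$'' appearing on the right-hand sides of the claimed equivalences together with the opposite signs of $\kappa_\pm$ supplied by part (i).
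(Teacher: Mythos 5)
Your proposal is correct, and it is essentially the standard argument: the paper itself does not spell out the computation but simply notes that the claim follows from \ref{H1}--\ref{H3} and \eqref{eq:reggewheeler}, deferring the explicit calculation to \cite{hafner_scattering_2021} and \cite{mokdad_scattering_2022}, where precisely this Taylor-expansion-plus-integration of $1/f$ near the simple zeros is carried out. Your sign bookkeeping (strictness of $\kappa_\pm\ne 0$ from \ref{H3}, the boundedness of $h_\pm$ up to the relevant endpoint, and the identification of $\tau\to\pm\infty$ with $r\to r_\mp$) is exactly what is needed, so the write-up fills in the details the paper delegates to its references.
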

\begin{proof}
	This follows directly form \ref{H1}--\ref{H3} and \eqref{eq:reggewheeler}. The explicit calculations have been carried out in the previous works, \cite[Equation~(3)]{hafner_scattering_2021} and \cite[Lemma~7.iii]{mokdad_scattering_2022}.
\end{proof}

\begin{remark}\label{rem:asymp-Deltar}
	Clearly, \Cref{lem:asymptotics} applies to $\Delta_r$ in place of $f$. 
\end{remark}

\subsection{Adapted null tetrad}\label{subsec:PND}

To use the Newman-Penrose formalism, we need to construct a null tetrad adapted to our geometric framework. We rely on principal null directions of the spacetime. These arise from the Weyl tensor and define the Petrov type of the spacetime, which is of type D in the case of KN(A)dS (see e.g. \cite{oneill_geometry_2014} or \cite{chandrasekhar_mathematical_1998}). 

In Petrov type D, there are two double principal null directions, and these are given in our spacetime (see\footnote{See also \cite{daude_direct_2017} and the references therein.} \cite[Proposition~1]{borthwick_maximal_2018}) by the future oriented null vectors	

\begin{equation}\label{eq:principal null directions}
V^\pm := -\partial_r \mp \frac{1}{\lambda} \alpha^a \partial_a 
= -\frac1{f} \left(\partial_\tau \pm \left(\partial_x + \frac{a}{r^2+a^2}\partial_\varphi\right)\right).
\end{equation}

A null tetrad $\mathbb{T} = \{e_{i}\}_{i\in\{1,2,3,4\}} = \{l, n, m, \bar m\}$ is basis of the complexified tangent space at each point of a local chart of the spacetime, i.e., a local frame. It is said to be global if it forms a global frame. It consists of four null vectors, two of which are real, while the other two are complex. Here, the real vectors of the tetrad are $l$ and $n$, and are taken to be future oriented, while the complex vectors $m$ and $\bar{m}$ are complex conjugate of each other. We say that the tetrad $\mathbb{T}$ is normalised if its vectors satisfy the following normalisation conditions:

\[ l_a n^a = -m_a\bar{m}^a = 1 \, ;\]
while all the other products are zero.

We form our normalised global tetrad $\mathbb{T}$ on $\mathcal{M}$, adapted to the geometry of the KN(A)dS spacetime, with $l$ and $n$ obtained by normalising the null vectors $V^\pm$ given in \eqref{eq:principal null directions}:
\begin{equation}\label{eq:tetrad}
\mathbb{T} :
\begin{cases}
l^a \partial_a &= \frac{1}{-f}\sqrt{\frac{-\Delta_r}{2\rho^2}} 
\left(\partial_\tau + \partial_x + \frac{a}{r^2+a^2}\partial_\varphi\right)
,
\\[10 pt]
n^a \partial_a &= \frac{1}{-f}\sqrt{\frac{-\Delta_r}{2\rho^2}} 
\left(\partial_\tau - \partial_x - \frac{a}{r^2+a^2}\partial_\varphi\right)
,
\\[10 pt]
m^a \partial_a &= 
\sqrt{\frac{\Delta_\theta}{2 \rho^2}}
\bigg(
\partial_\theta + \frac{ia\lambda\sin\theta}{\Delta_\theta}\partial_x
+
\frac{i\lambda}{\Delta_\theta \sin\theta}\partial_\varphi
\bigg)
,
\\[10 pt]
\bar{m}^a \partial_a &= 
\sqrt{\frac{\Delta_\theta}{2 \rho^2}}
\bigg(
\partial_\theta- \frac{ia\lambda\sin\theta}{\Delta_\theta}\partial_x
-
\frac{i\lambda}{\Delta_\theta \sin\theta}\partial_\varphi
\bigg).
\\
\end{cases}
\end{equation}

Additionally and as we shall shortly see, this normalised Newman-Penrose tetrad is chosen to be adapted to the foliation of $\M$ defined by the hypersurfaces $$\Sigma_\tau:=\{\tau\}\times\R_x\times\s^2_\omega,$$ in the sense that ${l^a+n^a}$ points in the direction of $T^a$, the future-oriented unit normal to $\Sigma_\tau$. In fact, we have
\begin{equation} \label{eq:normal to sigma_t}
	\dfrac{l^a+n^a}{\sqrt{2}} = T^a \, , \quad T^a \partial_a = \frac{\sqrt{-\Delta_r}}{-f\rho}\, {\partial_\tau} \, .
\end{equation}

\section{The Dirac field}\label{sec:Diracfield}

In this section we recall the required material for spinors and the Dirac field with its current. We then give the hamiltonian formulation of the Dirac equation suitable for scattering in the interior of KN(A)dS black holes. For a general account on spinor, see for example any of  \cite{hijazi_spectral_2001,carroll_lecture_1997,abrikosov_jr_dirac_2002,chandrasekhar_mathematical_1998,penrose_spinors_1987}.

\subsection{Dirac bispinor bundle}\label{subsec:bispinorbundle}

Since $\mathcal{M}$ has a Cauchy hypersurface (e.g. $\Sigma_\tau$), it is globally hyperbolic. It follows that it admits a spin structure. The spin structure is given by the spinor bundle $\S^A$ and its complex conjugate bundle $\S^{A'}$, over $\mathcal{M}$.  \textit{Throughout the whole paper, the word ``spinor'' will always refer to a spin-1/2 spinor\footnote{Also known as 2-component spinor.}, i.e., an element of the aforementioned bundles (or their duals)}. 

The bundle $\S^A$ is equipped with a symplectic form $\varepsilon_{AB}$, and its complex conjugate $\varepsilon_{A'B'}$ acts on $\S^{A'}$. These symplectic structures allow one to canonically map $\S^A$ and $\S^{A'}$ to their dual spinor bundles, denoted by $\S_A$ and $\S_{A'}$, respectively.
Being a complex vector bundle, $\S^A$ admits a local basis around each point of $\mathcal{M}$.  Such a basis is called a spin dyad since the fibres of the bundle are of two complex dimensions, and it is customary to denote the spin dyad by $\{ o^A , \iota^A \}$. The complex conjugate of a spin dyad $\{ o^A , \iota^A \}$ is a spin dyad of the conjugate bundle, denoted by $\{ \bar{o}^{A'} , \bar{\iota}^{A'}\}$. And using the symplectic form, the spin dyad determines a spin dyad of the dual bundle $\{ o_B= \varepsilon_{AB} o^A, \iota_B=\varepsilon_{AB}\iota^A \}$, and of course it complex conjugate for $\S_{A'}$. When $o_A \iota^A =1$, we say that the spin dyad is normalised, and refer to it as a spin-frame. 

A spinor field $\phi_A$ is an element of $\Gamma (\S_A )$, the space of sections over $\mathcal M$ of the spinor bundle $\S_A$. The components of a spinor field in a spin-frame $\{ o^A , \iota^A\}$ are given as follows (see \cite[p.111-112]{penrose_spinors_1987} or \cite[Chapter~ 10]{chandrasekhar_mathematical_1998}): \\
For a spinor field $\phi_A \in \Gamma (\S_A )$,
\[ \phi_0 = \phi_A o^A \, ,~ \phi_1 = \phi_A \iota^A ,\]
and, say for $\chi^{A'} \in \Gamma (\S^{A'})$,
\[ \chi^{0'} = -\bar{\iota}_{A'} \chi^{A'} \, ,~ \chi^{1'} = \bar{o}_{A'} \chi^{A'} \, . \]
Because of $o_A\iota^A=1$ and the anti-symmetry of $\varepsilon_{AB}$, we have 
\begin{equation}
	\phi_A=\phi_1 o_A -\phi_0\iota_A \qquad \text{and} \qquad \chi^{A'}=\chi^{0'}\bar{o}^{A'} + \chi^{1'}\bar{\iota}^{A'}.
\end{equation}
Note the change in sign when raising and lowering the numerical index, e.g.,
$$\phi_0=-\phi^1 \qquad \text{and} \qquad \phi_1=\phi^0.$$

The spin structure is closely tied to the Lorentzian structure on spacetimes. There is a 1-to-2 correspondence between normalised null tetrads and spin-frames (see e.g., \cite{penrose_spinors_1987}). To any normalised null tetrad\footnote{Not necessarily the tetrad given in \eqref{eq:tetrad}.} $\mathbb{T}=\{l^a, n^a, m^a, \bar m^a\}$ there correspond uniquely two spin-frames $\{ \pm o^A , \pm \iota^A \}$ that differ by an overall sign only\footnote{This ambiguity in the sign reflects the double covering nature of spinors over the restricted Lorentz group \cite{penrose_spinors_1987}.}, such that
\begin{equation} \label{NPTSF}
l^{AA'}:=l^a = o^A \bar{o}^{A'} \, ,~ n^{AA'}:=n^a = \iota^A \bar{\iota}^{A'} \, ,~ m^a = o^A \bar{\iota}^{A'} \, ,~ \bar{m}^a = \iota^A \bar{o}^{A'} \, .
\end{equation}
This correspondence means that the spinor bundle combined with its conjugate can be identified with the complexified tangent bundle over $\mathcal M$:
\begin{equation}\label{complexifiedTS}
	\S^A \otimes \S^{A'}= T^a\mathcal{M} \otimes \C.
\end{equation}
Moreover, the symplectic form $\varepsilon_{AB}$ together with its conjugate $\varepsilon_{A'B'}$, can be used to decompose the metric: $$\g_{ab} = \varepsilon_{AB} \varepsilon_{A'B'}.$$
Note that due to the anti-symmetric properties, the product $\kappa^A\bar\kappa^{A'}$ of any spinor $\kappa^A$ with its complex conjugate $\bar\kappa^{A'}$ is a null vector.

Finally, the bundle of Dirac bispinors is given by $\S_A \oplus \S^{A'}$. We shall denote one of its elements by $(\phi, \chi)$ or $(\phi_A, \chi^{A'})$. A Dirac field is then a section of this bundle, i.e., an element of $\Gamma(\M;\S_A \oplus \S^{A'})$, and it is an example of a a spin-1/2 fermion field. When projected on a tetrad $\mathbb{T}$, that is, on an associated spin-frame $\{o^A,\iota^A\}$, this bispinor gives a complex 4-vector $(\phi_0, \phi_1, \chi^{0'}, \chi^{1'})$, and we have the following identities:
\begin{equation}\label{project-on-tetrad}
\begin{aligned}
	&\phi_A \bar{\phi}_{A'}l^{AA'}= \abs{\phi_0}^2,
	&\phi_A \bar{\phi}_{A'}n^{AA'}= \abs{\phi_1}^2,\\
	&\chi^{A'} \bar\chi^{A}l_{AA'}= |{\chi^{0'}}|^2,
	&\chi^{A'} \bar\chi^{A}n_{AA'}= |{\chi^{1'}}|^2.\\
\end{aligned}
\end{equation}

\subsection{The Dirac equation and the conserved current}\label{subsec:Diracequation}
Using the abstract index notation for the metric connection,  the charged and massive Dirac equation for the bispinor $(\phi, \chi)$ can be written as
\begin{equation}\label{eq:dirac}
\begin{cases}
(\nabla^{AA'}-iqA^{AA'})\phi_A &= \frac{m}{\sqrt2}\chi^{A'},
\\
(\nabla_{AA'}-iqA_{AA'})\chi^{A'} &= -\frac{m}{\sqrt2}\phi_A,
\end{cases}
\end{equation}
where $m$ and $q$ are respectively the mass and the charge of the bispinor field, and
\begin{equation}\label{eq:gauge potential}
A_a dx^a := \frac{Qr}{\rho^2} (\diff{x} - a \sin^2\theta \, \diff\varphi) = \frac{Qr}{\rho^2} \alpha.
\end{equation}
is a gauge potential of the ambient electromagnetic field of the KN(A)dS spacetime. Note that we allow massless and/or uncharged fields, i.e., the constants $m$ and $q$ can be zero. 

\begin{remark}
	We recall that, due to the gauge independence of the Maxwell equations, for any $\xi \in \mathcal{C}^\infty(\M)$, the gauge potentials $A$ and $A + \diff \xi$ yield the same solution $(\g, F)$ to the Einstein-Maxwell equations, with $\g$ the background metric \eqref{eq:metric 0} and $F = \diff A$ the Faraday tensor. Furthermore, the Dirac equation is invariant under the gauge transformation
	\begin{equation}\label{eq:gauge on A}
		A \mapsto \tilde{A} = A + \diff \xi \, , \quad (\phi_A, \chi^{A'}) \mapsto (\tilde\phi_A, \tilde\chi^{A'}) = e^{iq\xi} (\phi_A, \chi^{A'}) \, .
	\end{equation}
	Indeed, using $\nabla \xi = \diff \xi$, we have
	\begin{align}\label{eq:gt dirac}
		\left(\nabla^{AA'} - i q A^{AA'}\right) \phi_A &\mapsto \left(\nabla^{AA'} - i q A^{AA'} - i q \nabla^{AA'}\xi \right) e^{iq\xi} \phi_A \nonumber\\&= e^{iq\xi} \left(\nabla^{AA'} - i q A^{AA'}\right) \phi_A \, .
	\end{align}
	As $\abs{e^{iq\xi}}=1$, $(\tilde\phi_A, \tilde\chi^{A'})$ satisfy \eqref{eq:dirac} with $A$ replaced by $\tilde{A}$ if and only if $ (\phi_A, \chi^{A'})$ satisfy the original \eqref{eq:dirac}.
\end{remark}

The first important property of the Dirac equation is the conservation of  the Dirac current. For a Dirac bispinor $(\phi_A, \chi^{A'})$, its current is defined to be the vector field
\begin{equation}\label{eq:dirac current}
J^a = \phi^A \bar\phi^{A'} + \chi^{A'} \bar\chi^{A}.
\end{equation}
Clearly, the current is gauge independent, as can be seen from \eqref{eq:gauge on A}. Moreover, the current is in fact a future oriented causal vector field. This is because it is the sum of two null vector fields, each of which is future oriented as can be seen from \eqref{eq:normal to sigma_t} and \eqref{project-on-tetrad}.

\begin{lemma}\label{lem:current-div-free}
	When $(\phi_A, \chi^{A'})$ satisfies the Dirac equation \eqref{eq:dirac}, the current $J$ of \eqref{eq:dirac current} is divergence-free, i.e.,
\begin{equation}\label{eq:conserved current}
\nabla^a J_a = 0 \,.
\end{equation}
\end{lemma}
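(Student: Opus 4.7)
The plan is to compute $\nabla_a J^a$ directly using the Leibniz rule and substitute via \eqref{eq:dirac} together with its complex conjugate. Since the Levi-Civita connection is real, $A^a$ is real, and $\overline{iq}=-iq$, taking complex conjugates of \eqref{eq:dirac} yields
\begin{align*}
(\nabla^{AA'}+iqA^{AA'})\bar\phi_{A'} &= \tfrac{m}{\sqrt{2}}\bar\chi^A,\\
(\nabla_{AA'}+iqA_{AA'})\bar\chi^{A} &= -\tfrac{m}{\sqrt{2}}\bar\phi_{A'}.
\end{align*}
These four equations, together with the metricity of $\nabla$ with respect to $\varepsilon_{AB}$ and $\varepsilon_{A'B'}$, are the only inputs needed.

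Applying Leibniz, $\nabla_a J^a$ splits into a $\phi$--$\bar\phi$ sector and a $\chi$--$\bar\chi$ sector. In each sector, one of the two derivatives (for example $\nabla_{AA'}\phi^A$) needs to be rewritten so that it matches the form appearing in \eqref{eq:dirac} (that is, $\nabla^{AA'}\phi_A$). Passing between them through $\varepsilon_{AB}$ or $\varepsilon_{A'B'}$ introduces a single overall sign coming from the antisymmetry of these symplectic forms. Once this is done, substitution produces two kinds of terms: gauge terms proportional to $\pm iqA^a$ and mass terms proportional to $\pm\tfrac{m}{\sqrt{2}}$.

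The gauge terms cancel within each sector: the contribution from the field (carrying $-iq$) and the contribution from its complex conjugate (carrying $+iq$) multiply the same bilinear $A^a\phi^A\bar\phi^{A'}$ (respectively $A^a\chi^{A'}\bar\chi^A$) once the dummy indices are relabelled using the $\varepsilon$-antisymmetry identity $A^A{}_{A'}\phi_A\bar\phi^{A'}=A_A{}^{A'}\phi^A\bar\phi_{A'}$, and hence sum to zero. For the mass terms, careful tracking of the $\varepsilon$-induced signs shows that the $\phi$--$\bar\phi$ sector contributes $+\tfrac{m}{\sqrt{2}}(\phi_A\bar\chi^A+\chi^{A'}\bar\phi_{A'})$, while the $\chi$--$\bar\chi$ sector contributes the opposite, $-\tfrac{m}{\sqrt{2}}(\phi_A\bar\chi^A+\chi^{A'}\bar\phi_{A'})$, since the second line of \eqref{eq:dirac} carries the opposite sign. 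These add to zero, giving $\nabla_a J^a=0$.

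The only delicate point is bookkeeping: tracking the sign produced by each $\varepsilon$-contraction precisely, so that the pairwise cancellation is manifest. No further geometric input is required beyond the existence of the spin structure on $\M$, which is already guaranteed by global hyperbolicity. The proof is therefore essentially algebraic and independent of the specific geometry of the KN(A)dS interior.
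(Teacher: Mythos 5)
Your computation is correct. The paper itself does not present a proof of this lemma but simply cites \cite[Equation~(15)]{hafner_scattering_2021}; the argument you give -- split $\nabla_a J^a$ into the $\phi$--$\bar\phi$ and $\chi$--$\bar\chi$ sectors via Leibniz, convert $\nabla_{AA'}\phi^A$ to the form $\nabla^{A}{}_{A'}\phi_A$ at the cost of a sign from $\varepsilon$-antisymmetry, substitute from \eqref{eq:dirac} and its conjugate, and observe that the gauge terms cancel within each sector (both $\varepsilon$-flips cancel) while the mass terms cancel between the two sectors because of the relative minus sign in the two lines of \eqref{eq:dirac} -- is exactly the standard direct calculation one would supply, and your sign bookkeeping is consistent (e.g.\ $\phi_A\bar\chi^A = -\phi^A\bar\chi_A$, so your stated sector contributions agree with the ones I get). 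In short, you have written out the proof the paper outsources; there is no gap.
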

\begin{proof}
	See, e.g., \cite[Equation~(15)]{hafner_scattering_2021}.
\end{proof}
In view of \eqref{eq:conserved current}, it is natural to consider the flux of $J$ across a hypersurface $\Sigma_\tau$ 
$$\int_{\Sigma_\tau} J_a T^a \, \mathrm{dVol}_{\Sigma_t},$$
where $	\mathrm{dVol}_{\Sigma_\tau}$ is the induced volume form on $\Sigma_\tau$
\begin{equation}\label{dVolSigmat}
	\mathrm{dVol}_{\Sigma_\tau} :=i_T(\mathrm{dVol}_{\g})= \frac{\sqrt{-\Delta_r \, \rho^2}}{\lambda^2} \sin\theta \diff x \wedge\diff \theta\wedge\diff\varphi \, ,
\end{equation}
where $i_T$ is the interior product by the vector $T$. Or as a measure,
\begin{equation}\label{dVolSigmatMEASURE}
	\mathrm{dVol}_{\Sigma_\tau} = \frac{\sqrt{-\Delta_r \, \rho^2}}{\lambda^2} \diff x \diff \omega \,
\end{equation}
with $\diff \omega$ the Lebesgue measure on the Euclidean unit sphere $\s^2$.

Thanks to our choice of tetrad, the flux of $J^a$ gives a natural and simple expression of the $L^2$-norm of $(\phi, \chi)$ on $\Sigma_\tau$. Indeed, let $\{ o^A , \iota^A \}$ be the spin-frame related by \eqref{NPTSF} to the tetrad $\mathbb{T}$ in \eqref{eq:tetrad}, then by \eqref{eq:normal to sigma_t} and \eqref{project-on-tetrad}, we have
\begin{equation} \label{FluxSigmat}
\int_{\Sigma_t} J_a T^a \, \mathrm{dVol}_{\Sigma_\tau} = \frac{1}{\sqrt{2}} \int_{\Sigma_\tau} \left( \vert \phi_0 \vert^2 + \vert \phi_1 \vert^2 + \vert \chi^{0'} \vert^2 + \vert \chi^{1'} \vert^2 \right) \mathrm{dVol}_{\Sigma_t} \, .
\end{equation}

Setting $\Phi  := {}^t (\phi_0, \phi_1, \chi^{0'}, \chi^{1'})$, we define the space ${\cal H}_\tau$ for each $\tau \in \R$ to be
\begin{equation}\label{SpaceHt}
{\mathcal H}_\tau := L^2 (\Sigma_\tau;\S_A \oplus \S^{A'} ) \,  \quad\text{equipped with}\quad \Vert(\phi,\chi) \Vert^2_{{\mathcal H}_\tau} = \frac1{\sqrt2}\int_{\Sigma_\tau} \Vert{\Phi}(\tau)\Vert_{\C^4}^2 \, \mathrm{dVol}_{\Sigma_\tau} \, .
\end{equation}
With a slight abuse of notation, we sometimes write $\Vert\Phi \Vert_{{\mathcal H}_\tau}$ to mean $\Vert(\phi,\chi) \Vert_{{\mathcal H}_\tau}$.

Although they are well-known facts, for the sake of completeness, we now state the well-posedness of \Cref{eq:dirac} and the conservation of the current flux.

\begin{lemma}\label{lem:Well-posed-andCC}
	Fix $s\in\R$ and let $(\alpha_A,\beta^{A'})\in\H_s$, then there exists a unique Dirac field $(\phi_A,\chi^{A'})\in\mathcal{C}(\R_\tau;\H_\tau)$ solving \Cref{eq:dirac} in the weak sense, such that 
	\begin{equation*}
		(\phi_A,\chi^{A'})\vert_{\Sigma_s}=(\alpha_A,\beta^{A'}).
	\end{equation*}
and the norm is conserved in time: $\forall \tau\in\R$
\begin{equation}\label{eq:norm-conservation}
	\Vert(\phi_A,\chi^{A'}) \Vert_{{\mathcal H}_\tau}=\Vert(\alpha_A,\beta^{A'}) \Vert_{{\mathcal H}_s}. 	
\end{equation}
Moreover, any added regularity in the initial data is transferred to the solution at all times, in addition to continuity in time.
\end{lemma}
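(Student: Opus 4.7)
The plan is to combine the standard Cauchy theory for first-order linear symmetric hyperbolic systems on the globally hyperbolic manifold $(\M,\g)$ with the conservation of the Dirac current $J^a$ established in \Cref{lem:current-div-free}, together with the flux identity \eqref{FluxSigmat} which re-expresses this conservation as conservation of the $\H_\tau$-norm. Projecting the Dirac equation \eqref{eq:dirac} onto the spin-frame associated to the tetrad \eqref{eq:tetrad} produces a first-order linear PDE for $\Phi={}^t(\phi_0,\phi_1,\chi^{0'},\chi^{1'})$ in the coordinates $(\tau,x,\omega)$ whose Schrödinger form $i\partial_\tau\Phi=H(\tau)\Phi$ is worked out in \Cref{subsec:NP-formalism}; the principal symbol is hyperbolic, the characteristics lie on null cones of $\g$, the hypersurfaces $\Sigma_\tau$ are Cauchy (since $\partial_\tau$ is timelike on $\M$), and the coefficients are smooth on $\M$, so the hypotheses of the Leray--Dionne theory are satisfied.

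For smooth compactly supported initial data $(\alpha_A,\beta^{A'})\in\mathcal{C}_c^\infty(\Sigma_s;\SAA)$, the classical theory yields a unique smooth global solution $(\phi_A,\chi^{A'})$ on $\M$, and finite propagation speed ensures that its restriction to each $\Sigma_\tau$ has compact support in $x$. Applying the divergence theorem to $J^a$ on a slab $\bigcup_{\sigma\in[\min(s,\tau),\max(s,\tau)]}\Sigma_\sigma$ truncated in $x$ to $[-R,R]$, with $R$ chosen large enough that the lateral boundaries $\{|x|=R\}$ lie outside the support of $J$, leaves only the top and bottom faces as contributing boundary terms. Combined with \eqref{FluxSigmat}, this gives the identity $\Vert\Phi(\tau)\Vert_{\H_\tau}=\Vert\Phi(s)\Vert_{\H_s}$ on the dense subspace of smooth compactly supported data in $\H_s$.

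To pass to general $L^2$ data, I would observe that the solution map $U(\tau,s)\colon(\alpha,\beta)\mapsto(\phi,\chi)|_{\Sigma_\tau}$ is a linear isometry on this dense subspace and thus extends uniquely to a unitary operator $U(\tau,s)\colon\H_s\to\H_\tau$ by the bounded linear transformation theorem. Uniqueness of weak solutions follows immediately by applying norm conservation to the difference of two solutions sharing the same initial data. Continuity in $\tau$ of the extended solution is inherited from continuity on the dense subspace combined with the uniform bound furnished by the isometry property. Finally, added regularity of the initial data is transferred to the solution by commuting $\partial_x$ and the angular derivatives with the evolution: their commutators with $H(\tau)$ are lower-order terms whose coefficients are bounded on every compact sub-interval of $(r_-,r_+)$ traversed by $r(\tau)$, so a Grönwall estimate propagates Sobolev control uniformly on compact $\tau$-intervals.

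The only genuinely delicate point is the justification of the divergence theorem in the presence of the non-compact $x$-direction, which rests on finite propagation speed for the hyperbolic system; everything else is a routine application of well-established results, which is why the lemma is quoted here as a well-known fact.
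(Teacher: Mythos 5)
Your proposal follows essentially the same route as the paper's own proof: reduce to smooth compactly supported data by density, invoke the classical Leray/Dionne well-posedness theory for first-order linear hyperbolic systems on the globally hyperbolic $(\M,\g)$ with $\Sigma_\tau$ as Cauchy hypersurfaces, obtain norm conservation for such data via finite propagation speed and the divergence theorem applied to $J^a$ on a truncated slab, and then extend the solution map to a unitary on $\H_s$ by density. The paper states these steps more tersely (it simply cites Leray's theorems and a modern reference for the first part, and asserts the divergence-theorem argument for the second); you have usefully spelled out the truncation in $x$ that makes the divergence theorem legitimate, and you have sketched the commutator/Grönwall argument behind the regularity-transfer claim — which the paper leaves entirely implicit here, since the genuinely delicate uniform-in-time version of that control is deferred to \Cref{prop:slD u bounded}.
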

\begin{proof}
  Since our spacetime $\M$ is globally hyperbolic and $\Sigma_s$ is a Cauchy hypersurface, existence and uniqueness follow from the standard theory of hyperbolic equations. For example, by density, it is enough to show it for smooth compactly supported initial data $(\alpha_A,\beta^{A'})\in\mathcal C^\infty_c(\Sigma_s;\S_A \oplus \S^{A'} )$. The result then follows from Leray's Theorems \cite{leray_hyperbolic_1955}, or for a more recent reference, see \cite[Theorem~4]{andersson_wave_2018}, and the solution $(\phi_A,\chi^{A'})\in\mathcal{C}^\infty(\M;\S_A \oplus \S^{A'})$.
  
  Identity \eqref{eq:norm-conservation} can be  proven again by density. Using finite speed propagation \cite[Theorem~4]{andersson_wave_2018}, a simple application of the divergence theorem with  \eqref{eq:conserved current} gives \eqref{eq:norm-conservation} for initial data in $C^\infty_c(\Sigma_s;\S_A \oplus \S^{A'} )$.
\end{proof}

\subsection{Newman-Penrose formalism}\label{subsec:NP-formalism}

We now use the Newman-Penrose formalism to obtain a system of coupled PDEs on the spin components $\Phi  = {}^t (\phi_0, \phi_1, \chi^{0'}, \chi^{1'})$ of the bispinor $(\phi_A,\chi^{A'})$. This system of PDEs is \Cref{eq:dirac} in the Newman-Penrose formalism, and it takes the following form (see for instance \cite[Section~103]{chandrasekhar_mathematical_1998}): 
\begin{equation}\label{eq:dirac eq NPF}
\begin{aligned}
	&
	n^a(\partial_a - iqA_a) \phi_0 - m^a(\partial_a - iqA_a) \phi_1 + (\mu_s-\gamma_s) \phi_0 + (\tau_s-\beta_s) \phi_1 = \frac m {\sqrt2} \chi^{0'}
	\,,
	\\
	&
	l^a(\partial_a - iqA_a) \phi_1 - \overline{m}^a(\partial_a - iqA_a) \phi_0 + (\alpha_s-\pi_s) \phi_0 + (\epsilon_s-\rho_s) \phi_1 = \frac m {\sqrt2} \chi^{1'}
	\,,
	\\
	&
	l^a(\partial_a - iqA_a) \chi^{0'} + m^a(\partial_a - iqA_a) \chi^{1'} + \overline{(\epsilon_s-\rho_s)} \chi^{0'} - \overline{(\alpha_s-\pi_s)} \chi^{1'} = -\frac m {\sqrt2} \phi_0 
	\,,\\
	&
	n^a(\partial_a - iqA_a) \chi^{1'} + \overline{m}^a(\partial_a - iqA_a) \chi^{0'} - \overline{(\tau_s-\beta_s)} \chi^{0'} + \overline{(\mu_s-\gamma_s)} \chi^{1'} = -\frac m {\sqrt2} \phi_1
	\, ,
\end{aligned}
\end{equation}
where the spin coefficients are the projections of the connection coefficients onto the Newman-Penrose tetrad (see \cite[Section~4.5]{penrose_spinors_1987} or again \cite[Chapter~10]{chandrasekhar_mathematical_1998}):
\begin{align*}
\kappa_s = m^a \nabla_l l_a \,;~\rho_s = m^a \nabla_{\bar{m}} l_a 
&\,;~
\sigma_s = m^a \nabla_m l_a \,;~\tau_s = m^a \nabla_n l_a \, , 
\\
\varepsilon_s = \frac{1}{2} \left( n^a \nabla_l l_a + m^a \nabla_l \bar{m}_a \right)
&\,;~ 
\alpha_s = \frac{1}{2} \left( n^a \nabla_{\bar{m}} l_a + m^a \nabla_{\bar{m}}
\bar{m}_a \right),
\\
\beta_s = \frac{1}{2} \left( n^a \nabla_m l_a + m^a \nabla_m \bar{m}_a
\right) 
&\,;~ 
\gamma_s = \frac{1}{2} \left( n^a \nabla_n l_a + m^a \nabla_n \bar{m}_a
\right) \, , 
\\
\pi_s = - \bar{m}^a \nabla_l n_a 
\,;~ 
\lambda_s = - \bar{m}^a \nabla_{\bar{m}} n_a 
&\,;~ 
\mu_s = - \bar{m}^a \nabla_m n_a 
\,;~ 
\nu_s = - \bar{m}^a \nabla_n n_a \, .
\end{align*}

For our particular choice of tetrad \eqref{eq:tetrad}, the calculations are provided in \Cref{sec:spin coeffs}. Here we display the final result:
\begin{align}\label{eq:spin coeffs}
\nonumber
\kappa_s &= \sigma_s = \lambda_s = \nu_s = 0 \,;
&&
\tau_s = -\pi_s =
-ia\sin\theta \, \frac{r + i a \cos\theta}{\rho^3}\sqrt{\frac{\Delta_\theta}{2}} \,,	
\\[5 pt]
\rho_s &= -\mu_s 
= \frac{r+ia\cos\theta}{\rho^2} \sqrt{\frac{-\Delta_r}{2\rho^2}} \, ;
&&
\gamma_s = -\epsilon_s =
\frac{- \Delta_r'}{4\sqrt{-2 \rho^2 \Delta_r}}
+
\frac{r + ia \cos\theta}{2\rho^2} \sqrt{\frac{-\Delta_r}{2\rho^2}} \, ,
\end{align}
\begin{align*}
\alpha_s &= -\beta_s = 
-\frac{\cot\theta}{2 \rho} \sqrt{\frac {\Delta_\theta}2}
+
\frac{ia\sin\theta}{2\sqrt{2\Delta_\theta}\rho^3}\bigg(
r + ia\cos\theta 
\bigg)
\bigg(1 - \frac{i\Lambda a r\cos\theta}{3}\bigg) \, .
\nonumber
\end{align*}

To give a readable explicit expression of \Cref{eq:dirac eq NPF}, we make use of the Pauli matrices
\begin{equation}\label{eq:pauli matrices}
	\mathrm{I}_2 := 
	\begin{pmatrix}
		1 & 0 \\
		0 & 1
	\end{pmatrix};
	\quad
	\sigma_x :=
	\begin{pmatrix}
		0 & 1 \\
		1 & 0
	\end{pmatrix};
	\quad
	\sigma_y :=
	\begin{pmatrix}
		0 & -i \\
		i & 0
	\end{pmatrix};
	\quad
	\sigma_z :=
	\begin{pmatrix}
		1 & 0 \\
		0 & -1
	\end{pmatrix},
\end{equation}
and we define
\begin{align}\label{eq:Gamma matrices}
\Gamma_i := \begin{pmatrix}
\sigma_i & 0 \\
0 & -\sigma_i
\end{pmatrix},\quad
\Gamma_{ii} := \begin{pmatrix}
\sigma_i & 0 \\
0 & \sigma_i
\end{pmatrix}, \quad \text{ for } i = x, y, z;
\end{align}
in addition to the matrices
\begin{align}\label{eq:extra defs}
M_\pm := \begin{pmatrix}
		0 & \I_2 \\
		\pm \I_2 & 0	
\end{pmatrix} \quad\text{and}\quad
\Gamma_{0} := \begin{pmatrix}
\I_2 & 0 \\
0 & -\I_2	
\end{pmatrix},
\end{align}
all of which are hermitian and unitary. Finally, let
\begin{equation}\label{ftilde}
	\f(r):= \frac{-f(r)}{\sqrt{-\Delta_r(r)}} = \frac{\sqrt{-\Delta_r(r)}}{\lambda(r^2+a^2)},
\end{equation}
and note that, by \Cref{lem:asymptotics} and its remark, the asymptotic behaviour of $\f$ is given by
\begin{equation}\label{eq:f tilde asymptotics}
\f \sim e^{\kappa_\pm \tau} \qquad \text{as}~~ \tau\rightarrow\mp\infty.
\end{equation}
In this notation, \Cref{eq:dirac eq NPF} becomes
\begin{align}\label{eq:intermed dirac eq}
	\milbracket{\frac{1}{\f\sqrt{2\rho^2}} \partial_\tau 
	+ \frac{\Delta_r' \rho^2 + 2r \Delta_r}{4\sqrt{-2\Delta_r}\rho^3} 
	- \Gamma_z \frac{1}{\f\sqrt{2\rho^2}} \left(\partial_x + \frac{a}{r^2+a^2} \partial_\varphi - \frac{iqQr}{r^2+a^2}\right)& \nonumber
	\\
	+ i \Gamma_0 \frac{a\cos\theta\Delta_r}{2\sqrt{-2\Delta_r}\rho^3}
	+ \Gamma_y \frac{\lambda}{\sqrt{2\Delta_\theta\rho^2}} \left( a\sin\theta \partial_x + \frac{\partial_\varphi}{\sin\theta} \right)	
	+ i \Gamma_{xx} \frac{ra\sin\theta}{2\rho^2} \sqrt{\frac{\Delta_\theta}{2\rho^2}}
	& 
	\\
	- \Gamma_x \left(\sqrt{\frac{\Delta_\theta}{2\rho^2}}\partial_\theta + \frac{\rho^2 \cos\theta (2\Delta_\theta - \lambda) - a^2\cos\theta \sin^2\theta \Delta_\theta}{2\sqrt{2\Delta_\theta}{\sin\theta \rho^3}}\right)- M_- \frac{m}{\sqrt2}&}\Phi = 0. \nonumber
\end{align}
We note that the operator acting on $\Phi$ is regular on $(\M,\g)$. Indeed, $\Delta_r$ is strictly negative, $\Delta_\theta$ is strictly positive, and the coordinate singularity of $(\sin\theta)^{-1}$ can be amended, as usual, by transitioning to another chart on $\s^2$. However, the transition to another chart on $\s^2$ is unnecessary. Indeed, as we will see in the following section, the only appearance of $(\sin\theta)^{-1}$ will be within a regular operator (the operator $\slD$ in \eqref{eq:dirac op}).

\subsection{Hamiltonian formulation}\label{subsec:Hamil-Formulation}
Before studying the evolution of \Cref{eq:intermed dirac eq} as a Schrödinger equation, we first perform a spin transformation\footnote{See e.g. \cite{belgiorno_absence_2009} for more on such a transformation.} on $\Phi$, whose purpose is to simplify the expression of \Cref{eq:intermed dirac eq} as well as the $L^2$-norm by absorbing the density function of the volume form into $\Phi$. Set 
\begin{equation}\label{eq:spin tf}
\Psi := \A^{-1} \Phi \quad \text{ with } \quad \A = \lambda\sqrt{2} \left(-\Delta_r  \rho^2\right)^{-\frac14} S,
\end{equation}
where
\begin{equation}\label{eq:S matrix}
	S := \diag\left(e^{i\varpi}, e^{i\varpi}, e^{-i\varpi}, e^{-i\varpi}\right), \quad \varpi(r, \theta) := \frac{i}{4} \log \frac{r-ia\cos\theta}{r+ia\cos\theta}.
\end{equation}

Then, after due computation, \eqref{eq:intermed dirac eq} simplifies to
\begin{equation}\label{eq:intermed dirac eq 2}
\begin{aligned}
&\milbracket{-\A \partial_\tau + \Gamma_z \A \left(\partial_x + \frac{a}{r^2+a^2} \partial_\varphi - \frac{iqQr}{r^2+a^2} \right) - \Gamma_y \A \f \, \frac{a\sin\theta}{\sqrt{\Delta_\theta}} \left(\lambda \partial_x + \frac{\Lambda a }{3} \partial_\varphi \right)  \\
&+ \Gamma_x \A \f \frac{\Delta_\theta'}{4\sqrt{\Delta_\theta}} + \Gamma_x \A \f \sqrt{\Delta_\theta} \left(\partial_\theta + \frac{\cot\theta}2\right) - \Gamma_y \A \f \sqrt{\Delta_\theta} \frac{\partial_\varphi}{\sin\theta} + M_- \A m \f \rho } \Psi = 0.
\end{aligned}
\end{equation}
Next, we note that $S$, and hence $\A$, commutes with the matrices $\Gamma_i$ for $i = x, y, z$, and that
\begin{equation*}
M_- \A = \A \begin{pmatrix}
0 & e^{-2i\varpi} \I_2 \\
- e^{2i\varpi} \I_2 & 0
\end{pmatrix} 
= \frac{-i}{m\rho} \A M_0 ,
\end{equation*}
where
\begin{equation}	\label{eq:M op}
	M_0 := i M_- mr + a M_+ m\cos\theta.
\end{equation}
Hence, left-multiplying \eqref{eq:intermed dirac eq 2} by $\A^{-1}$ we see that $\Psi$ satisfies
\begin{equation}\label{eq:schrodinger}
\partial_\tau \Psi = -iH(\tau)\Psi \, ,
\end{equation}
where the {Dirac Hamiltonian} is given by
\begin{align}\label{eq:hamiltonian sh}
H(\tau) =
H_0(\tau) + \f(r(\tau)) \left({\sqrt{\Delta_\theta}} \slD + H_1  + M_0(\tau) \right),
\end{align}
with $M_0$ as in \eqref{eq:M op} and
\begin{gather}
\label{eq:H0 op}
H_0 := - \Gamma_z \left(D_x + \frac{a}{r^2+a^2} D_\varphi - \frac{qQr}{r^2+a^2}\right),
\\
\label{eq:Ha op}
H_1:= \Gamma_x \frac{i \Delta_\theta'}{4\sqrt{\Delta_\theta}} + \Gamma_y \frac{a\sin\theta}{\sqrt{\Delta_\theta}} \left(\lambda D_x + \frac{\Lambda a}{3} D_\varphi \right),
\\
\label{eq:dirac op}
\slD := i\Gamma_x \left(\partial_\theta + \frac{\cot\theta}{2}\right) - i \Gamma_y \frac {\partial_\varphi}{\sin\theta}.
\end{gather}
Note that the operator $\slD$ is regular on $\s^2$, including at values $\theta \in \{0, \pi\} $, when acting on $\Hsp{1}(\s^2; \C^2) \oplus \Hsp{1}(\s^2; \C^2)$ (see \eqref{eq:normequiv-DS2} in \Cref{sec:slDs}).

\begin{remark}\label{ismoetryHandHtau}
	Note that in \eqref{eq:S matrix} $\varpi \in \R$, and so $S$ is unitary. Hence, we have
	\begin{equation}\label{eq:new norm on Ht}
		\norm{\Phi}_{\mathcal{H_\tau}}^2 = \int_{\Sigma_\tau} \norm{\Psi}_{\C^4}^2 \diff x \diff \omega.
	\end{equation}
	Given our fixed tetrad $\mathbb{T}$ and its associated spin-frame, the transformation $\mathcal A$ thus, at each time $\tau$, gives an isometry between the space
	\begin{align} \label{SpaceCalH}
		\mathcal{H} = L^2 (\Sigma;\C^4 ), \quad \Vert \Psi \Vert^2_{\mathcal{H}} = \int_{\Sigma} \Vert \Psi \Vert_{\C^4}^2 \diff x \diff \omega \, ; \qquad \Sigma:=\R_x\times\s^2_\omega,
	\end{align}
	and the space $\H_\tau$ given in \eqref{SpaceHt}. Indeed, since $\Phi$ is the spin component vector of $(\phi_A,\chi^{A'})$ in the tetrad $\mathbb{T}$,  $S$ can be seen as the induced spin transformation from a (restricted) Lorentz transformation acting on the tetrad $\mathbb{T}$, and $\Psi$ is the spin component vector of the bispinor $(\phi_A,\chi^{A'})$ in the transformed tetrad. The action of $S$ is to change the spin dyad associated to the tetrad, while the dilation factor in $\A$ rescales the spinor field, yielding a density spinor.
\end{remark}

\begin{remark}\label{symmetrichamil}
	Thanks to the conservation law \eqref{eq:conserved current} for Dirac fields, \eqref{eq:new norm on Ht} means that the norm of a solution of \Cref{eq:schrodinger} is independent of $\tau$, and thus the scalar product of two such solutions is also independent\footnote{By the polarization identity and the linearity of \Cref{eq:schrodinger}.} of $\tau$. It follows that the hamiltonian $H(\tau)$ is a symmetric operator on $\H$ (whose domain we shall specify later on). Indeed, if $\Psi_1$ and $\Psi_2$ are two solutions of \Cref{eq:schrodinger}, we have
	\begin{align}
		0=\d_\tau\langle\Psi_1(\tau),\Psi_2(\tau)\rangle_\H&=\langle\d_\tau\Psi_1,\Psi_2 \rangle_\H + \langle\Psi_1,\d_\tau\Psi_2 \rangle_\H\nonumber \\
		&=-i\big(\langle H(\tau)\Psi_1,\Psi_2 \rangle_\H -\langle\Psi_1, H(\tau)\Psi_2 \rangle_\H \big). \label{eq:symmetric=unitary}
	\end{align}
\end{remark}

\subsection{The Dirac operator on $ \s^2 $}\label{sec:slDs}

One of the major difficulties for studying scattering on Kerr-type black holes is of course the lack of symmetry. The interior is no exception, and in particular, the spherical operator $\slD$ is the most delicate part of the hamiltonian in our approach and needs careful analysis. We will make use of the fact that $\slD$ can be expressed in terms of the well-known Dirac operator on the 2-sphere $\slDs$:
\begin{equation}\label{eq:dirac op on s2}
	\slD = \begin{pmatrix}
		\slDs & 0 \\
		0 & -\slDs
	\end{pmatrix}, \quad \text{where}  \quad \slDs := i \sigma_x \left(\partial_\theta + \frac{\cot\theta}2\right) - i \sigma_y \frac {\partial_\varphi}{\sin\theta}.
\end{equation}
The Dirac operator $\slDs$ has been analysed in detail in the literature and its properties are well-studied. In this section, we recall the main results and features needed for us in this work. The interested reader can refer for example to   \cite{abrikosov_jr_dirac_2002,hijazi_spectral_2001,raulot_sobolev-like_2009,camporesi_eigenfunctions_1996,bar_dirac_1996} for further discussions on $\slDs$ and its generalizations.

Let $\S(\s^2)$ be the unique spinor bundle on $\s^2$, and let $\Gamma({\S(\s^2}))$ be the space of spinor fields on $\s^2$, i.e., sections of $\S(\s^2)$. In fact, the spinor bundle $\S(\s^2)$ is trivial (see e.g. \cite{bar_dirac_1996}), and its sections can be regarded as functions from $\s^2$ into $\C^2$.  Let $L^2(\s^2; \C^2)$ be the space of square integral spinor fields on $\s^2$, whose norm we shall denote simply by $\norm{.}_{L^2(\s^2)}$:
\begin{equation*}
	\norm{\phi}_{L^2(\s^2)}^2=\int_{\s^2} \left(\abs{\phi_0}^2 + \abs{\phi_1}^2\right) \sin\theta \diff \theta \diff \varphi.
\end{equation*}
%\footnote{This of course applies to spinors with upper and lower indices, with or without prime.}

One important property of $\slDs$, is that it defines a norm which is equivalent to the first Sobolev norm of a spinor field on $\s^2$. Before stating this more explicitly, let us first recall the covariant derivatives of a spinor field  with respect to the spinorial Levi-Civita connection $\nablas$ of the negative round metric on $\s^2$, i.e., the metric $-g_{\s^2}=-\diff\omega^2= -\diff\theta^2-\sin\theta\diff\varphi^2$. The negative signature is taken to match our signature convention on the spacetime metric $\g$, $(+---)$ in which the sphere has a negative definite metric. The covariant derivatives of $\phi\in\Gamma({\S(\s^2}))$ are given by the expressions (\cite[Equation~(9)]{abrikosov_jr_dirac_2002}\footnote{Notice the "$+$" in the second equation of \eqref{eq:spin der} instead of the "$-$" appearing in \cite{abrikosov_jr_dirac_2002}. This is due to our negative signature convention.}): 
\begin{equation}\label{eq:spin der}
	\nablas_\theta \phi = \partial_\theta \phi \quad \text{and} \quad \nablas_\varphi \phi = \partial_\varphi \phi + i \sigma_z \frac{\cos\theta}2 \phi .
\end{equation}
If we instead choose the orthonormal frame $ \{e_1, e_2\} = \{ \partial_\theta, \frac{1}{\sin\theta} \partial_\varphi\} $ on $\s^2$, we have
\begin{equation}\label{eq:spin der 2}
\nablas_{e_1} \phi = \partial_\theta \phi \quad \text{and} \quad \nablas_{e_2} \phi = \frac{1}{\sin\theta} \partial_\varphi \phi+ i \sigma_z \frac{\cot\theta}2 \phi.
\end{equation}
Moreover, for such an orthonormal frame, one has, 
\begin{equation}\label{eq:lemma 4.11 hijazi}
	\Vert{\nablas \phi}\Vert_{L^2(\s^2)}^2 = \Vert{\nablas_{e_1} \phi}\Vert_{L^2(\s^2)}^2 + \Vert{\nablas_{e_2} \phi}\Vert_{L^2(\s^2)}^2.
\end{equation}

%For the rest of this subsection, we denote by $\S(\s^2)$ the spin bundles $\S_A(\s^2)$ or $\S^{A'}(\s^2)$.
%\begin{definition}
%	Let $L^2(\s^2; \S(\s^2))$ be the space of spinor-valued functions such that for $\phi \in L^2(\s^2;  \S_A(\s^2))$ and $\chi \in L^2(\s^2;  \S^{A'}(\s^2))$,
%	\begin{equation*}
%		\int_{\s^2} \left(\abs{\phi_0}^2 + \abs{\phi_1}^2\right) \sin\theta \diff \theta \diff \varphi \quad \text{ and } \quad \int_{\s^2} \left(\abs{\chi^{0'}}^2 + \abs{\chi^{1'}}^2\right) \sin\theta \diff \theta \diff \varphi
%	\end{equation*}
%	are bounded. These integrals define the square of the $L^2$ norms $\norm{\phi}^2_{L^2(\s^2)}$ and $\norm{\chi}^2_{L^2(\s^2)}$.
%\end{definition}
The (spinorial) Sobolev\footnote{See for example \cite{raulot_sobolev-like_2009}.} space $\Hsp{1}(\s^2; \C^2)$ is the subspace of $L^2(\s^2; \C^2)$ consisting of spinor fields $\phi$ with 
	\begin{equation*}
		\Vert{\nablas \phi}\Vert_{L^2(\s^2)} < \infty,
	\end{equation*}
	and it is endowed with the Sobolev norm $\norm{.}_{\Hsp{1}(\s^2)}$ given by
	\begin{equation*}
		\norm{\phi}_{\Hsp{1}(\s^2)}^2 = \norm{\phi}_{L^2(\s^2)}^2 + \Vert{\nablas \phi}\Vert_{L^2(\s^2)}^2.
	\end{equation*}

The higher order spinorial Sobolev spaces $\Hsp{k}(\s^2; \C^2)$, $k\in\mathbb{N}$, and the space of smooth spinors ${\mathcal C}^\infty(\s^2; \C^2)$ can be defined iteratively, in a way analogous to the usual Sobolev spaces.  Additionally, all these spaces are dense in $L^2(\s^2; \C^2)$. 
%\begin{remark*}
%	When no ambiguity is present, we sometimes drop the spinor bundle from the notation, and we simply write ${\Hsp{1}(\s^2)}$ and ${L^2(\s^2)}$ for example.
%\end{remark*}

The following simple estimates are needed in later sections, especially section \ref{sec:B-operator}. We state the lemma here since it only involves the spin structure on $\s^2$.
\begin{lemma}\label{lemma:sin theta} For all $\phi \in \Hsp{1}(\s^2; \C^2)$ and for all functions $F:[0,\pi)_\theta \to \R$ such that $\abs{F(\theta)} \lesssim \abs{\sin\theta}$, we have
	\begin{align}
		&\norm{\partial_\varphi \phi}_{L^2(\s^2)} \le \norm{\phi}_{\Hsp{1}(\s^2)}, \label{eq:phi bound} \\
		&\norm{F(\theta)\left(\partial_\theta + \frac{\cot\theta}2 \right) \phi}_{L^2(\s^2)} \lesssim \norm{\phi}_{\Hsp{1}(\s^2)}. \label{eq:D-phi bound}
	\end{align}
\end{lemma}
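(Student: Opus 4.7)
The plan is to express both derivatives $\partial_\varphi\phi$ and $(\partial_\theta + \tfrac{\cot\theta}{2})\phi$ in terms of the spinorial covariant derivatives $\nablas_{e_1}\phi$ and $\nablas_{e_2}\phi$ via \eqref{eq:spin der 2}, and then to apply the Pythagorean-type identity \eqref{eq:lemma 4.11 hijazi} to control everything by $\|\nablas\phi\|_{L^2(\s^2)}^2 \le \|\phi\|_{\Hsp{1}(\s^2)}^2$. The only delicate point is the apparent singularity of $\cot\theta$ at the poles, which will be cancelled either by the explicit factor $\sin\theta$ appearing upon inversion of \eqref{eq:spin der 2}, or by the hypothesis $|F(\theta)| \lesssim |\sin\theta|$; in both cases the singular factor combines with $\cot\theta$ to produce the bounded function $\cos\theta$.

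For \eqref{eq:phi bound}, I would invert the second identity of \eqref{eq:spin der 2} to obtain
\begin{equation*}
    \partial_\varphi\phi = \sin\theta\,\nablas_{e_2}\phi - \tfrac{i}{2}\cos\theta\,\sigma_z\phi.
\end{equation*}
Viewing the right-hand side as a linear combination of the two vectors $X := \nablas_{e_2}\phi$ and $Y := -\tfrac{i}{2}\sigma_z\phi$ with scalar coefficients $(\sin\theta,\cos\theta)$, the elementary Cauchy--Schwarz-type inequality $|aX+bY|^2 \le (a^2+b^2)(|X|^2+|Y|^2)$ together with $\sin^2\theta + \cos^2\theta = 1$ yields the pointwise bound
\begin{equation*}
    |\partial_\varphi\phi|^2 \le |\nablas_{e_2}\phi|^2 + \tfrac14 |\phi|^2.
\end{equation*}
Integrating over $\s^2$ and invoking \eqref{eq:lemma 4.11 hijazi} gives $\|\partial_\varphi\phi\|_{L^2(\s^2)}^2 \le \|\nablas\phi\|_{L^2(\s^2)}^2 + \|\phi\|_{L^2(\s^2)}^2 = \|\phi\|_{\Hsp{1}(\s^2)}^2$, which is the desired inequality with the sharp constant $1$.

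For \eqref{eq:D-phi bound}, using $\nablas_{e_1}\phi = \partial_\theta\phi$ from \eqref{eq:spin der 2} I would simply rewrite
\begin{equation*}
    F(\theta)\!\left(\partial_\theta + \tfrac{\cot\theta}{2}\right)\!\phi = F(\theta)\,\nablas_{e_1}\phi + \tfrac{F(\theta)\cot\theta}{2}\phi.
\end{equation*}
The hypothesis $|F(\theta)| \lesssim |\sin\theta|$ implies $|F(\theta)\cot\theta| \lesssim |\cos\theta| \le 1$, so the second term is pointwise $\lesssim |\phi|$, while the first is pointwise $\lesssim |\nablas_{e_1}\phi|$. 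Taking $L^2(\s^2)$ norms and using \eqref{eq:lemma 4.11 hijazi} to bound $\|\nablas_{e_1}\phi\|_{L^2(\s^2)} \le \|\nablas\phi\|_{L^2(\s^2)}$ completes the argument. The proof is thus essentially pointwise algebra followed by the orthogonal decomposition \eqref{eq:lemma 4.11 hijazi}; I do not foresee a serious obstacle, only the mild bookkeeping needed to obtain the sharp constant in \eqref{eq:phi bound}, supplied by the identity $\sin^2\theta + \cos^2\theta = 1$.
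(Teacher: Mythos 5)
Your proof is correct and follows essentially the same route as the paper's: both rewrite the coordinate derivatives through the spinorial covariant derivatives $\nablas_{e_1}, \nablas_{e_2}$ of \eqref{eq:spin der 2} and then invoke the orthogonal decomposition \eqref{eq:lemma 4.11 hijazi}. Your pointwise Cauchy--Schwarz step with $\sin^2\theta+\cos^2\theta=1$ for \eqref{eq:phi bound} is a slightly cleaner way to get constant $1$ than the paper's $L^2$-level manipulation, but it is the same underlying argument.
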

\begin{proof}
	From \eqref{eq:spin der 2}, \eqref{eq:lemma 4.11 hijazi} and the unitarity of $\sigma_z$, we get
	\begin{align*}
		\norm{\partial_\varphi \phi}_{L^2(\s^2)}^2 &\le \norm{\left(\partial_\varphi + i \sigma_z \frac{\cos\theta}2\right) \phi}_{L^2(\s^2)}^2 + \norm{\frac{\cos\theta}2 \phi}_{L^2(\s^2)}^2 \\
		&\le \norm{\left(\frac{1}{\sin\theta}\partial_\varphi + i \sigma_z \frac{\cot\theta}2\right) \phi}_{L^2(\s^2)}^2 + \norm{\phi}_{L^2(\s^2)}^2 \le \norm{\phi}^2_{\Hsp{1}(\s^2)}.
	\end{align*}
	To show \eqref{eq:D-phi bound}, simply note that
	\begin{align*}
		\norm{F(\theta)\left(\partial_\theta + \frac{\cot\theta}2 \right) \phi}_{L^2(\s^2)}^2 & \lesssim \norm{\partial_\theta \phi}_{L^2(\s^2)}^2 + \norm{\cos\theta \, \phi}_{L^2(\s^2)}^2
		\\ & \lesssim \Vert{\nablas_{e_1}\phi}\Vert_{L^2(\s^2)}^2 + \norm{\phi}_{L^2(\s^2)}^2 \lesssim \norm{\phi}_{\Hsp{1}(\s^2)}^2.
	\end{align*}
\end{proof}

We now state the properties of $\slDs$ most relevant to us.

\begin{lemma}\label{lem:propertiesofDs2}
	The Dirac operator $\slDs$ defined in \eqref{eq:dirac op on s2} satisfies:
	\begin{enumerate}[label*=(\roman*)]
		\item $D(\slDs):=\left\{\phi\in L^2(\s^2;\C^2): \slDs\phi\in L^2(\s^2;\C^2) \right\}=\Hsp{1}(\s^2; \C^2)$.
		\item $(\slDs,D(\slDs))$ is a self-adjoint operator on ${L^2(\s^2; \C^2)}$.
		\item the Schrödinger-Lichnerowicz-Weitzenböck formula: for all $\phi\in{\mathcal C}^\infty(\s^2; \C^2)$,	
		\begin{equation}\label{eq:schrodinger-lichnerowicz}
			\slDs^2 \phi= -\nablas^2 \phi + \frac12 \phi,
		\end{equation}
	where $\nablas^2$ is the spinorial Laplacian on the sphere.
		\item For all $\phi\in\Hsp{1}(\s^2; \C^2)$, \begin{equation}\label{eq:normequiv-DS2}
			\norm{\slDs\phi}_{L^2(\s^2)} \eqsim \norm{\phi}_{\Hsp{1}(\s^2)},
		\end{equation}	
	Moreover,
	\begin{equation}\label{eq:DS2greaterthanL2}
		\norm{\slDs\phi}_{L^2(\s^2)} \ge \norm{\phi}_{L^2(\s^2)}.
	\end{equation}
	\item The spectrum of $(\slDs,D(\slDs))$ satisfies $\sigma(\slDs)\cap(-1,1)=\emptyset$. 
%	Furthermore, $\slDs^{-1}\in\B\left(L^2({\s^2})\right)$.
	\end{enumerate}
\end{lemma}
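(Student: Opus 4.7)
The plan is to establish the five items in the listed order, each one leaning on the previous. For (i) and (ii), I would start from $\slDs$ defined on the smooth core $\mathcal C^\infty(\s^2;\C^2)$. Symmetry on this core is checked by integration by parts in $(\theta,\varphi)$ against the coordinate volume form $\sin\theta\,\diff\theta\diff\varphi$, using the Hermiticity of $\sigma_x,\sigma_y$ and the fact that the zero-order coefficient $\frac{\cot\theta}{2}$ in \eqref{eq:dirac op on s2} is exactly what absorbs the singular contributions at $\theta\in\{0,\pi\}$. Since $\slDs$ is a first-order elliptic differential operator on the closed spin manifold $\s^2$, the classical theory of Dirac operators on closed spin manifolds (see e.g.\ \cite{hijazi_spectral_2001,bar_dirac_1996}) then gives essential self-adjointness on this core. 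Elliptic regularity identifies the domain of the self-adjoint closure with the maximal domain $\{\phi\in L^2:\slDs\phi\in L^2\}$ and with $\Hsp{1}(\s^2;\C^2)$, since on a compact manifold the graph norm of a first-order elliptic operator is equivalent to the first Sobolev norm.

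For (iii), I would appeal to the Schr\"odinger--Lichnerowicz formula $\slDs^2 = -\nablas^2 + \tfrac{R}{4}$ valid on any Riemannian spin manifold, with $R$ the scalar curvature; for the unit round $2$-sphere $R=2$, which gives the claimed constant $\tfrac{1}{2}$. A direct check is also available by squaring the coordinate expression in \eqref{eq:dirac op on s2}, expanding the resulting products of Pauli matrices via $\sigma_i\sigma_j=\delta_{ij}\I_2+i\,\varepsilon_{ijk}\sigma_k$, and comparing the outcome to $-\nablas^2$ computed from \eqref{eq:spin der 2}; the first-order cross terms between $\partial_\theta$ and $\partial_\varphi$ cancel, while the zero-order piece is exactly $\tfrac{1}{2}$. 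With (iii) in hand, integration by parts against $\phi\in\mathcal C^\infty(\s^2;\C^2)$ yields
\begin{equation*}
\norm{\slDs\phi}_{L^2(\s^2)}^2 = \Vert\nablas\phi\Vert_{L^2(\s^2)}^2 + \frac{1}{2}\norm{\phi}_{L^2(\s^2)}^2,
\end{equation*}
which, combined with $\norm{\phi}_{\Hsp{1}(\s^2)}^2 = \Vert\nablas\phi\Vert_{L^2(\s^2)}^2 + \norm{\phi}_{L^2(\s^2)}^2$ and density, proves the equivalence in \eqref{eq:normequiv-DS2}.

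Finally, for (v) the cleanest route is to invoke the explicit spectrum of $\slDs$ on $(\s^2,g_{\s^2})$: the eigenvalues are $\pm(k+1)$ for $k\in\mathbb{N}$, with multiplicity $2(k+1)$, as obtained from the decomposition into spin-weighted spherical harmonics \cite{camporesi_eigenfunctions_1996,bar_dirac_1996}. In particular $|\lambda|\ge 1$ for every $\lambda\in\sigma(\slDs)$, which is exactly the spectral gap; an application of the functional calculus for the self-adjoint operator from (ii) then upgrades (iii)--(iv) to the improved lower bound \eqref{eq:DS2greaterthanL2}. As a more intrinsic alternative one can instead use Friedrich's inequality $\lambda_1^2\ge \frac{n}{4(n-1)}R_{\min}$ on a closed spin manifold of dimension $n$, which for $n=2$ and $R_{\min}=2$ gives $\lambda_1^2\ge 1$ immediately. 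The main point of care throughout will be the sign/signature convention: the paper uses the negative round metric $-g_{\s^2}$ on $\s^2$ to be consistent with the $(+---)$ ambient signature, yet the covariant derivatives \eqref{eq:spin der} and the operator $\slDs$ are the ones associated to the positive round metric, so the classical formulas (Lichnerowicz with $R=+2$, the spherical-harmonic spectrum, Friedrich's bound) apply as stated without any sign flip.
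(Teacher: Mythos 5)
Your proposal is correct, and it rests on the same pillars as the paper's own proof: the Lichnerowicz identity on the round sphere, self-adjointness quoted from the classical Dirac-operator literature, and the explicit spectrum coming from the spin-weighted spherical harmonic decomposition. The differences are in the routing. For (i), the paper obtains $\Hsp{1}(\s^2;\C^2)\subseteq D(\slDs)$ from the pointwise bound $\abs{\slDs\phi}^2\le 2\vert\nablas\phi\vert^2$ and gets the reverse inclusion from the norm equivalence (iv), whereas you invoke essential self-adjointness on the smooth core plus elliptic regularity to identify the maximal domain with $\Hsp{1}(\s^2;\C^2)$; both are standard, yours leans more on quoted elliptic theory, the paper's is more self-contained since it proves (iv) anyway. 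For the lower bound and the spectral gap the logical order is reversed: the paper first proves \eqref{eq:DS2greaterthanL2} by expanding in spin-weighted spherical harmonics (eigenvalues $\pm\big(l+\tfrac12\big)$, $l\ge\tfrac12$) together with Parseval, and then deduces (v) from $\slDs^2\ge 1$ via the min--max principle; you start from the known spectrum $\pm(k+1)$, $k\in\mathbb{N}$ (the same data) and recover \eqref{eq:DS2greaterthanL2} by the functional calculus for the self-adjoint operator — equivalent content, and your alternative via Friedrich's inequality $\lambda_1^2\ge\frac{n}{4(n-1)}R_{\min}=1$ is a clean intrinsic shortcut the paper does not use. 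Your derivation of \eqref{eq:normequiv-DS2} from the identity $\norm{\slDs\phi}_{L^2(\s^2)}^2=\Vert\nablas\phi\Vert_{L^2(\s^2)}^2+\tfrac12\norm{\phi}_{L^2(\s^2)}^2$ plus density reproduces exactly the paper's two-sided inequality, and your closing remark on the signature convention (classical formulas apply with $R=+2$, no sign flip) is consistent with how the paper handles it.
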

\begin{proof} For \textit{(i)}, note that $\slDs=i\sigma_x\nablas_{e_1}-i\sigma_y\nablas_{e_2}$, and hence 
	\begin{equation}\label{eq:DS2lessNablas}
		\abs{\slDs\phi}^2\le2\vert{\nablas \phi}\vert^2,
	\end{equation} 
which implies that $\Hsp{1}(\s^2; \C^2)\subseteq D(\slDs)$. The other inclusion follows from \textit{(iv)}.

The second point is a classical result and a proof in the black hole settings can be found in \cite{hafner_scattering_2004}. A treatment for manifolds more general than $\s^2$  can be found in \cite{hijazi_spectral_2001}.  
%Nonetheless, this can also be shown by a straightforward calculation using the local form \eqref{eq:dirac op on s2} of $\slDs$, integrating by parts and noting that the boundary terms are zero since $(\phi_i\bar{\psi_j})(\theta,0)=(\phi_i\bar{\psi_j})(\theta,2\pi)$ with $(i,j)=(0,1)$ or $(1,0)$, for any $\phi,\psi\in\Gamma(\S(\s^2))$.

The Lichnerowicz formula is also a classical result \cite{abrikosov_jr_dirac_2002,hijazi_spectral_2001} and can be shown by a direct calculation. From \eqref{eq:dirac op on s2} we have 
\begin{equation}\label{eq:slD^2}
	\slD^2 = {D_\theta}^2 + \frac{1}{\sin^2\theta} {D_\varphi}^2 + \Gamma_{zz} \frac{\cot\theta}{\sin\theta} D_\varphi - i \cot\theta D_\theta + \frac1{4\sin^2\theta} + \frac14,
\end{equation}
and from
\begin{equation}\label{eq:laplacian-sphere-general}
	\nablas^2=\frac{1}{\sqrt{|g_{\s^2}|}}\nablas_a\left(g_{\s^2}^{ab}\sqrt{|g_{\s^2}|}\nablas_b\right)
\end{equation}
we get
\begin{equation}\label{eq:slD^2}
	-\nablas^2 = {D_\theta}^2 + \frac{1}{\sin^2\theta} {D_\varphi}^2 + \Gamma_{zz} \frac{\cot\theta}{\sin\theta} D_\varphi - i \cot\theta D_\theta + \frac1{4\sin^2\theta} - \frac14.
\end{equation}
For point \textit{(iv)}, the Formula \eqref{eq:schrodinger-lichnerowicz} gives us the equivalence of norms \eqref{eq:normequiv-DS2} for smooth spinors using \textit{(ii)} and \eqref{eq:schrodinger-lichnerowicz} with
\begin{equation}\label{eq:norm2side-inequality}
	\norm{\slDs\phi}_{L^2(\s^2)}^2 \leq \norm{\phi}_{\Hsp{1}(\s^2)}^2 \leq 2\norm{\slDs\phi}_{L^2(\s^2)}^2.
\end{equation}
Since $\Hsp{1}(\s^2; \C^2)$ is the completion of ${\mathcal C}^\infty(\s^2; \C^2)$ with respect to $\norm{.}_{\Hsp{1}(\s^2)}$, the equivalence holds for all $\phi\in\Hsp{1}(\s^2; \C^2)$ by density. Incidentally,
another simple proof for \eqref{eq:normequiv-DS2} using pseudo-differential techniques was done in \cite[Lemma~1~and~Remark~1]{raulot_sobolev-like_2009}.

One way of proving Inequality \eqref{eq:DS2greaterthanL2} is using the decomposition of $\slDs$ on spin-weighted spherical harmonics $\{{W}^l_n\}$ which form a basis for $L^2(\s^2;\C^2)$, see, e.g., \cite{gelfand_representations_1958,abrikosov_jr_dirac_2002} or \cite{hafner_scattering_2021,hafner_scattering_2004} and the references therein. In this decomposition, if\footnote{Here, $\odot$ is the component-wise Hadamard product of vectors.} 
\begin{equation}
	\phi= \sum_{(l,n)}{\phi}^l_n \odot {W}^l_n \, ,
\end{equation} 
then
\begin{equation}\label{eq:decomposion slD}
	\slDs\phi= \sum_{(l,n)}\left(l+\frac{1}{2}\right) (\sigma_x{\phi}^l_n) \odot {W}^l_n \, ,
\end{equation}
where $l\ge \frac{1}{2}$ in both sums. Taking norms, Parseval's identity gives \eqref{eq:DS2greaterthanL2} as $\sigma_x$ is unitary. Interestingly, in view of \eqref{eq:DS2lessNablas}, \eqref{eq:DS2greaterthanL2} gives a spinorial version of the usual Poincaré inequality on scalar fields. 

Finally, \eqref{eq:DS2greaterthanL2} is exactly saying that $\slDs^2\ge1$ as a quadratic form, and \textit{(v)} follows by the min-max principle. 
\end{proof}

\section{Scattering theory}\label{sec:scattering}

We now proceed to show our main result: existence, uniqueness, and asymptotic completeness of scattering states in $\H$ for  \Cref{eq:schrodinger}. To prepare for it, we state the properties of $\slD$ inherited from those of $\slDs$ in section \ref{sec:slDs}, then we introduce the simplified dynamics and the evolution systems.

Since it will be clear from the context, and to avoid cumbersome notations, we keep omitting the $\C^2$ and the $\C^4$ from the notation of the norm of a spinor ($\C^2$-valued) or a bispinor ($\C^4$-valued). For bispinors on $\S^2$, we set 
$$H^1_{\text{sp}}(\s^2; \C^4):=H^1_{\text{sp}}(\s^2; \C^2)\oplus H^1_{\text{sp}}(\s^2; \C^2),$$

and similar to \eqref{eq:spin der 2} and \eqref{eq:lemma 4.11 hijazi},  for  $\psi\in H^1_{\text{sp}}(\s^2; \C^4)$, we have
\begin{align}
	&\nablas_{e_1} \psi = \partial_\theta \psi \quad \text{and} \quad \nablas_{e_2} \psi = \frac{1}{\sin\theta} \partial_\varphi \psi+ i \Gamma_{zz} \frac{\cot\theta}2 \psi; \label{eq:spin S2 der-e-bispin} \\
	&\Vert{\nablas \psi}\Vert_{L^2(\s^2)}^2 = \Vert{\nablas_{e_1} \psi}\Vert_{L^2(\s^2)}^2 + \Vert{\nablas_{e_2} \psi}\Vert_{L^2(\s^2)}^2.\label{eq:lemma 4.11 hijazi-bispin}
\end{align}

As in the previous section, for {bispinors} on $\Sigma$, we can define $\Hsp{k}(\Sigma; \C^4)$, the scale of Sobolev dense subspaces of $\H$, as well as the dense subspace ${\mathcal C}_c^\infty(\Sigma, \C^4)$ of smooth compactly supported elements of $\H$. In particular, we have 
$$H^1_{\text{sp}}(\Sigma; \C^4) = L^2(\R_x; H^1_{\text{sp}}(\s^2; \C^4)) \cap  H^1(\R_x; L^2(\s^2; \C^4))$$ 
equipped with $\norm{.}_{H^1_{\text{sp}}(\Sigma)}$, which  for $\psi \in H^1_{\text{sp}}(\Sigma; \C^4)$, is given by
\begin{equation}\label{eq:h1 norm on sigma}
	\norm{\psi}_{H^1_{\text{sp}}(\Sigma)}^2 = \norm{\psi}_{\H}^2 + \Vert{ \nablas \psi}\Vert_{\H}^2 + \norm{\partial_x\psi}_{\H}^2.
\end{equation}
$H^1_{\text{sp}}(\Sigma; \C^4)$ can also be decomposed in a natural way as
\begin{equation}\label{eq:H1-decomp}
	H^1_{\text{sp}}(\Sigma; \C^4)=H^1_{\text{sp}}(\Sigma; \C^2)\oplus H^1_{\text{sp}}(\Sigma; \C^2).
\end{equation}
\subsection{The full dynamics} 

Our domain of definition of the hamiltonian $H(\tau)$ will be 
\begin{equation}\label{eq:D-domain}
	D:=\Hsp{1}(\Sigma; \C^4).
\end{equation} 
\begin{lemma}\label{lemma:slD estimate}
	Let $H(\tau)$ and $\slD$ be given as in \eqref{eq:hamiltonian sh}-\eqref{eq:dirac op}. Then for $\psi\in D$,
		\begin{align}
		\norm{\slD\psi}_{\H}^2 &\eqsim \norm{\psi}_{\H}^2 + \Vert{ \nablas \psi}\Vert_{\H}^2 \le \norm{\psi}_{\Hsp{1}(\Sigma)}^2, \label{eq:norm-slD}
		\\
		\norm{H(\tau)\psi}_{\H} &\lesssim \norm{\psi}_{\Hsp{1}(\Sigma)},
	\end{align}
	 and both $(H(\tau),D)$ and $(\slD,D)$ are symmetric operators on $\H$. 
%	 Clearly, \eqref{eq:norm-slD} implies that 0 is not in the point spectrum of $(\slD,H^1_{\text{sp}}(\s^2; \C^4))$. 
\end{lemma}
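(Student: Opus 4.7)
The three claims are handled in turn, with the equivalence for $\slD$ doing most of the work: once it is in hand and the regularity of the hamiltonian's coefficient functions is checked, the bound on $H(\tau)\psi$ is a routine assembly, and symmetry reduces to known facts. The only place where anything delicate could occur is the apparent polar singularity of $(\sin\theta)^{-1}$ in $\slD$ and $H_1$, but this is precisely what the spinorial setup of \Cref{sec:slDs} is designed to absorb, so no real obstacle should appear.

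For the equivalence \eqref{eq:norm-slD}, I would exploit the block-diagonal form $\slD = \mathrm{diag}(\slDs,-\slDs)$ from \eqref{eq:dirac op on s2} together with the splitting \eqref{eq:H1-decomp}. Since $\slD$ acts only on the spherical variables, Fubini gives
\[ \norm{\slD\psi}_\H^2 = \int_{\R_x}\norm{\slD\psi(x,\cdot)}_{L^2(\s^2)}^2\diff x . \]
Applying \Cref{lem:propertiesofDs2}(iv) componentwise produces $\norm{\slD\psi(x,\cdot)}_{L^2(\s^2)}^2 \eqsim \norm{\psi(x,\cdot)}_{L^2(\s^2)}^2 + \Vert\nablas\psi(x,\cdot)\Vert^2_{L^2(\s^2)}$, and integrating in $x$ yields the claimed equivalence. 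The bound by $\norm{\psi}_{\Hsp{1}(\Sigma)}^2$ is then immediate from the definition \eqref{eq:h1 norm on sigma}.

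For the bound on $H(\tau)\psi$, I would first check that every multiplicative coefficient appearing in \eqref{eq:hamiltonian sh}--\eqref{eq:dirac op} is bounded on $\mathcal M$. Indeed, $r\in[r_-,r_+]\subset(0,\infty)$ stays away from $0$; $\Delta_\theta$ is strictly positive on $[0,\pi]$ by \ref{H4}, so $a\sin\theta/\sqrt{\Delta_\theta}$ and $\Delta_\theta'/\sqrt{\Delta_\theta}$ are bounded; $\f$ is continuous on $[r_-,r_+]$ with vanishing boundary values by \Cref{lem:asymptotics}, hence bounded; and the entries of $M_0(\tau)$ are $mr$ and $am\cos\theta$. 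The only differential operators appearing are $D_x$, $D_\varphi$ and $\slD$: the $D_x$ contributions are trivially dominated by $\norm{\partial_x\psi}_\H$; the $D_\varphi$ contributions are controlled slicewise by \eqref{eq:phi bound} of \Cref{lemma:sin theta} and then integrated in $x$; and the $\f\sqrt{\Delta_\theta}\slD$ term is controlled by the equivalence \eqref{eq:norm-slD}. Adding the zero-order multiplicative piece $\f M_0$ gives $\norm{H(\tau)\psi}_\H \lesssim \norm{\psi}_{\Hsp{1}(\Sigma)}$.

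Symmetry of $\slD$ on $D$ follows from self-adjointness of $\slDs$ on $L^2(\s^2;\C^2)$ (\Cref{lem:propertiesofDs2}(ii)) via Fubini: for $\psi,\phi\in D$ one has $\langle \slD\psi,\phi\rangle_\H=\int_{\R_x}\langle\slD\psi(x,\cdot),\phi(x,\cdot)\rangle_{L^2(\s^2)}\diff x=\langle\psi,\slD\phi\rangle_\H$. For $H(\tau)$ the cleanest route is to invoke \Cref{symmetrichamil}: by \Cref{lem:Well-posed-andCC}, smooth compactly supported initial data prescribed at a fixed time $\tau_0$ evolve into smooth Dirac fields whose time-$\tau$ values form a dense subspace of $D$, and on this subspace the computation \eqref{eq:symmetric=unitary} yields $\langle H(\tau)\Psi_1,\Psi_2\rangle_\H = \langle\Psi_1,H(\tau)\Psi_2\rangle_\H$; density extends the identity to all of $D$.
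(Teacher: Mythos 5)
Your proof is correct and follows essentially the same route as the paper: reduce \eqref{eq:norm-slD} to the $\slDs$-norm equivalence of \Cref{lem:propertiesofDs2}(iv) via the block-diagonal structure and Fubini, deduce the $H(\tau)$ bound term by term using the boundedness of the coefficients plus \Cref{lemma:sin theta} for the $D_\varphi$ pieces, and invoke \Cref{lem:propertiesofDs2}(ii) and \Cref{symmetrichamil} for the two symmetry claims. (A small observation: the paper cites \eqref{eq:D-phi bound} when assembling the $H(\tau)$ bound, but the estimate actually used for the $D_\varphi$ terms in $H_0$ and $H_1$ is \eqref{eq:phi bound} as you correctly identify — the $\slD$ part is handled wholesale by \eqref{eq:norm-slD}, so no operator of the form $F(\theta)\left(\partial_\theta + \tfrac{\cot\theta}{2}\right)$ needs \eqref{eq:D-phi bound} here.)
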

\begin{proof} 
	Let $\psi = (\phi, \chi)$ in accordance with \eqref{eq:H1-decomp}. Then from \eqref{eq:normequiv-DS2}, we have that
	\begin{align*}
		\norm{\slD\psi}_{\H}^2 & = \int_{\R} \left( \norm{\slD_{\s^2}\phi}_{L^2(\s^2)}^2 + \norm{\slD_{\s^2}\chi}_{L^2(\s^2)}^2 \right) \diff x \eqsim \int_{\R} \left(\norm{\phi}_{\Hsp{1}(\s^2)}^2 + \norm{\chi}_{\Hsp{1}(\s^2)}^2 \right) \diff x  \\
		&  \le  \int_{\R}  \left(\norm{\phi}_{\Hsp{1}(\s^2)}^2 + \norm{\chi}_{\Hsp{1}(\s^2)}^2 + \norm{D_x\psi}_{L^2(\s^2)}^2 \right) \diff x = \norm{\psi}_{\Hsp{1}(\Sigma)}^2.
	\end{align*}
The second inequality follows directly form the first and from \eqref{eq:D-phi bound}, and note that the hidden constant actually does not depend on $\tau$ since $\f$ and $r$ are bounded functions of $\tau$. The fact that $H$ is symmetric was established in \Cref{symmetrichamil}, while for $\slD$ it is a direct consequence of \Cref{lem:propertiesofDs2}, \textit{(ii)}.
\end{proof}	
%\edit{
%\begin{remark*}
%Clearly, \eqref{eq:DS2lessNablas} cannot be an equivalence of norms   that for $\psi \in \Hsp{1}(\Sigma,\SAA)$, the $\slD$ norm $\left(\norm{\slD\psi}_{\H}\right)$ is not equivalent to the spin-Sobolev norm on $\Sigma$.
%\end{remark*} }

We now restate \Cref{lem:Well-posed-andCC} for \Cref{eq:schrodinger} and define its evolution system.

\begin{lemma} \label{HCP}
	For $s\in \R$, the initial value problem 
	\begin{equation}\label{eq:IVP}
		\begin{cases}
			&\partial_\tau \Psi (\tau) = -iH(\tau)\Psi (\tau)\, , \\
			&\Psi \vert_{\tau=s} = \psi \, , \qquad \psi \in \mathcal{H}
		\end{cases}
	\end{equation}
is well-posed, its weak solution $\Psi$ is in $\mathcal{C}(\R_\tau;\H)$, and $\Vert \Psi (\tau) \Vert_{{\mathcal H}} = \Vert \psi \Vert_{{\mathcal H}}$ for all $ \tau\in\R$. 
If $\psi \in D$, then $\Psi\in \mathcal{C}(\R_\tau;D)$ and $\d_\tau \Psi\in \mathcal{C}(\R_\tau;\H)$.

Furthermore, for $(\tau,s)\in\R^2$, the propagator
\begin{equation}
	\begin{array}{cccc}
	\mathcal{U} (\tau,s):&  \H &\longrightarrow & \H \\
	&\psi&\longmapsto&\Psi(\tau),
	\end{array}
\end{equation} 
where $\Psi$ is the solution to \eqref{eq:IVP}, is a strongly continuous unitary evolution system\footnote{See \cite{pazy_semigroups_2012} for the definition of evolution system. See also \cite[Definition~3.1]{hafner_scattering_2021}.} on $\H$. Moreover, $D$ is a stable subspace for $\mathcal{U} (\tau,s)$, and $\mathcal{U} (\tau,s)$ is a strongly continuous evolution system on $(D,\norm{.}_{\Hsp{1}(\Sigma)})$ that satisfies:
\begin{align}
	\frac{\diff}{\diff \tau} \mathcal{U}(\tau,s) \psi &= - i H(\tau) \, \mathcal{U}(\tau,s) \psi\, ,\label{eq:dtauU}\\
	\frac{\diff}{\diff s} \mathcal{U}(\tau,s) \psi &= i \, \mathcal{U}(\tau,s) H(s) \psi\, ,\label{eq:dsU}
\end{align}
for all $\psi\in D$. 
\end{lemma}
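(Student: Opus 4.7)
The plan is to lift the hamiltonian Cauchy problem \eqref{eq:IVP} to the geometric Dirac equation \eqref{eq:dirac} via the spin transformation $\mathcal{A}$ of \eqref{eq:spin tf}, and then invoke \Cref{lem:Well-posed-andCC}. Given $\psi \in \H$, I define the initial bispinor $(\alpha_A, \beta^{A'}) \in \H_s$ to be the one whose component vector in the tetrad $\mathbb{T}$ equals $\mathcal{A}(s)\psi$. By \Cref{lem:Well-posed-andCC} there is a unique weak Dirac solution $(\phi_A, \chi^{A'}) \in \mathcal{C}(\R_\tau; \H_\tau)$ with this initial data and conserved norm. Setting $\Psi(\tau) := \mathcal{A}(\tau)^{-1}\Phi(\tau)$, the derivation leading from \eqref{eq:intermed dirac eq} to \eqref{eq:schrodinger} shows that $\Psi$ weakly satisfies $\partial_\tau \Psi = -iH(\tau) \Psi$ in $\H$. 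The isometry statement in \Cref{ismoetryHandHtau}, combined with \eqref{eq:norm-conservation}, gives $\norm{\Psi(\tau)}_\H = \norm{\psi}_\H$ for all $\tau$; continuity $\Psi \in \mathcal{C}(\R_\tau;\H)$ follows from the continuity of $\Phi$ together with the smooth dependence on $\tau$ of $\mathcal{A}(\tau)$ through $r(\tau)$, which is bounded and bounded away from singular values on any compact $\tau$-interval.

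For $\psi \in D = \Hsp{1}(\Sigma;\C^4)$, the associated initial bispinor carries spatial $\Hsp{1}$-regularity along $\Sigma_s$, and the regularity-transfer clause of \Cref{lem:Well-posed-andCC} yields this same regularity on each $\Sigma_\tau$ with continuity in $\tau$. Pulling back through $\mathcal{A}(\tau)^{-1}$, which acts as a $\tau$-continuous bounded operator on $\Hsp{1}(\Sigma;\C^4)$ thanks to the smooth, non-vanishing profile of $-\Delta_r \rho^2$ on compact $\tau$-intervals, I obtain $\Psi \in \mathcal{C}(\R_\tau; D)$. Reading the identity $\partial_\tau\Psi = -iH(\tau)\Psi$ in $\H$ then yields $\partial_\tau \Psi \in \mathcal{C}(\R_\tau; \H)$: the $\H$-bound $\norm{H(\tau)\psi}_\H \lesssim \norm{\psi}_{\Hsp{1}(\Sigma)}$ of \Cref{lemma:slD estimate} combined with the continuous $\tau$-dependence of the coefficients appearing in \eqref{eq:hamiltonian sh}  provides the required continuity of $\tau \mapsto H(\tau)\Psi(\tau)$ in $\H$.

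The propagator structure follows at once. Uniqueness for \eqref{eq:IVP} gives the evolution identity $\mathcal{U}(\tau,s)\mathcal{U}(s,s') = \mathcal{U}(\tau,s')$, from which (together with norm conservation) the unitarity of $\mathcal{U}(\tau,s)$ on $\H$ is standard. Stability of $D$ under $\mathcal{U}$ and its strong continuity on $(D, \norm{\cdot}_{\Hsp{1}(\Sigma)})$ are exactly what the preceding paragraph provides, and \eqref{eq:dtauU} is just the differential equation itself. Relation \eqref{eq:dsU} is obtained by differentiating $\mathcal{U}(\tau,s)\mathcal{U}(s,s')\psi = \mathcal{U}(\tau,s')\psi$ in $s$ at a fixed $\psi \in D$, $\tau, s'$: using \eqref{eq:dtauU} for the factor on the right and the fact that $\mathcal{U}(s,s')\psi$ remains in $D$, one reads off the claim.

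The main point where care is needed is the $\Hsp{1}$-stability in the second paragraph. The shortest route is to invoke the propagation of Sobolev regularity for the Dirac equation on a globally hyperbolic spacetime (e.g.~via the framework of \cite{andersson_wave_2018}), which requires identifying spatial $\Hsp{1}$-regularity of $\Psi$ with an appropriate $\Hsp{1}$-regularity of the Dirac bispinor on the Cauchy hypersurfaces $\Sigma_\tau$. A more self-contained alternative is to apply Kato's evolution theorem to the family $\{H(\tau)\}_{\tau\in\R}$: one must verify that $H(\tau)$ is self-adjoint on $D$ with $\tau$-independent domain, that $\{H(\tau)\}$ is a stable family, and that $\tau \mapsto H(\tau)$ is continuous in the uniform topology of $\mathcal{B}(D,\H)$. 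All three properties follow from the explicit form \eqref{eq:hamiltonian sh} together with \Cref{lemma:slD estimate} and the smooth dependence of the coefficients on $r(\tau)$.
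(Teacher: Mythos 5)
Your proposal follows essentially the same route as the paper, which offers no separate proof of this lemma but presents it as a restatement of \Cref{lem:Well-posed-andCC} transported to \Cref{eq:schrodinger} through the spin transformation $\mathcal{A}$ and the isometry of \Cref{ismoetryHandHtau}, with the propagator properties and \eqref{eq:dtauU}--\eqref{eq:dsU} then following from uniqueness and norm conservation exactly as you argue. Your closing alternative via Kato's theorem is not needed and would require establishing self-adjointness (not merely the symmetry proved in \Cref{lemma:slD estimate}) and stability of the family $\{H(\tau)\}$, but since it is offered only as an aside, the main argument stands as the paper's own.
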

\subsection{The free dynamics}

Asymptotically as $\tau\rightarrow\pm\infty$, we will compare the dynamics of \Cref{eq:schrodinger} to the ``free'' dynamics of the equation
\begin{equation}\label{eq:freeDirac}
	\d_\tau\Psi(\tau)=-iH_0^\pm\Psi(\tau),
\end{equation} 
where the free Hamiltonians $H_0^\pm$ are
\begin{equation}\label{eq:simp hamiltonian}
	H_0^\pm := H_0\big\vert_{r = r_\mp}= - \Gamma_z
	\left(
	D_x + \frac{a}{r_\mp^2+a^2} D_\varphi - \frac{qQr_\mp}{r_\mp^2+a^2}
	\right),
\end{equation}
which is the formal limit of $H(\tau)$ as $\tau\rightarrow\pm\infty$.
\begin{remark}
	The free equation \eqref{eq:freeDirac} is a transport equation which can be explicitly solved in terms of characteristics. Whether the equation is that at $r=r_+$ or $r=r_-$ has no bearing on the form of the solution. We will thus omit the ``$\pm$" horizon indicator to ease notation. Let 
	$$
	\Omega := \frac{a}{r^2_\mp+a^2} \, , \quad \mathfrak{q} :=\frac{qQr_\mp}{r_\mp^2+a^2} \, .
	$$
	Then, given any $F_i^\pm\in\mathcal{C}^1(\R\times\s^2)$ for $i=1,2$, one can verify that
	\begin{equation}\label{eq:free soln}
	\Psi = \begin{pmatrix}
	\Psi_1^+ \\ \Psi_1^- \\ \Psi_2^- \\ \Psi_2^+
	\end{pmatrix} , \quad \Psi^\pm_i = F_i^\pm(x\pm\tau, \theta, \varphi\pm\Omega\tau) e^{\mp i\mathfrak{q}\tau} \, ,
	\end{equation}
	solves \eqref{eq:freeDirac}. Thus, given any initial data $
	F = {}^t\begin{pmatrix}
	F_1^+ & F_1^- & F_2^- & F_2^+
	\end{pmatrix}
	$ at $\tau=0$, the solution will be given by \eqref{eq:free soln}, with
	\begin{equation}
		F_i^\pm(x,\theta,\varphi) = \Psi^\pm_i(\tau=0, x, \theta, \varphi) \, .
	\end{equation}
	Therefore, the free unitary dynamics $\mathcal{U}_0 := e^{-i\tau H^\pm_0}$ (omitting, again, the ``$\pm$" horizon indicator to ease notation) can then be written as compositions of translations in $x$ and $\varphi$. Letting $T_x(\tau) G(x) := G(x+\tau)$, $T_\varphi(\tau) G(\varphi) := G(\varphi + \Omega \tau)$, we have
	\begin{equation}
		\mathcal{U}_0(\tau) = \mathrm{diag}(T(\tau), T(-\tau), T(-\tau), T(\tau)) \, , \quad T(\tau):= e^{-i\mathfrak{q}\tau}  T_x(\tau) T_\varphi(\tau) \, ,
	\end{equation}
	which is manifestly unitary on $\H$.
\end{remark}

\begin{remark}\label{rem:differentfreedynam}
	In spherically symmetric black holes, the charge term in the Dirac hamiltonian of the full dynamic has two different limits in the future and the past. In that case, this is the only reason for choosing two different comparison dynamics, one for each horizon. Here, even when the charge is zero, the rotation of the black hole also induces a non-vanishing term with different limits, manifested as $a(r_\pm+a^2)^{-1} D_\varphi$. 
	
	Interestingly, the free hamiltonians used to construct the scattering theory for Dirac fields in the exterior of a Kerr black hole in \cite[Equation~(2.57)]{hafner_scattering_2004} and in \cite[Section~IV.5.3]{daude_sur_2004}, are very similar in form to $H_0^\pm$. The difference is that there, the $\Gamma_z$ hits the radial operator only, while in our case it hits all the terms. Of course, this is because the radial and time variables exchange roles when going from the exterior to the interior of the black hole.     
\end{remark}

\begin{lemma}\label{lem:freedynamics}
	$H^\pm_0$ are self-adjoint on $\mathcal{H}$ with domain 
	$$D_0:= H^1\left(\R_x\times[0,2\pi]_\varphi;L^2((0,\pi)_\theta,\sin\theta\diff\theta;\C^4)\right)$$
%	$$D_0:=\big\{\psi\in H^1(\R_x\times[0,2\pi]_\varphi;L^2((0,\pi)_\theta,\sin\theta\diff\theta;\C^4));\psi(x,0)=\psi(x,2\pi), \forall x\in\R \big\}$$
	and their propagators $e^{i \tau H_0^\pm}$, with $\tau\in\R$, are strongly continuous one-parameter families of unitary operators on $\H$ which commute with $\slD^2$ 
	\begin{equation}\label{eq:slD commutes}
		\comm{\slD^2}{e^{i \tau H_0^\pm}} = 0.
	\end{equation} 
\end{lemma}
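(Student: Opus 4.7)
The plan is to deduce everything from the explicit formula for the propagator already recorded in the remark preceding the lemma. Strong continuity and unitarity of $\mathcal{U}_0(\tau) = \mathrm{diag}(T(\tau), T(-\tau), T(-\tau), T(\tau))$, with $T(\tau) = e^{-i\mathfrak{q}\tau} T_x(\tau) T_\varphi(\tau)$, are immediate: each factor (translation in $x \in \R$, periodic translation in $\varphi \in [0,2\pi]$, multiplication by a unimodular phase) is a strongly continuous unitary on the appropriate $L^2$ space, and these combine in an obvious way on $\H$. By Stone's theorem, $\{\mathcal{U}_0(\tau)\}_{\tau\in\R}$ has a self-adjoint generator, which differentiation on $\mathcal{C}_c^\infty(\Sigma;\C^4)$ identifies with $-iH_0^\pm$; inspection of the set of $\psi \in \H$ for which $\tau\mapsto \mathcal{U}_0(\tau)\psi$ is strongly differentiable at $\tau=0$ gives $D_0$ exactly as stated. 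Alternatively, a Fourier transform in $x$ paired with a Fourier series in $\varphi$ turns $H_0^\pm$ into multiplication by the real-valued matrix $-\Gamma_z\bigl(\xi + \frac{a}{r_\mp^2+a^2}n - \frac{qQr_\mp}{r_\mp^2+a^2}\bigr)$; since $\Gamma_z$ is Hermitian and the scalar factor is real, self-adjointness on $D_0$ follows at once.

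Next, the plan for the commutation $[\slD^2, H_0^\pm]=0$ is to verify it first at the operator level on the core $\mathcal{C}_c^\infty(\Sigma;\C^4)$, where no domain issues arise. Writing $H_0^\pm = -\Gamma_z(D_x + \Omega D_\varphi - \mathfrak{q})$ with $\Omega, \mathfrak{q}$ the constants defined in the remark, three observations suffice. Using expression \eqref{eq:slD^2}, no coefficient in $\slD^2$ depends on $x$ or $\varphi$, so $[D_x, \slD^2]=[D_\varphi,\slD^2]=0$. For the matrix piece, a direct computation from \eqref{eq:Gamma matrices} gives $\Gamma_z\Gamma_x = -\Gamma_x\Gamma_z$, $\Gamma_z\Gamma_y = -\Gamma_y\Gamma_z$, and $\Gamma_z\Gamma_{zz} = \Gamma_{zz}\Gamma_z$, from which $\Gamma_z$ anti-commutes with $\slD$ (see \eqref{eq:dirac op}) and hence commutes with $\slD^2$; the remaining scalar terms in \eqref{eq:slD^2} commute with $\Gamma_z$ trivially. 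Combining, $[\slD^2, H_0^\pm]\psi = 0$ for all $\psi\in\mathcal{C}_c^\infty(\Sigma;\C^4)$.

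Finally, passing from the operator-level commutation to the exponential is cleanest via the explicit form of $\mathcal{U}_0(\tau)$: $T_x(\pm\tau)$ and $T_\varphi(\pm\tau)$ commute with $\slD^2$ since its coefficients depend only on $\theta$, the phase is scalar, and the sign pattern $\mathrm{diag}(+,-,-,+)$ produced by the $\Gamma_z$ structure is compatible with the block-diagonal form $\slD^2 = \mathrm{diag}(\slDs^2, \slDs^2)$ read off from \eqref{eq:dirac op on s2}. Checking $\slD^2 \mathcal{U}_0(\tau)\psi = \mathcal{U}_0(\tau)\slD^2 \psi$ on $\mathcal{C}_c^\infty(\Sigma;\C^4)$ is then a direct substitution, and the identity extends by density using the closedness of $\slD^2$ and the uniform boundedness of $\mathcal{U}_0(\tau)$. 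No serious obstacle is anticipated; the main care lies in the matrix bookkeeping for step (iii) above and in matching the domain coming from Stone's theorem with the explicit $D_0$ in the statement.
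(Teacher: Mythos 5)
Your argument is correct in substance, but it takes a genuinely different, self-contained route from the paper's. The paper's own proof is essentially two lines: for self-adjointness, unitarity and strong continuity it simply refers to \cite{hafner_scattering_2004,daude_sur_2004}, where operators of essentially this form were analysed in the Kerr exterior (cf.\ \Cref{rem:differentfreedynam}), and for \eqref{eq:slD commutes} it only remarks that $\slD^2$ commutes with $H_0^\pm$, as read off from \eqref{eq:slD^2} together with $\comm{\Gamma_{zz}}{\Gamma_z}=0$; the passage from commutation with the generator to commutation with the propagator is left implicit. You instead exploit the explicit propagator $\mathcal{U}_0(\tau)=\diag(T(\tau),T(-\tau),T(-\tau),T(\tau))$ from the preceding remark: unitarity and strong continuity come directly from the translation/phase structure, the generator is identified via Stone's theorem (or as a Hermitian Fourier multiplier), the commutation $[\slD^2,H_0^\pm]=0$ is checked on the core $\mathcal{C}^\infty_c(\Sigma;\C^4)$ via the anti-commutation of $\Gamma_z$ with $\Gamma_x,\Gamma_y$ (equivalent to the paper's $\comm{\Gamma_{zz}}{\Gamma_z}=0$, given \eqref{eq:dirac op} and \eqref{eq:dirac op on s2}), and then extended to $e^{i\tau H_0^\pm}$ by density and closedness of $\slD^2$. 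Your route buys completeness -- in particular the core-plus-closedness step that the paper glosses over -- at the cost of more bookkeeping; the paper's route buys brevity by outsourcing the functional-analytic content to the existing literature on the same transport-type Hamiltonians.

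One point you should state more carefully is the claim that ``inspection'' of the differentiability set of $\tau\mapsto\mathcal{U}_0(\tau)\psi$ gives $D_0$ ``exactly as stated''. Since $H_0^\pm=-\Gamma_z\left(D_x+\Omega D_\varphi-\mathfrak{q}\right)$ with $\Gamma_z$ unitary, the domain produced by Stone's theorem (equivalently, by the multiplier picture) is the maximal anisotropic domain $\left\{\psi\in\H:\ (D_x+\Omega D_\varphi)\psi\in\H\right\}$, which strictly contains the stated $H^1$-in-$(x,\varphi)$ space: in Fourier variables one only controls $(\xi+\Omega n)\hat\psi$, not $\xi\hat\psi$ and $n\hat\psi$ separately, so functions whose Fourier support concentrates along $\xi+\Omega n\approx 0$ with large $|n|$ lie in the generator's domain but not in $D_0$. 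On $D_0$ itself the operator is therefore essentially self-adjoint rather than self-adjoint (it is not closed there). This imprecision is inherited from the lemma's statement, and the paper's citation-style proof does not address it either, so it is not a defect of your strategy; but if you actually carry out the inspection you announce, you should either replace $D_0$ by the maximal domain or record that $D_0$ is a core on which the closure is taken.
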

\begin{proof}
%	Clearly, $(H^\pm_0,D_0)$ are symmetric on $\H$, which implies that there propagators must be unitary by \Cref{symmetrichamil}. 
	As mentioned in \Cref{rem:differentfreedynam}, such operators were studied in detail in \cite{hafner_scattering_2004,daude_sur_2004}, and therefore we refer to the mentioned works. 
%	To see that they are self-adjoint, it is enough to show that $(H^\pm_0+i)(D_0)=(H^\pm_0-i)(D_0)=\H$.\edit{TBC}
	Finally, \eqref{eq:slD commutes} is true simply since $\slD^2$ commutes with $H_0^\pm$, which can be seen from \eqref{eq:slD^2} and the fact that $\comm{\Gamma_{zz}}{\Gamma_z} = 0 $.
\end{proof}

\subsection{Main results}\label{subsec:main-result}

We are now ready to state and prove our main result. First, we have the following Proposition which is interesting by itself, but more importantly, is needed to prove the main theorem of scattering, \Cref{thm:scattering}. Section \ref{sec:B-operator} is devoted for the proof of this Proposition, which constitutes the bulk of the Theorem's proof.
\begin{proposition}\label{prop:slD u bounded}
	Let $\psi \in D=\Hsp{1}(\Sigma; \C^4)$. Then $\forall (\tau, s) \in \R^2$, we have
	\begin{equation*}
		\norm{\slD \, \mathcal{U}(\tau, s) \psi}_{\H} \lesssim \norm{\psi}_{\Hsp{1}(\Sigma)}.
	\end{equation*}
\end{proposition}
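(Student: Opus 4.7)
The strategy is to introduce a self-adjoint, non-negative comparison operator $B$ on $\H$ whose associated quadratic form is equivalent to the $\Hsp{1}(\Sigma)$-norm squared, and then to run a Grönwall-type estimate on $\tau \mapsto \langle B\,\mathcal{U}(\tau,s)\psi, \mathcal{U}(\tau,s)\psi\rangle_\H$. The guiding principle comes from \Cref{lem:propertiesofDs2}(\textit{iv}): $\slDs$ alone encodes the full spinorial $\Hsp{1}(\s^2)$-structure, so a natural candidate is $B$ built from $\slD^2$ plus $D_x^2$ (and possibly a constant), yielding
\begin{equation*}
\langle B\psi,\psi\rangle_\H \eqsim \|\slD\psi\|_\H^2 + \|\partial_x\psi\|_\H^2 + \|\psi\|_\H^2 \eqsim \|\psi\|_{\Hsp{1}(\Sigma)}^2,
\end{equation*}
where the second equivalence uses \Cref{lemma:slD estimate} and \eqref{eq:h1 norm on sigma}. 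In particular, a uniform-in-$\tau$ bound on $\langle B\Psi(\tau), \Psi(\tau)\rangle_\H$ controls $\|\slD \Psi(\tau)\|_\H^2$.

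For $\psi$ in a dense subspace of $D$ (say $\mathcal{C}_c^\infty(\Sigma;\C^4)$), set $\Psi(\tau) := \mathcal{U}(\tau,s)\psi$. Using \Cref{HCP} and the symmetry of $H(\tau)$, I would compute
\begin{equation*}
\frac{d}{d\tau} \langle B\Psi, \Psi\rangle_\H = i\,\langle [H(\tau), B]\Psi, \Psi\rangle_\H.
\end{equation*}
The decomposition \eqref{eq:hamiltonian sh} splits the commutator into $[H_0(\tau), B] + \f(r(\tau))\,[\sqrt{\Delta_\theta}\slD + H_1 + M_0(\tau), B]$. Since $H_0(\tau)$ has $(\theta, \varphi)$-independent coefficients and commutes with $\slD^2$ in the spirit of \eqref{eq:slD commutes}, $[H_0(\tau), B]$ contributes only lower-order error terms readily controlled by $\langle B\Psi, \Psi\rangle_\H$. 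The remaining commutator comes with the prefactor $\f$, which by \eqref{eq:f tilde asymptotics} decays exponentially at both ends of $\R_\tau$; this decay is what makes the full bound integrable in time.

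The main obstacle, and the technical heart of the argument, is estimating $[\sqrt{\Delta_\theta}\slD + H_1 + M_0(\tau), B]$. The difficulty stems from the fact that $\sqrt{\Delta_\theta}$ and the coefficients in $H_1$ (see \eqref{eq:Ha op}) depend on $\theta$, so they do not commute with $\slD^2$; the commutator has to be expanded using the explicit form \eqref{eq:slD^2} of $\slD^2$, and each resulting term has to be bounded by a constant multiple of $\langle B\Psi,\Psi\rangle_\H$. Singular-looking $(\sin\theta)^{-1}$-factors appear along the way; \Cref{lemma:sin theta} is precisely tailored to absorb such contributions into the spinorial $\Hsp{1}(\s^2)$-norm, with constants independent of $\tau$. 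Together with the boundedness of $r$ and of the $\tau$-dependent coefficients of $H_1$ and $M_0$, these bounds should yield
\begin{equation*}
\bigl|\langle [H(\tau), B]\Psi, \Psi\rangle_\H\bigr| \le c(\tau)\,\langle B\Psi, \Psi\rangle_\H
\end{equation*}
with $c \in L^1(\R)$, the $L^1$-control being inherited directly from the exponential decay of $\f$.

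Grönwall's lemma then delivers
\begin{equation*}
\langle B\Psi(\tau), \Psi(\tau)\rangle_\H \le \exp\!\left(\int_s^\tau c(\sigma)\,d\sigma\right)\langle B\psi, \psi\rangle_\H \lesssim \|\psi\|_{\Hsp{1}(\Sigma)}^2,
\end{equation*}
uniformly in $(\tau, s) \in \R^2$. Combining with the equivalence $\|\slD\Psi(\tau)\|_\H^2 \lesssim \langle B\Psi(\tau), \Psi(\tau)\rangle_\H$ gives the claimed bound on smooth compactly supported data, and a density argument using the $\Hsp{1}$-continuity of $\mathcal{U}(\tau, s)$ from \Cref{HCP} extends it to all $\psi \in D$. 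Conceptually the framework is a textbook regularly-generated-dynamics Grönwall argument; the real work lies in the algebra of the commutator and the careful exploitation of \Cref{lemma:sin theta} to turn the angular-coefficient singularities into $\Hsp{1}(\s^2)$-controlled quantities.
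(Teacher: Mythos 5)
Your overall strategy is the same as the paper's: the comparison operator is $B=D_x^2+\slD^2$, one runs a Grönwall estimate on the $B$-form of $\mathcal{U}(\tau,s)\psi$, the commutator $\comm{H(\tau)}{B}$ is estimated term by term using \Cref{lemma:sin theta}-type bounds, and integrability in $\tau$ comes from the exponential decay of $\f$. However, as written there is one step that would fail: your treatment of $\comm{H_0(\tau)}{B}$. You say this commutator "contributes only lower-order error terms readily controlled by $\inner{B\Psi}{\Psi}_\H$", and then claim the Grönwall coefficient $c(\tau)$ is in $L^1(\R)$ because of the decay of $\f$. These two statements are incompatible: $H_0$ is precisely the part of $H(\tau)$ that carries \emph{no} factor of $\f$, so if $\comm{H_0}{B}$ really produced nonzero terms merely bounded by the $B$-form, the resulting $c(\tau)$ would only be bounded, not integrable, and Grönwall would give a bound growing like $e^{C\abs{\tau-s}}$ — useless for a statement uniform in $(\tau,s)\in\R^2$, and fatal when fed back into Cook's method where this bound must not compete with $e^{\kappa_\mp\tau}$. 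What saves the argument, and what you must actually verify, is that $\comm{H_0(\tau)}{B}$ vanishes identically: $\Gamma_z$, $D_x$ and $D_\varphi$ all commute with $\slD^2$ (and with $D_x^2$), so $\comm{H_0}{\slD^2}=0$ exactly. The paper singles this out (\Cref{eq:H_0B=0}) as the reason the choice of $B$ works at all; your hedged phrasing skips the one cancellation the whole scheme hinges on.

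A secondary, repairable point: differentiating $\tau\mapsto\inner{B\,\mathcal{U}(\tau,s)\psi}{\mathcal{U}(\tau,s)\psi}_\H$ needs justification, since \Cref{HCP} only gives that $\mathcal{U}(\tau,s)$ preserves $D=\Hsp{1}(\Sigma;\C^4)$, not $D(B)=\Hsp{2}(\Sigma;\C^4)$. The paper avoids this by working with the regularised quantity built from $(1+\varepsilon B)^{-1}$, sandwiching the commutator as $\Bm\comm{H(\tau)}{B}\Bm$ (an $L^1$-in-$\tau$ family of bounded operators), and letting $\varepsilon\to0$; this also lets the estimate hold directly for all $\psi\in D$ without a density step. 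Your route via $\psi\in\mathcal{C}^\infty_c$ plus propagation of regularity and closedness of $\slD$ can be made to work, but you should state explicitly why $B\Psi(\tau)$ exists and is differentiable in $\tau$ for such data, or adopt the regularisation. The commutator algebra you outline (expanding against \eqref{eq:slD^2}, absorbing the $\theta$-dependent coefficients of $\sqrt{\Delta_\theta}\slD$, $H_1$ and $M_0$ via \Cref{lemma:sin theta}) is indeed the technical bulk and matches the paper's computation.
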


\begin{theorem} \label{thm:scattering}
	Consider the propagators $\mathcal{U}(\tau,s)$ and $e^{-i\tau H_0^\pm}$ defined as in \Cref{HCP}. Then the following strong limits exist in $\mathcal{H}$:
	\begin{eqnarray}
	W^\pm &:=& \mathrm{s-}\lim\limits_{\tau \rightarrow \pm \infty}\mathcal{U}(0,\tau)e^{-i\tau H_0^\pm},\label{Waveopslim}\\
	\Omega^\pm &:=& \mathrm{s-}\lim\limits_{\tau \rightarrow \pm \infty}e^{i \tau H_0^\pm} \mathcal{U}(\tau,0)\, .	\label{InvWaveopslim}
	\end{eqnarray}
	The operators $W^\pm$ are respectively called the future and past \underline{direct wave operators}, while the operators $\Omega^\pm$ are respectively called the future and past \underline{inverse wave operators}. These operators are unitary on $\mathcal{H}$, and inverse of each other:
	\begin{equation}
	W^\pm\Omega^\pm=\Omega^\pm W^\pm=\mathrm{Id}_{\mathcal{H}} \, .
	\end{equation}

The scattering map for charged and massive Dirac fields in the interior of a sub-extremal KN(A)dS black hole is the unitary operator on $\H$ defined by 
\begin{equation}\label{eq:scatteringop}
	S=\Omega^+W^-.
\end{equation}
\end{theorem}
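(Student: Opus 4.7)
The proof follows Cook's method, applied separately to each of the four strong limits, after which the inverse relations and unitarity follow from the definitions together with the unitarity of the propagators. For $\psi \in D = \Hsp{1}(\Sigma; \C^4)$, using \eqref{eq:dtauU}--\eqref{eq:dsU} and the fact that $H_0^\pm$ commutes with its exponential, one computes
\begin{align*}
\frac{d}{d\tau}\bigl[\mathcal{U}(0,\tau)e^{-i\tau H_0^\pm}\psi\bigr] &= i\,\mathcal{U}(0,\tau)\bigl[H(\tau)-H_0^\pm\bigr]e^{-i\tau H_0^\pm}\psi, \\
\frac{d}{d\tau}\bigl[e^{i\tau H_0^\pm}\mathcal{U}(\tau,0)\psi\bigr] &= -i\,e^{i\tau H_0^\pm}\bigl[H(\tau)-H_0^\pm\bigr]\mathcal{U}(\tau,0)\psi.
\end{align*}
Since the outer propagators are isometries, Cook's criterion reduces existence of the strong limits on the dense subspace $D$ to integrability in $\tau$ of the corresponding inner norms near $\pm\infty$; density and uniform boundedness of $\mathcal{U}(0,\tau)e^{-i\tau H_0^\pm}$ and $e^{i\tau H_0^\pm}\mathcal{U}(\tau,0)$ then extend the limits to all of $\H$.

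Decompose $H(\tau) - H_0^\pm = [H_0(\tau) - H_0^\pm] + \f(r(\tau))\bigl(\sqrt{\Delta_\theta}\,\slD + H_1 + M_0(\tau)\bigr)$. The coefficients of $H_0(\tau) - H_0^\pm$ are smooth functions of $r$ vanishing at $r = r_\mp$, hence $O(|r-r_\mp|) = O(e^{2\kappa_\mp \tau})$ as $\tau \to \pm\infty$ by \Cref{lem:asymptotics}, while $\f(r(\tau)) \sim e^{\kappa_\mp \tau}$ by \eqref{eq:f tilde asymptotics}. Inspection of \eqref{eq:hamiltonian sh}--\eqref{eq:dirac op} shows that $D_x$ and $D_\varphi$ commute with $H(\tau)$ (the $x$- and $\varphi$-Killing symmetries) and, trivially, with $H_0^\pm$; hence they commute with $\mathcal{U}(\tau, 0)$ and $e^{-i\tau H_0^\pm}$ respectively. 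Consequently, the contributions from $[H_0(\tau) - H_0^\pm]$ and from $\f H_1$ are bounded by $\|\psi\|_{\Hsp{1}(\Sigma)}$ times an exponentially decaying scalar, and $\f M_0$ is a bounded operator-valued function of exponential decay. The only delicate piece is the angular contribution $\f\sqrt{\Delta_\theta}\,\slD$.

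For the direct operators $W^\pm$ this piece is harmless: by \eqref{eq:slD commutes}, $\slD^2$ commutes with $e^{-i\tau H_0^\pm}$, so using the self-adjointness of $\slDs$ from \Cref{lem:propertiesofDs2},
$$\|\slD\, e^{-i\tau H_0^\pm}\psi\|_\H^2 = \langle \slD^2 e^{-i\tau H_0^\pm}\psi,\, e^{-i\tau H_0^\pm}\psi\rangle_\H = \|\slD\psi\|_\H^2,$$
and the Cook integrand is dominated by $\f(r(\tau))\sqrt{\Delta_\theta}\,\|\slD\psi\|_\H$, which is exponentially integrable. For the inverse operators $\Omega^\pm$, no such commutation with $\mathcal{U}(\tau, 0)$ is available, and one needs a uniform-in-$\tau$ bound on $\|\slD\,\mathcal{U}(\tau, 0)\psi\|_\H$. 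This is the main obstacle of the proof and is precisely the content of \Cref{prop:slD u bounded}, which yields $\|\slD\,\mathcal{U}(\tau, 0)\psi\|_\H \lesssim \|\psi\|_{\Hsp{1}(\Sigma)}$; combined with the exponential decay of $\f$, Cook's integral again converges.

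Finally, once all four limits exist, the identities $W^\pm\Omega^\pm = \Omega^\pm W^\pm = \mathrm{Id}$ follow immediately from unitarity: since $\mathcal{U}(0, \tau)e^{-i\tau H_0^\pm} \cdot e^{i\tau H_0^\pm}\mathcal{U}(\tau, 0) = \mathrm{Id}$, applying the unitary $\mathcal{U}(0, \tau)e^{-i\tau H_0^\pm}$ to $[\Omega^\pm \psi - e^{i\tau H_0^\pm}\mathcal{U}(\tau, 0)\psi]$ gives
$$\|\mathcal{U}(0, \tau)e^{-i\tau H_0^\pm}\Omega^\pm \psi - \psi\|_\H = \|\Omega^\pm \psi - e^{i\tau H_0^\pm}\mathcal{U}(\tau, 0)\psi\|_\H \xrightarrow[\tau \to \pm\infty]{} 0,$$
so $W^\pm \Omega^\pm \psi = \psi$; the symmetric argument with the roles of $W^\pm$ and $\Omega^\pm$ interchanged yields $\Omega^\pm W^\pm \psi = \psi$. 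Since $W^\pm$ and $\Omega^\pm$ are isometries (as strong limits of unitaries) and are two-sided inverses of each other, they are unitary, and $S = \Omega^+ W^-$ is unitary as a composition of unitaries.
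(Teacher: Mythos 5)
Your proof is correct and follows essentially the same route as the paper: Cook's method with the decomposition $H(\tau)-H_0^\pm = [H_0(\tau)-H_0^\pm] + \f(\sqrt{\Delta_\theta}\slD + H_1 + M_0)$, exponential decay of $\f$ and $|r-r_\mp|$, commutation \eqref{eq:slD commutes} to dispose of the $\slD$ term for $W^\pm$, and \Cref{prop:slD u bounded} for the $\slD$ term in $\Omega^\pm$. The only cosmetic differences are that you carry out the density argument on $D=\Hsp{1}(\Sigma;\C^4)$ rather than $\mathcal{C}^\infty_c(\Sigma;\C^4)$ (both work, since all estimates close in $\Hsp{1}$), and you spell out the two-sided inverse identities $W^\pm\Omega^\pm=\Omega^\pm W^\pm=\mathrm{Id}$ explicitly where the paper just invokes invertibility.
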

\begin{proof}
	 By density, we only need to show the existence of the limits \eqref{Waveopslim} and \eqref{InvWaveopslim} for $\psi\in\mathcal{C}^\infty_c (\Sigma \, ;\C^4 )$. 
%	 We treat $W^+$ and $\Omega^+$ only, the remaining two operators can be treated identically. Let $\psi \in \mathcal{C}^\infty_c (\Sigma \, ;~\C^4)$. 
	 We will use Cook's method. For $W^\pm$, we need to show that
	\begin{equation}\label{eq:invertibleop}
		\partial_\tau\left(\mathcal{U}(0,\tau)e^{-i\tau H_0^\pm}\psi\right)\in L^1\left(\R^\pm_\tau; \H \right).
	\end{equation}
	From \Cref{eq:dsU}, we have
	\begin{equation*}
	\partial_\tau\left(\mathcal{U}(0,\tau)e^{-i\tau H_0^\pm}\psi\right)  = \mathcal{U}(0,\tau)\left( i H(\tau) - i H_0^\pm \right) e^{-i \tau H_0^\pm}\psi,
	\end{equation*}
and since $\mathcal{U}(0,\tau)$ is unitary, we have
\begin{align*}
	%	\left\Vert \partial_\tau\left(\mathcal{U}(0,\tau)e^{-i\tau H_0^\pm}\psi\right) \right\Vert_{\mathcal{H}}
	%	& 
	%	=
	\left\Vert \, \mathcal{U}(0,\tau)\left(H(\tau)-H_0^\pm \right) e^{-i\tau H_0^\pm} \psi \right\Vert_{\mathcal{H}}
	=\left\Vert \left(H(\tau)-H_0^\pm \right) e^{-i\tau H_0^\pm} \psi \right\Vert_{\mathcal{H}} \, .
\end{align*}
	The difference is explicitly given by 
	\begin{align*}
	H(\tau)&-H_0^\pm =
	-\Gamma_z \left(\left(\frac{1}{r^2+a^2} - \frac{1}{r_\mp^2+a^2}\right)aD_\varphi - \left(\frac{r}{r^2+a^2} - \frac{r_\mp}{r_\mp^2+a^2} \right)qQ\right) 
	\\
	& + \f(r) \Bigg( {\sqrt{\Delta_\theta}} \slD  + M_0 + \Gamma_x \frac{i \Delta_\theta'}{4\sqrt{\Delta_\theta}} + \Gamma_y \frac{a\sin\theta}{\sqrt{\Delta_\theta}} \left(\lambda D_x + \frac{\Lambda a}{3} D_\varphi \right) \Bigg).
	\end{align*}
	%
%	consists, according to the analysis in \Cref{subsec:asymptotics}, only of terms whose factors exponentially vanish as $\tau \rightarrow +\infty$. Applying the properties of $\mathcal U (0,\tau)$ stated in \Cref{Propagator}, namely \textit{(1)} and \textit{(4')}, we have
Next, we recall that $e^{-i\tau H_0^\pm}$ is unitary, and we note that since $D_x$ and $D_\phi$ commute with $H_0^\pm$, they also commute with $e^{-i\tau H_0^\pm}$.  Using \eqref{eq:f asymptotics} and \eqref{eq:f tilde asymptotics}, and the fact that $M$ and $\Delta'_\theta$ are bounded, while $\Delta_\theta\eqsim1$, we can estimate each term in the difference:
	\begin{align*}
	& \left\Vert \left(\frac{1}{r^2+a^2} - \frac{1}{r_\mp^2+a^2}\right)a D_\varphi e^{-i\tau H_0^\pm} \psi \right\Vert_\mathcal{H} \lesssim \abs{r-r_\mp} \Big\Vert e^{-i\tau H_0^\pm} D_\varphi  \psi \Big\Vert_\mathcal{H} \sim e^{2\kappa_\mp \tau} \Big\Vert \partial_\varphi \psi \Big\Vert_\mathcal{H} \, ,
	\\
	& \left\Vert iqQ \left(\frac{r}{r^2+a^2} - \frac{r_\mp}{r_\mp^2+a^2}\right) e^{-i\tau H_0^\pm} \psi \right\Vert_\mathcal{H} \lesssim   \abs{r-r_\mp} \Big\Vert e^{-i\tau H_0^\pm} \psi \Big\Vert_\mathcal{H} \sim e^{2\kappa_\mp \tau} \Big\Vert \psi \Big\Vert_\mathcal{H} \, ,
	\\
	& \left\Vert \f M_0 e^{-i\tau H_0^\pm} \psi \right\Vert_\mathcal{H} \lesssim \f \, \Big\Vert e^{-i\tau H_0^\pm} \psi \Big\Vert_\mathcal{H} \sim e^{\kappa_\mp \tau} \Big\Vert \psi \Big\Vert_\mathcal{H} \, ,
	\\
	&
	\left\Vert \f\,\Gamma_x \frac{i \Delta_\theta'}{4\sqrt{\Delta_\theta}} e^{-i\tau H_0^\pm} \psi\right\Vert_\mathcal{H} \lesssim \f \, \Big\Vert e^{-i\tau H_0^\pm} \psi \Big\Vert_\mathcal{H} \sim e^{\kappa_\mp \tau} \Big\Vert \psi \Big\Vert_\mathcal{H} \, , 
	\\
	& \left\Vert \f \, \Gamma_y \frac{a\lambda\sin\theta}{\sqrt{\Delta_\theta}}  D_x e^{-i\tau H_0^\pm} \psi \right\Vert_\mathcal{H} \lesssim \f \, \Big\Vert e^{-i\tau H_0^\pm} D_x \psi \Big\Vert_\mathcal{H} \sim e^{\kappa_\mp \tau} \Big\Vert D_x  \psi \Big\Vert_\mathcal{H} \, ,
	\\
	& \left\Vert \f \, \Gamma_y \frac{\Lambda a^2\sin\theta}{3\sqrt{\Delta_\theta}} D_\varphi e^{-i\tau H_0^\pm} \psi \right\Vert_\mathcal{H} \lesssim \f \, \Big\Vert e^{-i\tau H_0^\pm} D_\varphi \psi \Big\Vert_\mathcal{H} \sim e^{\kappa_\mp \tau} \Big\Vert D_\varphi  \psi \Big\Vert_\mathcal{H} \, ,
	\end{align*}
	as $\tau \to \pm \infty$, and using \eqref{eq:slD commutes}
	\begin{align*}
	\left\Vert \f \, {\sqrt{\Delta_\theta}} \slD e^{-i\tau H_0^\pm} \psi  \right\Vert_\mathcal{H} &\lesssim \f \, \abs{{\inner{\slD e^{-i\tau H_0^\pm} \psi}{\slD e^{-i\tau H_0^\pm} \psi}}_\H}^{\frac12} \\ &=\f \abs{{\inner{\slD^2 e^{-i\tau H_0^\pm}\psi}{e^{-i\tau H_0^\pm}\psi}}_\H}^{\frac12} \\
	&=\f \norm{\slD\psi}_\mathcal{H} 
	\sim e^{\kappa_\mp \tau} \norm{\slD\psi}_\mathcal{H} \, .
	\end{align*}
	Since $\psi  \in \mathcal{C}^\infty_c (\Sigma \, ; \C^4)$, all the norms on the right hand sides are finite. Therefore, all of the above terms decay exponentially to zero as $\tau \rightarrow \pm\infty$, giving integrability as required. 
	 
	For $\Omega^\pm$, similarly, we need to establish that
	\[ \partial_\tau\left(e^{i\tau H_0^\pm}\mathcal{U}(\tau, 0)\psi\right)\in L^1\left(\R^\pm_\tau; \H \right) \, .\]
	As previously, we have
	\begin{equation*}
	\partial_\tau\left(e^{i\tau H_0^\pm}\mathcal{U}(\tau, 0)\psi\right)  = -e^{i \tau H_0^\pm}\left( i H(\tau) - i H_0^\pm \right)\mathcal{U}(\tau, 0) \psi.
	\end{equation*}
	Note that every term but the $\slD$ term of the difference of the Hamiltonians is estimated identically as for $W^\pm$. This is the purpose of \Cref{prop:slD u bounded}, from which we get
	\begin{align}\label{eq:DU-decay-estimate}
	\left\Vert e^{i\tau H_0^\pm} \f \, {\sqrt{\Delta_\theta}} \slD \, \mathcal{U}(\tau, 0) \psi  \right\Vert_\mathcal{H} \lesssim \f \, \Big\Vert \slD \, \mathcal{U}(\tau, 0) \psi \Big\Vert_\mathcal{H} \lesssim \f \Big\Vert  \psi \Big\Vert_{\Hsp{1}(\Sigma)} \sim e^{\kappa_\mp \tau} \Big\Vert  \psi \Big\Vert_{\Hsp{1}(\Sigma)} \, .
	\end{align}

We have shown that the operators $W^\pm$ and $\Omega^\pm$ are well-defined as \eqref{Waveopslim} and \eqref{InvWaveopslim} on $\mathcal{C}^\infty_c (\Sigma \, ;\C^4 )$, and clearly they are inverse of each other. Since they are the strong limits of unitary operators they are partial isometries, and hence they extend to all of $\H$, still defined by \eqref{Waveopslim} and \eqref{InvWaveopslim}. Since they are invertible, i.e., \eqref{eq:invertibleop} holds, they are unitary on $\H$.   
\end{proof}

\begin{corollary}\label{cor:regularity transfer}
	Since $D_x$ and $D_\varphi$ commute with both Hamiltonians, and hence with their dynamics, they also commute with the wave operators and their inverses as well as the scattering map. In the geometric picture, this means that regularity in these directions is transferred from the horizon to the horizon. 
\end{corollary}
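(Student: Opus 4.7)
The plan is to verify the commutations at the level of the Hamiltonians, then transfer them to the propagators, and finally to the wave operators and the scattering map via the strong-limit definition.

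First, I would inspect the explicit expressions \eqref{eq:hamiltonian sh}--\eqref{eq:dirac op} for $H(\tau)$ and \eqref{eq:simp hamiltonian} for $H_0^\pm$. Every coefficient appearing in these operators is a function of $\tau$ (through $r(\tau)$), $\theta$, or a constant; the variables $x$ and $\varphi$ never enter. Consequently $D_x$ and $D_\varphi$ commute with $H(\tau)$ and with $H_0^\pm$ on a core such as $\mathcal{C}^\infty_c(\Sigma;\C^4)$. To pass to the propagators, the cleanest route is to observe that the translations $T_x(\sigma)\psi(x,\omega):=\psi(x+\sigma,\omega)$ and rotations $R_\varphi(\sigma)\psi(x,\theta,\varphi):=\psi(x,\theta,\varphi+\sigma)$ form one-parameter unitary groups on $\H$ whose generators are $D_x$ and $D_\varphi$. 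Since the coefficients of $H(\tau)$ are independent of $x$ and $\varphi$, the uniqueness part of \Cref{HCP} gives $T_x(\sigma)\mathcal{U}(\tau,s)=\mathcal{U}(\tau,s)T_x(\sigma)$ and the analogous identity for $R_\varphi(\sigma)$; Stone's theorem then yields commutation with $D_x$ and $D_\varphi$ on the respective domains. The same argument, trivialized by the constancy of coefficients, applies to $e^{-i\tau H_0^\pm}$.

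Second, I would pass to the wave operators. For $\psi\in D(D_x)$, using the closedness of $D_x$ and the commutation just established, I compute
\begin{equation*}
D_x W^\pm \psi = \lim_{\tau\to\pm\infty} D_x\,\mathcal{U}(0,\tau) e^{-i\tau H_0^\pm}\psi = \lim_{\tau\to\pm\infty} \mathcal{U}(0,\tau) e^{-i\tau H_0^\pm} D_x \psi = W^\pm D_x \psi,
\end{equation*}
where the right-hand limit exists by \Cref{thm:scattering} applied to $D_x\psi\in\H$. Thus $W^\pm$ preserves $D(D_x)$ and commutes with $D_x$ there; the symmetric argument handles $\Omega^\pm$ and $D_\varphi$. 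Composing, $D_x S = S D_x$ and $D_\varphi S = S D_\varphi$, and the same conclusion iterates to arbitrary polynomials in these two operators.

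The geometric interpretation follows by noting that $\partial_x=\partial_t$ and $\partial_\varphi$ are the two Killing vector fields of the KN(A)dS metric and are tangent to both horizons. The commutations above then imply that $S$ maps any Sobolev space built on iterated $D_x$- and $D_\varphi$-derivatives isometrically onto itself, which is the precise statement that regularity along Killing directions is transferred between the event horizon and the Cauchy horizon. The only subtle point in the entire argument is the bookkeeping on the unbounded operators $D_x$, $D_\varphi$ and their domains; closedness of these operators, combined with the unitarity of the approximating propagators, is what makes this bookkeeping harmless.
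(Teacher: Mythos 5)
Your proposal is correct and follows essentially the same route the paper takes: the corollary is stated there as an immediate consequence of the chain ``$D_x,D_\varphi$ commute with $H(\tau)$ and $H_0^\pm$ $\Rightarrow$ with the propagators $\Rightarrow$ with $W^\pm$, $\Omega^\pm$ and $S$'', which is exactly your argument. You merely make explicit the functional-analytic bookkeeping (invariance of the domain, uniqueness of the evolution system, Stone's theorem, and closedness of $D_x$, $D_\varphi$ when passing to the strong limits) that the paper leaves implicit.
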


\begin{remark}
	However, for the other angular derivative and the transverse direction, more analysis is required and we postpone it to a future work together with the geometric interpretation. 
\end{remark}

\begin{remark}	
	Given a gauge transformation as in \eqref{eq:gauge on A}, the scattering operator transforms as
	\begin{equation}
	S \mapsto \tilde{S} = e^{iq\xi} S e^{-iq\xi} \, .
	\end{equation}
	Such a transformation will be important when dealing with the geometric interpretation (the characteristic Cauchy problem approach) since our potential $A$ is not smooth on the horizons. However, as mentioned earlier, we postpone this discussion for a future work.
\end{remark}

\section{Symmetry operator}
In this section we provide an alternative and simpler proof of \Cref{thm:scattering} using a symmetry operator for the Dirac fields in spacetimes under consideration\footnote{It is worth mentioning that the proof using some symmetry operator was suggested by one of the referees and only afterwards we found the exact symmetry operator.}. The simplification occurs in the proof via Cook's method and bypasses the use of \Cref{prop:slD u bounded}. Although the proof using the symmetry operator is simpler, we do not expect it to be robust under perturbations of the fields, while we expect the proof using \Cref{prop:slD u bounded} to be robust and possibly more apt for studying perturbations. 

Here, we only use this symmetry operator to prove scattering, however we expect it to have important applications in other contexts. To our knowledge, this is the first time the symmetry operator we introduce here appears in the literature, and may have been unknown before.

To define our symmetry operator we first denote
\begin{align}
\slcalD :=& \sqrt{\Delta_\theta}\slD + H_1  \nonumber \\
=& i\sqrt{\Delta_\theta} \Gamma_x
\left({\partial_\theta} + \frac{\cot\theta}{2} +
\frac{\Delta_\theta'}{4{\Delta_\theta}} \right) - \frac{i \lambda}{\sqrt{\Delta_\theta} \sin\theta}  \Gamma_{y} \left(a\sin^2\theta \d_x + \d_\varphi\right)
\label{eq:slcalD}
\\
=:&
\Gamma_x \T_x + \Gamma_y \T_y
\, ,	\label{eq:tt} 
\end{align}
and note that it can also be expressed as
\begin{equation}
\slcalD = i\sqrt{\Delta_\theta} \Gamma_x
	\left({\partial_\theta} + \frac{\cot\theta}{2} +
	\frac{\Delta_\theta'}{4{\Delta_\theta}} - \frac{i \lambda \Gamma_{zz}}{\Delta_\theta \sin\theta} \left(a\sin^2\theta \d_x + \d_\varphi\right)
	\right) \, .		
\end{equation}
Now we define the operator
\begin{equation}\label{eq:Q}
\Q := \left(\slcalD + M_+ma\cos\theta \right)^2 \, .
\end{equation}
\begin{theorem}[Commuting symmetry operator]\label{thm:symm op}
	The operator $\Q$ commutes with the Hamiltonian $H$ defined in \eqref{eq:hamiltonian sh}, and hence with the unitary dynamics $\mathcal{U}(\tau,s)$ for all $(\tau, s) \in \R^2$. It follows that $\Q$ is a symmetry operator, i.e. it sends solutions of the Dirac equations to other solutions. Moreover, for all $\psi \in D$, we have
		\begin{equation}\label{eq: norm-identity}
			\norm{(\slcalD+M_+ma\cos\theta)\psi}^2_\H =  \inner{\Q\psi}{\psi} \, .
		\end{equation}  
\end{theorem}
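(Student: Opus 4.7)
The first step is to rewrite the Hamiltonian so that the pieces which anticommute with a natural ``square root'' of $\Q$ are clearly exposed. Writing $P := \slcalD + M_+ ma\cos\theta$, so that $\Q = P^2$, and using $M_0 = iM_-mr + aM_+ m\cos\theta$, the expression \eqref{eq:hamiltonian sh} for $H(\tau)$ becomes
\begin{equation*}
H(\tau) = H_0(\tau) + \f(r(\tau))\, P + i\,\f(r(\tau))\, m\, r(\tau)\, M_-.
\end{equation*}
The crucial structural fact is that the coefficients of $P$ depend on $\theta$ only, while every other $\tau$-dependent coefficient appearing in $H$ is a scalar function of $\tau$ alone. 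In particular $P$ commutes with $D_x$, $D_\varphi$ and with every one of those scalar coefficients.

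For the commutation $[\Q, H(\tau)] = 0$ I would first verify by direct computation on the block matrices \eqref{eq:Gamma matrices}--\eqref{eq:extra defs} the anticommutation relations
\begin{equation*}
\{\Gamma_z, \Gamma_x\} = \{\Gamma_z, \Gamma_y\} = \{\Gamma_z, M_+\} = 0, \qquad \{M_-, \Gamma_x\} = \{M_-, \Gamma_y\} = \{M_-, M_+\} = 0.
\end{equation*}
Combined with the decomposition $P = \Gamma_x \T_x + \Gamma_y \T_y + M_+ ma\cos\theta$ from \eqref{eq:tt}, and the fact that the coefficients $\T_x, \T_y, ma\cos\theta$ are scalar operators (in the spin index), this immediately gives $\{\Gamma_z, P\} = 0$ and $\{M_-, P\} = 0$. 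The elementary identity $AP = -PA \Longrightarrow AP^2 = P^2 A$ then yields $[\Q, \Gamma_z] = 0$ and $[\Q, M_-] = 0$, while $[\Q, P] = 0$ is trivial. Since $H_0 = -\Gamma_z \cdot ($scalar coefficients times $D_x, D_\varphi)$ and the scalar coefficients together with $D_x, D_\varphi$ all commute with $\Q$, this produces $[\Q, H_0] = 0$. Combining the three terms of the decomposition gives $[\Q, H(\tau)] = 0$ pointwise in $\tau$.

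Commutation with the propagator $\mathcal{U}(\tau,s)$ follows by the standard uniqueness argument: on a suitable dense invariant subspace, both $\mathcal{U}(\tau,s)\Q\psi$ and $\Q\,\mathcal{U}(\tau,s)\psi$ solve $\partial_\tau X = -iH(\tau)X$ with the same initial data $\Q\psi$ at $\tau=s$, thanks to \eqref{eq:dtauU} and $[\Q,H]=0$, so \Cref{HCP} forces $\mathcal{U}(\tau,s)\Q = \Q\,\mathcal{U}(\tau,s)$; in particular $\Q$ sends Dirac solutions to Dirac solutions. Finally, the identity \eqref{eq: norm-identity} reduces to the formal symmetry of $P$ on $\H$: $\slcalD$ is symmetric because the $\cot\theta/2$ and $\Delta_\theta'/(4\Delta_\theta)$ corrections in \eqref{eq:slcalD} are arranged precisely to match the measure $\mathrm{d}x\,\mathrm{d}\omega$ in \eqref{SpaceCalH}, and because $\slDs$ is self-adjoint by \Cref{lem:propertiesofDs2}\textit{(ii)}; the remaining term $M_+ ma\cos\theta$ is a Hermitian multiplication operator. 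Hence on $D$,
\begin{equation*}
\inner{\Q\psi}{\psi}_\H = \inner{P^2\psi}{\psi}_\H = \inner{P\psi}{P\psi}_\H = \norm{P\psi}_\H^2,
\end{equation*}
as required. The only real difficulty is book-keeping: the algebra is elementary once $H$ is placed in the three-term form above, and the subtleties lie solely in tracking the domain when commuting the second-order differential operator $\Q$ with $\mathcal{U}(\tau,s)$.
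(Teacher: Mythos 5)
Your proof is correct, and for the key step $[\Q,H(\tau)]=0$ you take a genuinely cleaner route than the paper. Both proofs begin from the same three-term decomposition $H = H_0 + imr\f M_- + \f\,(\slcalD + M_+ ma\cos\theta)$; the paper then expands $\Q$ explicitly (via the anticommutator identity $\Q = \slcalD^2 + m^2a^2\cos^2\theta + ma\{\slcalD, M_+\cos\theta\}$, leading to $\Q = \T_x^2 + \T_y^2 + m^2a^2\cos^2\theta + \Gamma_{zz}\T_z + ima\sqrt{\Delta_\theta}\cos\theta\,\Gamma_x M_+$) and checks that the two matrices $\Gamma_{zz}$ and $\Gamma_x M_+$ appearing in this expansion commute with $\Gamma_z$ and $M_-$. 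You avoid computing $\slcalD^2$ altogether: by working directly with the ``square root'' $P = \slcalD + M_+ma\cos\theta = \Gamma_x\T_x + \Gamma_y\T_y + M_+ma\cos\theta$ (with $\T_x, \T_y, ma\cos\theta$ scalar in the spin index), the elementary anticommutators $\{\Gamma_z,\Gamma_x\}=\{\Gamma_z,\Gamma_y\}=\{\Gamma_z,M_+\}=0$ and $\{M_-,\Gamma_x\}=\{M_-,\Gamma_y\}=\{M_-,M_+\}=0$ give $\{\Gamma_z,P\}=\{M_-,P\}=0$, and the identity $\{A,P\}=0 \Rightarrow [A,P^2]=0$ then immediately yields $[\Q,\Gamma_z]=[\Q,M_-]=0$, hence $[\Q,H]=0$. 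This is more economical, and it explains \emph{why} $\Q$ must be built from the square of an operator with these anticommutation properties, rather than verifying it after the fact. Two minor points: your citation of \eqref{eq:tt} for $P$ should really be ``\eqref{eq:tt} together with the $M_+ma\cos\theta$ term'', since \eqref{eq:tt} gives only $\slcalD$; and your argument for the symmetry of $\slcalD$ is stated a bit more loosely than the paper's integration-by-parts computation (the paper verifies the $\T_x$ case explicitly), though what you write does identify the correct mechanism, namely that the $\cot\theta/2$ and $\Delta_\theta'/(4\Delta_\theta)$ corrections are precisely what is needed to symmetrise $i\sqrt{\Delta_\theta}\,\partial_\theta$ against the measure $\sin\theta\,\diff\theta$. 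Your sketch of the commutation with $\mathcal U(\tau,s)$ by uniqueness of the Cauchy problem is also reasonable and in fact slightly more explicit than the paper, which leaves that implication unproved.
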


\begin{proof}
First, we show that $ \slcalD $ is a symmetric operator on $D$ the domain of $H$ with respect to the $\H$-norm. Let $\alpha, \beta \in D$. We note that
\begin{align*}
	\inner{\alpha}{\slcalD\beta}_\H &= \inner{\alpha}{\T_x \Gamma_x\beta}_\H +  \inner{\alpha}{\T_y \Gamma_y\beta}_\H \, .
\end{align*}
by unitarity of $\Gamma_i$. It is enough then to show that $\T_x$ and $\T_y$ are symmetric. Since $i\d_x$ and $i\d_\varphi$ are symmetric on $D$, it immediately follows that $\T_y$ is symmetric. It remains to show that $\T_x$ is symmetric on $D$. Indeed, we have, say for $\gamma = \Gamma_x\beta$,
\begin{align*}
	\inner{\alpha}{\T_x \gamma}_\H =& i \iint_{\R\times [0,2\pi]} dx d\varphi \int_{0}^\pi  \bar\alpha
	\left({\partial_\theta} + \frac{\cot\theta}{2} +
	\frac{\Delta_\theta'}{4{\Delta_\theta}} \right) \gamma \, \sqrt{\Delta_\theta} \sin\theta d\theta \\
	=& -i \iint_{\R\times [0,2\pi]} dx d\varphi \int_{0}^\pi \left({\partial_\theta} - \frac{\cot\theta}{2} -
	\frac{\Delta_\theta'}{4{\Delta_\theta}} \right) \bar\alpha
	\gamma \, \sqrt{\Delta_\theta} \sin\theta d\theta \\
	& -i \iint_{\R\times [0,2\pi]} dx d\varphi \int_{0}^\pi \bar\alpha
	\gamma \, \d_\theta \left[ \sqrt{\Delta_\theta} \sin\theta d\theta \right] = \inner{\T_x \alpha}{\gamma}_\H \, .
\end{align*}

Next, let us write the Hamiltonian as
\begin{equation}
	H = H_0 + imr\f M_- + \f(\slcalD+M_+ma\cos\theta) \, . \label{eq:rexpr}
\end{equation}
Evidently, $\Q$ commutes with the last term. It remains to verify that the matrices appearing in $\Q$ commute with $\Gamma_z$ and $M_-$, since the first two terms of \eqref{eq:rexpr} are $(x,\theta,\varphi)$-independent, and $\Q$ has no $\tau$ derivatives. To this end, we express
\begin{equation}\label{eq:anticomm}
	\Q = \slcalD^2 + m^2a^2\cos^2\theta + ma\{\slcalD, M_+\cos\theta\} \, ,
\end{equation}
where $\{,\}$ denotes the anti-commutator. We now make explicit the matrices appearing in \eqref{eq:anticomm}. Using \eqref{eq:tt}, we see that
\begin{align*}
	\slcalD^2 &= \T_x^2 + \T_y^2 + \{\T_x\Gamma_x, \T_y\Gamma_y\} 
	\\
	&= \T_x^2 + \T_y^2 + \Gamma_x\Gamma_y \T_x\T_y + \Gamma_y \Gamma_x \T_y\T_x
	\\
	&= \T_x^2 + \T_y^2 + i\Gamma_{zz} [\T_x,\T_y] \, ,
\end{align*}
due to $\Gamma_x\Gamma_y = - \Gamma_y\Gamma_x = i\Gamma_{zz}$. We will denote
\begin{equation*}
	\T_z := i[\T_x,\T_y] = -\sqrt{\Delta_\theta} \left(\d_\theta\T_y \right) \, ,
\end{equation*}
whose explicit expression is not pertinent. Hence,
\begin{equation}\label{eq:slcalD^2}
	\slcalD^2 = \T_x^2 + \T_y^2 + \Gamma_{zz} \T_z
\end{equation}
is diagonal. As for the final term, we note that
\begin{align*}
	\{\slcalD, M_+\cos\theta\} = \Gamma_xM_+\T_x\cos\theta + M_+\Gamma_x\cos\theta\T_x = \Gamma_xM_+ [\T_x,\cos\theta] \, ,
\end{align*}
since $\Gamma_xM_+ = -M_+\Gamma_x$, and thus

\begin{equation*}
	\{\slcalD, M_+\cos\theta\} = \Gamma_xM_+ i\sqrt{\Delta_\theta}\cos\theta \, .
\end{equation*}
This allows us to write
\begin{equation}\label{eq:explicit Q}
	\Q = \T_x^2 + \T_y^2 + m^2a^2\cos^2\theta + \Gamma_{zz} \T_z + \Gamma_xM_+ ima\sqrt{\Delta_\theta}\cos\theta \, .
\end{equation}
It is now a trivial computation to show that
\begin{gather*}
	[\Gamma_{zz},\Gamma_z] = 0 \, , \quad  [\Gamma_{zz},M_-] = 0 \, , \\
	[\Gamma_xM_+,\Gamma_z] = 0 \, , \quad  [\Gamma_xM_+,M_-] = 0 \, ,
\end{gather*}
whence $[\Q,H] = 0$.
\end{proof}

We are now poised to give a simpler proof of \Cref{thm:scattering}.
\begin{proof}[Proof of \Cref{thm:scattering} via $\Q$]
Using the conserved current, we have for all $\tau\in\R$
\begin{equation}\label{eq:symmconv}
\norm{\mathcal{Q}\mathcal{U}(\tau,0)}_\H = \norm{\mathcal{Q}\psi}_\H \, .
\end{equation}
Hence, in Cook's method used for the proof of the existence of the strong limit corresponding to $\Omega^{\pm}$, we have
\begin{align*}
\norm{\partial_\tau\left(e^{i\tau H_0^\pm}\mathcal{U}(\tau, 0)\psi\right)}^2_\H  \le & \norm{\left(H_0 - H_0^\pm \right)\mathcal{U}(\tau, 0)\psi}_\H^2 +
\f^2 \norm{M_0\mathcal{U}(\tau, 0)\psi}_\H^2 \\ &+ \f^2 \norm{\left(\sqrt{\Delta_\theta}\slD + H_1 + M_+ ma\cos\theta \right)\mathcal{U}(\tau, 0)\psi}_\H^2 \, .
\end{align*}
The first two terms are controlled as before, whereas the last term, as per \eqref{eq: norm-identity}, reads
\begin{equation*}
\Big\Vert{\Big( \underbrace{\sqrt{\Delta_\theta}\slD + H_1}_{\slcalD} + M_+ ma\cos\theta\Big)\mathcal{U}(\tau, 0)\psi}\Big\Vert_\H^2 = \inner{\mathcal{Q}\mathcal{U}(\tau, 0)\psi}{\mathcal{U}(\tau, 0)\psi}_\H = \inner{\mathcal{Q}\psi}{\psi}_\H \, .
\end{equation*}
using the polarization identity \eqref{eq:symmetric=unitary} and then \eqref{eq:symmconv}.
\end{proof}

\begin{remark}
	Note that in the case of Reissner-Nordström-like spacetimes previous considered in \cite{hafner_scattering_2021-1}, the symmetry operator $\mathcal{Q}$ reduces to the Dirac operator on the sphere $\slD$. Therefore,  \Cref{thm:symm op} provides a shortcut in the part using Cook's method, and hence a simpler proof for the scattering theory constructed therein.  
\end{remark}

\section{Commutator estimates}\label{sec:B-operator}

The aim of this section is to prove \Cref{prop:slD u bounded}. This is essentially an application of Grönwall's lemma, but the core of the proof relies on careful commutator estimates. This uses techniques adapted for time-dependent hamiltonians from the theory of regularly generated dynamics, see  \cite{derezinski_scattering_1997}.

\subsection{The comparison operator}
The idea is to introduce a comparison operator $B$, which dominates the $\slD$ operator and determines the norm on $D=\Hsp{1}(\Sigma;\C^4)$ in a way that allows us to better control the spherical derivatives of the dynamics $\mathcal{U}(\tau,s)$.

Let $(B, D(B))$  be given by
\begin{equation}\label{eq:comparison operator B}
D(B) := \Hsp{2}(\Sigma; \C^4), \quad B := {D_x}^2 + \slD^2,
\end{equation}
Seemingly simple, the (diagonal) operator $B$ had to be chosen carefully, not only to possess the above mentioned criteria, but also to suit the asymptotics of the Hamiltonian $H(\tau)$ as $\tau\rightarrow\pm\infty$. See \Cref{eq:H_0B=0}.
 
It is clear from the discussions in the previous sections that $B$ is a positive self-adjoint operator on $\H$ with a bounded inverse since its spectrum does not contain 0. Moreover, $D(B^{1/2})=D$ and $B^{-1/2}$ is a bounded operator on $\H$. Indeed, for $\psi\in D$, from \eqref{eq:norm-slD} and \eqref{eq:h1 norm on sigma}, we have
\begin{equation}\label{eq:norm equiv Bm}
	\norm{B^{1/2}\psi}^2_{\H} = {\inner{B\psi}{\psi}}_\H = {\inner{{D_x}\psi}{D_x\psi}}_\H + {\inner{\slD\psi}{\slD\psi}}_\H\eqsim \norm{\psi}_{\Hsp{1}(\Sigma)}^2,
\end{equation}
and since $\Bm$ is a bijection from $\H$ onto $D$, we have shown that for all $\psi\in\H$
	\begin{equation}\label{eq:b-1/2 equiv to H}
	\norm{\Bm\psi}_{\H} \le \norm{\Bm\psi}_{\Hsp{1}(\Sigma)} \eqsim \norm{\psi}_{\H}.
\end{equation} 

The next estimates will be used repeatedly and it is useful to collect them in a lemma.
\begin{lemma}\label{lemma:B norm}
	The following inequality relations hold
		\begin{equation}\label{eq:bounded operators Bm}
			 \norm{\slD B^{-1/2}}_{\B(\H)} \le 1 ;\quad
			 \norm{D_x B^{-1/2}}_{\B(\H)} \le 1 ;\quad
			 \norm{D_\varphi B^{-1/2}}_{\B(\H)} \le 1 .
		\end{equation}
	Furthermore, for all $\psi\in D$ we have, 
	\begin{equation}\label{eq:Bm bounded operators}
			\norm{B^{-1/2} \slD \psi}_{\H} \le \norm{\psi }_\H ;~~
			\norm{B^{-1/2} D_x \psi}_{\H} \le \norm{\psi }_\H ; ~~
			\norm{B^{-1/2} D_\varphi \psi}_{\H} \le \norm{\psi }_\H.
		\end{equation}
	and for any $\mathcal{C}^1$ function $F:[0,\pi)_\theta \rightarrow \R$, such that $\abs{F(\theta)}\lesssim \abs{\sin\theta}$ and $\abs{F'(\theta)}\lesssim 1$,
	\begin{equation}\label{eq:Bm bounded d-theta}
			\norm{ F(\theta)\left(\d_\theta + \frac{\cot \theta}{2}\right)\Bm}_{\mathcal{B}(\H)}\lesssim 1
		~~ ; ~~
			\norm{\Bm F(\theta)\left(\d_\theta + \frac{\cot \theta}{2}\right)\psi}_\H\lesssim \norm{\psi}_\H .
	\end{equation}
\end{lemma}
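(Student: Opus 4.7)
}

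The plan is to prove the six estimates in order, leveraging the clean spectral structure of $B=D_x^2+\slD^2$ and commutation properties between $D_x$, $D_\varphi$ and $\slD$.

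First I would establish $\|\slD B^{-1/2}\|_{\B(\H)}\le 1$ and $\|D_x B^{-1/2}\|_{\B(\H)}\le 1$ simultaneously. From the definition of $B$ and \eqref{eq:norm equiv Bm} we have, for any $\psi\in D$, the identity $\|B^{1/2}\psi\|_\H^2=\|D_x\psi\|_\H^2+\|\slD\psi\|_\H^2$. Applied to $\psi=\Bm\phi$ for arbitrary $\phi\in\H$ (which lies in $D(B)\subset D$ since $\Bm$ maps $\H$ into $D(B^{1/2})=D$ and $B$ has bounded inverse), this gives $\|\slD\Bm\phi\|_\H^2+\|D_x\Bm\phi\|_\H^2=\|\phi\|_\H^2$, whence each of the two norms is bounded by $\|\phi\|_\H$. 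For the third estimate $\|D_\varphi \Bm\|_{\B(\H)}\le 1$, the key point is the pointwise (in $x$) inequality $\|D_\varphi\phi\|_{L^2(\s^2)}\le \|\slDs\phi\|_{L^2(\s^2)}$. This I would obtain by decomposing $\phi$ on the spin-weighted spherical harmonics $W^l_n$ ($l\ge \tfrac12$, $|n|\le l$): $\slDs$ has eigenvalue $\pm(l+\tfrac12)$ on each mode (see \eqref{eq:decomposion slD}), while $D_\varphi$ has eigenvalue $n$, and $n^2\le l^2<(l+\tfrac12)^2$. Integrating in $x$ and combining with the previous step yields $\|D_\varphi\Bm\phi\|_\H\le\|\slD\Bm\phi\|_\H\le\|\phi\|_\H$.

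For the three estimates \eqref{eq:Bm bounded operators}, I would exploit commutation. The operators $D_x$, $D_\varphi$ and $\slD$ act on disjoint sets of variables and so pairwise commute as differential operators; in particular each commutes with $B$, and hence, by the spectral theorem for the positive self-adjoint $B$, with $\Bm$ (on the appropriate domains). Therefore $\Bm\slD\psi=\slD \Bm\psi$ for $\psi\in D$, and the bounds in \eqref{eq:bounded operators Bm} transfer directly to \eqref{eq:Bm bounded operators}. The same argument works for $D_x$ and $D_\varphi$.

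For the first inequality in \eqref{eq:Bm bounded d-theta}, I would apply \eqref{eq:D-phi bound} of \Cref{lemma:sin theta} pointwise in $x$: for $\psi\in D$,
\begin{equation*}
\Big\|F(\theta)\Big(\d_\theta+\tfrac{\cot\theta}{2}\Big)\psi(x,\cdot)\Big\|_{L^2(\s^2)}^2\lesssim \|\psi(x,\cdot)\|_{\Hsp{1}(\s^2)}^2\lesssim \|\slDs \psi(x,\cdot)\|_{L^2(\s^2)}^2,
\end{equation*}
the last step by the equivalence \eqref{eq:normequiv-DS2}. Integration in $x$ then gives $\|F(\d_\theta+\tfrac{\cot\theta}{2})\psi\|_\H\lesssim \|\slD\psi\|_\H\le \|B^{1/2}\psi\|_\H$; taking $\psi=\Bm\phi$ proves the bound. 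For the second, dual inequality, a formal integration by parts in $\theta$ in the measure $\sin\theta\,\diff\theta\,\diff\varphi\,\diff x$ shows that the adjoint in $\H$ of $F(\theta)(\d_\theta+\tfrac{\cot\theta}{2})$ is $-F(\theta)(\d_\theta+\tfrac{\cot\theta}{2})-F'(\theta)$, and both $-F$ satisfies the same hypotheses as $F$ and $F'$ is a bounded multiplication operator on $\H$ since $|F'|\lesssim 1$. Hence for $\psi\in D$ and any $\phi\in\H$ with $\|\phi\|_\H\le 1$,
\begin{equation*}
\big|\inner{\Bm F(\d_\theta+\tfrac{\cot\theta}{2})\psi}{\phi}_\H\big|=\big|\inner{\psi}{[-F(\d_\theta+\tfrac{\cot\theta}{2})-F']\Bm\phi}_\H\big|\lesssim \|\psi\|_\H,
\end{equation*}
using the first half of \eqref{eq:Bm bounded d-theta} together with \eqref{eq:b-1/2 equiv to H}; taking the supremum over $\phi$ gives the desired bound.

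The main (minor) technical obstacle is justifying rigorously that $\slD$, $D_x$ and $D_\varphi$ commute with $\Bm$; I would handle this by noting they pairwise commute on the common core ${\cal C}_c^\infty(\Sigma;\C^4)$, whence $B$ commutes with each of them on a core, and then invoke the spectral theorem for $B$. The sharp constant $1$ (rather than $\lesssim 1$) in $\|D_\varphi \Bm\|_{\B(\H)}$ genuinely requires the spin-weighted spherical harmonic decomposition, as the soft estimate \eqref{eq:phi bound} only delivers a constant.
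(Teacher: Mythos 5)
Your proof is correct, and for the first two groups of estimates it follows the paper's own route: the identity $\norm{B^{1/2}\psi}^2_\H=\norm{D_x\psi}^2_\H+\norm{\slD\psi}^2_\H$ applied to $\psi=\Bm\phi$, together with the fact that $D_x$, $D_\varphi$, $\slD$ commute with $B$ (hence with $\Bm$), which is exactly how \eqref{eq:bounded operators Bm} and \eqref{eq:Bm bounded operators} are obtained in the paper. The differences are in two places. For the $D_\varphi$ bound, the paper simply cites \eqref{eq:phi bound}, which on its own only gives a constant (via $\norm{\phi}_{\Hsp{1}(\s^2)}\le\sqrt{2}\,\norm{\slDs\phi}_{L^2(\s^2)}$), whereas your spin-weighted spherical harmonic argument ($\abs{n}\le l<l+\tfrac12$) genuinely delivers the stated constant $1$ -- a small sharpening, though only boundedness is used downstream. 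For \eqref{eq:Bm bounded d-theta}, the paper does not pass through \Cref{lemma:sin theta} and duality at all: it writes the two exact operator identities $F(\theta)\left(\d_\theta+\tfrac{\cot\theta}{2}\right)=-iF\Gamma_x\slD-\Gamma_{zz}\tfrac{F}{\sin\theta}D_\varphi=-i\slD F\Gamma_x-F'+D_\varphi\Gamma_{zz}\tfrac{F}{\sin\theta}$, and reads off the first bound from \eqref{eq:bounded operators Bm} and the second from \eqref{eq:Bm bounded operators}, since $F/\sin\theta$ and $F'$ are bounded multipliers. That route is purely algebraic and sidesteps the integration by parts you use to identify the formal adjoint (where you should also record that the boundary contributions at $\theta=0,\pi$ vanish because $\abs{F}\lesssim\sin\theta$, and that the adjoint identity is justified on the relevant domains); your route, conversely, leans on \Cref{lemma:sin theta} and the equivalence \eqref{eq:normequiv-DS2} and so is a bit less tied to the explicit matrix form of $\slD$. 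One minor slip: for general $\phi\in\H$, $\Bm\phi$ lies in $D(B^{1/2})=D$, not in $D(B)$; this is all that is needed, since \eqref{eq:norm equiv Bm} holds on $D$.
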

\begin{proof}
	\eqref{eq:bounded operators Bm} follows immediately from \eqref{eq:norm equiv Bm}, \eqref{eq:phi bound} and the fact that $\Bm$ is a bijection from $\H$ to $D$. 
%	we have
%	\begin{equation*}		
%	\norm{\slD\psi}_{\H} \lesssim \norm{B^{1/2}\psi}_{\H}; \, \norm{D_x\psi}_{\H} \lesssim \norm{B^{1/2}\psi}_{\H}; \,
%	\norm{D_\varphi\psi}_{\H} \lesssim \norm{B^{1/2}\psi}_{\H}.
%	\end{equation*}
%	$\Bm$ is surjective on $\Hsp{1}(\Sigma; \SAA)$. Setting $\psi = B^{-1/2} \tilde\psi$ with $\tilde\psi\in \Hsp{1}(\Sigma; \SAA)$ yields \textit{(i)-(iii)}. 
As $\slD$, $D_x$ and $D_\varphi$ commute with $B$, \eqref{eq:Bm bounded operators} follows from \eqref{eq:bounded operators Bm}. To prove the two estimates in  \eqref{eq:Bm bounded d-theta}, we use \eqref{eq:bounded operators Bm} and \eqref{eq:Bm bounded operators} with the following two identities,
\begin{align*}
	F(\theta)\left(\d_\theta + \frac{\cot \theta}{2}\right)&=-iF(\theta)\Gamma_x\slD -\Gamma_{zz}\frac{F(\theta)}{\sin\theta}D_\varphi
	\\
	&=-i\slD F(\theta)\Gamma_x-F'(\theta)+D_\varphi\Gamma_{zz}\frac{F(\theta)}{\sin\theta},
\end{align*}
respectively.
\end{proof}

The following property of $B$ is simple but necessary to justify some steps in the calculations afterwards.

\begin{lemma}\label{lem:property of B}
Let $T_1\le T_2$ be two real numbers, then $\forall \tau\in\R$, the map
$s \mapsto \mathcal{U}(\tau,s) B^{-1} $ is an element of $W^{1,1}([T_1, T_2]; \B(\H))$.
\begin{proof}
	 Since $\mathcal{U}(\tau, s)$ is unitary on $\H$, we have $\forall \tau \in \R,$
	\begin{equation*}
	 \int_{T_1}^{T_2} \norm{\,\mathcal{U}(\tau, s) B^{-1}}_{\B(\H)} \diff s = \left({T_2}-{T_1}\right) \norm{B^{-1}}_{\B(\H)}.
	\end{equation*}
	Moreover, since $B^{-1}(\H)=\Hsp{2}(\Sigma;\C^4)\subset D$ we can use \Cref{eq:dsU}, and thanks to the unitarity of the $\Gamma$ matrices, we have
	\begin{align*}
		\norm{\partial_s\,\mathcal{U}(\tau, s) B^{-1}}_{\B(\H)} &= \norm{\,\mathcal{U}(\tau, s) i H(s) B^{-1}}_{\B(\H)} = \norm{H(s) B^{-1}}_{\B(\H)} \\ &= \norm{\left(H_0(s) + \f(r(s))\left(\sqrt{\Delta_\theta}\slD + H_1 + M_0(s)\right)\right)B^{-1/2}B^{-1/2}}_{\B(\H)} \\ &\lesssim \norm{H_0(s)B^{-1}}_{\B(\H)} + \f(r(s)) \norm{ \left(\sqrt{\Delta_\theta}\slD + H_1 + M_0(s)\right)B^{-1}}_{\B(\H)} \\ & \lesssim \norm{B^{-1/2}}_{\B(\H)}\left( 1 + \f(r(s))\right)  \lesssim 1+\f(r(s)),
	\end{align*}
	where \Cref{lemma:B norm} was used, in conjunction with $|\Delta_\theta'| \lesssim |\sin\theta| $ for $H_1$, and finally, Equation \eqref{eq:b-1/2 equiv to H} gives the upper bound. Therefore,
	\begin{equation*}
	 \int_{T_1}^{T_2} \norm{\partial_s\,\mathcal{U}(\tau, s) B^{-1}}_{\B(\H)} \diff s \lesssim \int_{T_1}^{T_2} \left( 1 + \f(r(s))\right)\diff s\le T_2 - T_1 + \int_{\R} \f(r(s))\diff s,
	\end{equation*}	
which is finite by the asymptotics \eqref{eq:f tilde asymptotics}. 
\end{proof}
%	\begin{description}
%		\item[\it(1)] $ s \mapsto \mathcal{U}(\tau,s) B^{-1} $ belongs to $W^{1,1}([T_1, T_2]; \B(\H)), D(B) \subset D(H(s))$ for almost all $s$ and $H(s) B^{-1} $ belongs to $L^1_{\text{loc}}([T_1, T_2]; \B(\H))$;
%		\item[\it(2)] $\partial_s \, \mathcal{U}(\tau,s) B^{-1} = \mathcal{U}(\tau,s) i H(s) B^{-1}$;
%		\item[\it(3)] $B^{-1/2}\comm{H(\tau)}{B}B^{-1/2}$, originally defined as a quadratic form on $D(B^{1/2})$, extends to an element of $L^1_{\text{loc}}([T_1, T_2]; B(\H))$.
%	\end{description}
\end{lemma}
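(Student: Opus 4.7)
The plan is to verify directly the two requirements of the $W^{1,1}$ definition: that $s\mapsto\mathcal{U}(\tau,s)B^{-1}$ is $L^1$ in $\B(\H)$-norm on $[T_1,T_2]$, and that its strong operator derivative is also $L^1$ there. The first is immediate from the unitarity of $\mathcal{U}(\tau,s)$ (see \Cref{HCP}) together with the boundedness of $B^{-1}$ on $\H$: we have $\norm{\mathcal{U}(\tau,s)B^{-1}}_{\B(\H)}\le\norm{B^{-1}}_{\B(\H)}$ uniformly in $s$, which is trivially integrable on the compact interval.

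The bulk of the work is the second point. Since the range $B^{-1}(\H)=\Hsp{2}(\Sigma;\C^4)$ is contained in $D=\Hsp{1}(\Sigma;\C^4)$, I would apply \eqref{eq:dsU} on the range of $B^{-1}$ to obtain the strong identity
\begin{equation*}
\partial_s\bigl[\mathcal{U}(\tau,s)B^{-1}\bigr]=i\,\mathcal{U}(\tau,s)H(s)B^{-1},
\end{equation*}
after which unitarity of $\mathcal{U}(\tau,s)$ reduces the task to showing that $s\mapsto\norm{H(s)B^{-1}}_{\B(\H)}$ is in $L^1([T_1,T_2])$. The natural move is to split $B^{-1}=B^{-1/2}\cdot B^{-1/2}$ and exploit \Cref{lemma:B norm}: the bounds $\norm{D_x B^{-1/2}}_{\B(\H)},\,\norm{D_\varphi B^{-1/2}}_{\B(\H)},\,\norm{\slD B^{-1/2}}_{\B(\H)}\le 1$ are designed precisely to absorb each first-order derivative appearing in $H(s)$, while the remaining $B^{-1/2}$ factor is bounded on $\H$ by \eqref{eq:b-1/2 equiv to H}. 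Decomposing $H(s)$ as in \eqref{eq:hamiltonian sh} into $H_0(s)$, $\f\sqrt{\Delta_\theta}\,\slD$, $\f H_1$, and $\f M_0(s)$, I would estimate each piece in turn: $H_0(s)B^{-1}$ is uniformly bounded because $\Gamma_z$ is unitary and the coefficients $a/(r^2+a^2)$, $qQr/(r^2+a^2)$ are bounded on $[r_-,r_+]$; the $\slD$ piece contributes $\lesssim\f(r(s))$ since $\Delta_\theta\eqsim 1$; the $H_1$ piece contributes $\lesssim\f(r(s))$ thanks to $|\Delta_\theta'|\lesssim|\sin\theta|$ and the boundedness of $a\sin\theta/\sqrt{\Delta_\theta}$; and $M_0(s)$ is a bounded multiplication operator. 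Altogether, $\norm{H(s)B^{-1}}_{\B(\H)}\lesssim 1+\f(r(s))$.

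Integrability then closes the argument: $\int_{T_1}^{T_2}(1+\f(r(s)))\,\diff s\le(T_2-T_1)+\int_\R\f(r(s))\,\diff s<\infty$, where finiteness of the last integral follows from the exponential decay of $\f$ at $\pm\infty$ recorded in \eqref{eq:f tilde asymptotics}. I do not anticipate a serious obstacle in this lemma; it is essentially bookkeeping once \Cref{lemma:B norm} is available, and indeed the very design of $B=D_x^2+\slD^2$ is tuned so that each first-order operator appearing in $H(s)$ is dominated by $B^{1/2}$. Its real purpose is to justify rigorously, in the forthcoming proof of \Cref{prop:slD u bounded}, the fundamental theorem of calculus and the Leibniz-type commutator manipulations applied to operator-valued maps of the form $s\mapsto\mathcal{U}(\tau,s)B^{-1}$.
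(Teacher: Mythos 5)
Your proposal is correct and follows essentially the same route as the paper: unitarity of $\mathcal{U}(\tau,s)$ for the $L^1$ bound, the identity \eqref{eq:dsU} applied on $B^{-1}(\H)\subset D$, the splitting $B^{-1}=B^{-1/2}B^{-1/2}$ with \Cref{lemma:B norm} (using $|\Delta_\theta'|\lesssim|\sin\theta|$ for $H_1$) and \eqref{eq:b-1/2 equiv to H} to get $\norm{H(s)B^{-1}}_{\B(\H)}\lesssim 1+\f(r(s))$, and finally integrability of $\f$ from \eqref{eq:f tilde asymptotics}. No gaps to report.
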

%We will show that the dynamics generated by $H(\tau)$ defined by \eqref{eq:hamiltonian} are $B$-regularly generated by the comparison operator $B$ defined in \eqref{eq:comparison operator B}. To this end, we compute commutators appearing in the proof of our main theorem. 
\subsection{Proof of \Cref{prop:slD u bounded}}
The most technical part of the proof of \Cref{prop:slD u bounded} is a commutator estimate between $H(\tau)$ and $B$. For clarity, we first provide separately some commutation relations that are used frequently in the main calculation.
   
\begin{lemma}\label{lemma:comms} Let $ F \in \mathcal{C}^2([0,\pi)_\theta;\R)$. We have the following identities
	\begin{align*}
	&\comm{\slD}{F(\theta)} = i\Gamma_x F'(\theta); \quad \comm{\slD^2}{F(\theta)} = -2F'(\theta)\left(\partial_\theta + \frac{\cot\theta}2 \right) - F''(\theta);
	\\
	&\comm{\slD^2}{\Gamma_x} = 2i \, \Gamma_y \frac{\cot\theta}{\sin\theta} D_\varphi; \quad \comm{\slD^2}{\Gamma_y} = -2i \, \Gamma_x \frac{\cot\theta}{\sin\theta} D_\varphi; \quad 
	\comm{\slD^2}{\Gamma_z} = 0;  \quad \comm{\slD^2}{M_\pm} = 0.
	\end{align*}
\end{lemma}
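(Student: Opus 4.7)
The plan is to carry out each of the six identities by direct computation, leveraging two structural facts already at hand: the explicit formula \eqref{eq:slD^2} for $\slD^2$ derived in \Cref{lem:propertiesofDs2}, and the block-diagonal structure together with the Pauli algebra of the matrices $\Gamma_i$, $\Gamma_{ii}$, and $M_\pm$. No propagator or analytic ingredient is needed: this is purely a bookkeeping exercise.

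First I would dispatch the commutators with $F(\theta)$. For $[\slD, F(\theta)]$, reading $\slD$ from \eqref{eq:dirac op}, $F$ commutes with the $\varphi$-derivative term and with the multiplicative factor $\cot\theta/2$, so the Leibniz rule applied to the $i\Gamma_x\partial_\theta$ piece gives $[\slD,F]=i\Gamma_x F'(\theta)$. For $[\slD^2,F]$, I would read off \eqref{eq:slD^2} and notice that the only terms failing to commute with $F(\theta)$ are the $\theta$-differential terms $D_\theta^2$ and $-i\cot\theta\, D_\theta$; everything else is either scalar-matrix multiplication, a function of $\theta$, or a $\varphi$-derivative. A short Leibniz computation using $[\partial_\theta^2,F]\psi=F''\psi+2F'\partial_\theta\psi$ and $[D_\theta,F]=-iF'$ then yields
\[
[\slD^2,F]=-F''-2F'\partial_\theta-\cot\theta\, F'=-F''-2F'\Bigl(\partial_\theta+\tfrac{\cot\theta}{2}\Bigr),
\]
which is the stated identity.

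For the commutators with the matrices $\Gamma_x,\Gamma_y,\Gamma_z,M_\pm$, I would again invoke \eqref{eq:slD^2}: every scalar-valued term commutes with any matrix, so the whole commutator collapses to the contribution of the $\Gamma_{zz}\,(\cot\theta/\sin\theta)\,D_\varphi$ term. The problem thereby reduces to computing $[\Gamma_{zz},\Gamma_i]$ and $[\Gamma_{zz},M_\pm]$, which I would handle block by block using the Pauli identities $\sigma_z\sigma_x=i\sigma_y$, $\sigma_z\sigma_y=-i\sigma_x$, and $\sigma_z^2=\I_2$, together with the sign pattern in $\Gamma_i=\mathrm{diag}(\sigma_i,-\sigma_i)$ versus $\Gamma_{ii}=\mathrm{diag}(\sigma_i,\sigma_i)$. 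A direct block multiplication gives $[\Gamma_{zz},\Gamma_x]=2i\Gamma_y$, $[\Gamma_{zz},\Gamma_y]=-2i\Gamma_x$, $[\Gamma_{zz},\Gamma_z]=0$, and $[\Gamma_{zz},M_\pm]=0$, the last since both blocks of $\Gamma_{zz}$ are the same $\sigma_z$ and therefore $\Gamma_{zz}$ commutes with any block-off-diagonal matrix proportional to $\I_2$. Reintroducing the scalar operator $(\cot\theta/\sin\theta)\,D_\varphi$ that sits next to $\Gamma_{zz}$ in \eqref{eq:slD^2} reproduces the four claimed right-hand sides exactly.

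There is no conceptual obstacle; the only risk is a sign or factor-of-$i$ slip in the Pauli algebra, particularly because $\Gamma_i$ carries opposite signs in its two blocks while $\Gamma_{ii}$ does not. I would therefore double-check by verifying the two non-zero identities on a single block and then propagating through the $4\times4$ structure. With those commutators established, the lemma is proved.
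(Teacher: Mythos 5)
Your proposal is correct and matches the paper's proof, which likewise reduces everything to direct computation from the explicit expressions \eqref{eq:dirac op} and \eqref{eq:slD^2} together with the matrix identities $\comm{\Gamma_{zz}}{\Gamma_x}=2i\Gamma_y$, $\comm{\Gamma_{zz}}{\Gamma_y}=-2i\Gamma_x$, $\comm{\Gamma_{zz}}{\Gamma_z}=0$, $\comm{\Gamma_{zz}}{M_\pm}=0$. The Leibniz computations for the $F(\theta)$ commutators and the block-wise Pauli algebra are carried out correctly, with the right signs throughout.
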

\begin{proof}
	These are straightforward calculations which follow from the expressions of $\slD$ in  \eqref{eq:dirac op} and $\slD^2$ in \eqref{eq:slD^2} as well as the identities
	\begin{align*}
	\comm{\Gamma_{zz}}{\Gamma_x} = 2i \Gamma_y, && \comm{\Gamma_{zz}}{\Gamma_y} = - 2i \Gamma_x, && \comm{\Gamma_{zz}}{\Gamma_z} = 0, &&& \comm{\Gamma_{zz}}{M_\pm} = 0.
	\end{align*}
\end{proof}
%\begin{proposition}
%	The comparison operator $B$ defined in \eqref{eq:comparison operator B} satisfies properties \textit{(1)-(3)} of \Cref{def:B} for the Hamiltonian $H(\tau)$ given by \eqref{eq:hamiltonian}.
%\end{proposition}

\begin{lemma}\label{lem:H-B com}
	The map $\tau\mapsto B^{-1/2}\comm{H(\tau)}{B}B^{-1/2}$ on $\R_\tau$, where for each $\tau$ the image is first defined as a quadratic form on $D=D(B^{1/2})$, extends to an operator-valued map that belongs to $L^1(\R_\tau; B(\H))$.
\end{lemma}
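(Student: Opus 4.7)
The plan is to split
\[
[H(\tau),B] = [H_0(\tau),B] + \tilde f(r(\tau))\,\bigl[\sqrt{\Delta_\theta}\slD + H_1 + M_0(\tau),\,B\bigr].
\]
The first summand vanishes identically: $H_0(\tau)$ is built from $\Gamma_z$, $D_x$, $D_\varphi$ and scalar functions of $r(\tau)$, each of which commutes with $D_x^2$ and, by \Cref{lemma:comms}, with $\slD^2$. Moreover, since $\sqrt{\Delta_\theta}\slD$, $H_1$ and $M_0(\tau)$ have no $x$-dependence (modulo the harmless $D_x$ factor in $H_1$), their commutators with $D_x^2$ also vanish, and only $[\,\cdot\,,\slD^2]$ needs to be analysed. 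Because $\tilde f(r(\tau))$ decays exponentially by \eqref{eq:f tilde asymptotics}, the desired $L^1$-integrability will follow once I show that $\|B^{-1/2}[\sqrt{\Delta_\theta}\slD+H_1+M_0,\slD^2]B^{-1/2}\|_{\mathcal B(\H)}$ is uniformly bounded in $\tau$.

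The commutators $[\sqrt{\Delta_\theta}\slD,\slD^2]$ and $[M_0,\slD^2]$ are routine. Applying the Leibniz rule together with \Cref{lemma:comms}, each reduces to a sum of terms of the form $F(\theta)(\partial_\theta+\cot\theta/2)R + F_0(\theta)R$ with $R\in\{\mathrm{Id},\slD\}$ at most first order and coefficients $F$ satisfying $|F|\lesssim|\sin\theta|$ (for instance $F=(\sqrt{\Delta_\theta})'\propto\sin\theta\cos\theta$ in the first case, and $F=-2\sin\theta$ for the $\cos\theta$-piece of $M_0$, while $iM_-mr$ commutes outright with $\slD^2$ by \Cref{lemma:comms}). \Cref{lemma:B norm} then makes the $B^{-1/2}(\cdot)B^{-1/2}$-sandwich bounded, by pairing $B^{-1/2}F(\partial_\theta+\cot\theta/2)$ with $RB^{-1/2}$.

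The genuinely delicate term is $[H_1,\slD^2]$. Writing $H_1 = G\Gamma_x + \tilde G\Gamma_y(\lambda D_x + \tfrac{\Lambda a}{3}D_\varphi)$ with $G=i\Delta_\theta'/(4\sqrt{\Delta_\theta})$ and $\tilde G = a\sin\theta/\sqrt{\Delta_\theta}$, the raw Leibniz expansion produces
\[
[G\Gamma_x,\slD^2] = -2iG\,\Gamma_y\,\frac{\cot\theta}{\sin\theta}D_\varphi + 2G'(\partial_\theta+\cot\theta/2)\Gamma_x + G''\Gamma_x,
\]
and an analogous expression for the $\tilde G$-piece. Taken term by term this is not admissible: the first summand carries the factor $\cot\theta/\sin\theta$, and one checks that $G'(0),\tilde G'(0)\neq 0$, so \Cref{lemma:B norm} does not apply directly. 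My plan is to invoke the algebraic identity
\[
\Gamma_y\,\frac{D_\varphi}{\sin\theta} = \slD - i\Gamma_x(\partial_\theta+\cot\theta/2),
\]
which follows immediately from \eqref{eq:dirac op}, to rewrite the singular term as a bounded-coefficient multiple of $\cot\theta\cdot\slD$ plus an extra $(\partial_\theta+\cot\theta/2)\Gamma_x$-contribution. The two $(\partial_\theta+\cot\theta/2)\Gamma_x$-pieces then combine with the single coefficient $2(G'-G\cot\theta)$, and a direct computation using $\Delta_\theta=1+\tfrac{\Lambda a^2}{3}\cos^2\theta$ exposes the crucial cancellation
\[
G' - G\cot\theta = \frac{i\Lambda a^2\sin^2\theta}{6\,\Delta_\theta^{3/2}},\qquad \tilde G' - \tilde G\cot\theta = \frac{\Lambda a^3\sin^2\theta\cos\theta}{3\,\Delta_\theta^{3/2}},
\]
so that both combined coefficients vanish like $\sin^2\theta$ at the poles and in particular satisfy $|F|\lesssim|\sin\theta|$, bringing them into the scope of \Cref{lemma:B norm}. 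The residual $D_x, D_\varphi$ coming from the $\tilde G$-piece of $H_1$ commute with $B$ (hence with $B^{-1/2}$) and are absorbed using $\|D_xB^{-1/2}\|,\|D_\varphi B^{-1/2}\|\le 1$.

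The main obstacle is precisely this algebraic rearrangement: had one relied only on the raw term-by-term bounds, the estimates would break down at the poles, and it is the specific form of $\Delta_\theta$ together with that of $\slD$ in the Kerr--Newman--(A)dS geometry that conspire to make the cancellation $G'-G\cot\theta\propto\sin^2\theta$ happen. Once this is in place, the rest is systematic bookkeeping of the kind already deployed in \Cref{lemma:B norm}, producing a $\tau$-uniform bound on $\|B^{-1/2}[\sqrt{\Delta_\theta}\slD + H_1 + M_0,\slD^2]B^{-1/2}\|_{\mathcal B(\H)}$; multiplication by $\tilde f(r(\tau))$ and \eqref{eq:f tilde asymptotics} then yield the claimed $L^1(\R_\tau;\mathcal B(\H))$-membership.
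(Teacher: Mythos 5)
Your proposal is correct and follows essentially the same route as the paper's proof: reduce $\comm{H(\tau)}{B}$ to $\f\,\comm{\sqrt{\Delta_\theta}\slD+H_1+M_0}{\slD^2}$ after noting that $\comm{H_0}{B}=0$ and that nothing else sees $D_x^2$, bound the $\Bm(\cdot)\Bm$-sandwich uniformly in $\tau$ via \Cref{lemma:comms} and \Cref{lemma:B norm} (with the $\abs{F}\lesssim\abs{\sin\theta}$ mechanism), and conclude from the integrability of $\f$. The only real deviation is the $\Gamma_x\, i\Delta_\theta'/(4\sqrt{\Delta_\theta})$ term of $H_1$: the paper avoids any cancellation there by writing its commutator with $\slD^2$ as $\slD\,\tilde C_{3a}+\tilde C_{3a}\,\slD$ with $\tilde C_{3a}=\comm{\Gamma_x\, i\Delta_\theta'/(4\sqrt{\Delta_\theta})}{\slD}$, whose coefficients are already bounded because $\Delta_\theta'\propto\sin\theta\cos\theta$, whereas you expand with Leibniz and then re-absorb the singular $\cot\theta\,\partial_\varphi/\sin\theta$ piece into $\slD$, exploiting the (correctly computed) cancellation $G'-G\cot\theta\propto\sin^2\theta/\Delta_\theta^{3/2}$ — the same regrouping-into-$\slD$ trick the paper itself uses for the $\Gamma_y$ piece ($C_{3b}$) — so both computations are valid and lead to the same uniform bound.
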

\begin{proof}
	 As $D(B^{1/2}) = D(H(\tau))= D $, we have for $\psi \in D$,
	\begin{align*}
	\inner{\Bm \comm{H(\tau)}{B}\Bm \psi}{\psi}_\H &= \inner{\comm{H(\tau)}{B}\Bm \psi }{\Bm \psi}_\H \\&= \inner{B^{1/2}\psi}{H(\tau)\Bm \psi}_\H - \inner{H(\tau)\Bm \psi}{B^{1/2}\psi}_\H
	\end{align*}
	which means the quadratic form is well-defined. 
	
	We start by calculating the commutator for each term of $H(\tau)$ in \eqref{eq:hamiltonian sh} using \Cref{lemma:comms} when required. Clearly,
	\begin{equation*}
		\comm{H}{B}=\comm{H}{\slD^2}.
	\end{equation*} Particularly, for $H_0$, we have
	\begin{align*}
	\comm{H_0}{\slD^2} = \left(D_x + \frac{a}{r^2+a^2} D_\varphi - \frac{qQr}{r^2+a^2}\right) \comm{-\Gamma_z}{\slD^2} = 0.
	\end{align*}
\begin{remark}\label{eq:H_0B=0}
The fact that $\Bm[H_0,B]\Bm$ vanishes identically and is not merely bounded is essential for our method to work. As $H_0$ is the only part of the Hamiltonian that is not multiplied by the exponentially decaying function $\f$, it would otherwise give at best an exponential bound instead of the constant appearing in \eqref{eq:Gronwallconstant} after using Grönwall's lemma, and consequently, competing with the exponential decay in the Estimate \eqref{eq:DU-decay-estimate}. Of course, this would have also rendered the statement of the current lemma incorrect.  
\end{remark}
Since $\f$ commutes with $\slD^2$, we have
\begin{equation*}
	\comm{H}{\slD^2}=\f \left(\comm{\sqrt{\Delta_\theta} \slD}{\slD^2}+\comm{M_0}{\slD^2} +\comm{H_1}{\slD^2}\right)=:\f(C_1+C_2+C_3).
\end{equation*} 
We therefore calculate:
	\begin{align*}
	C_1:=\comm{\sqrt{\Delta_\theta} \slD }{\slD^2} &=
	\comm{\sqrt{\Delta_\theta}}{\slD^2} \slD =
	\left(\frac{\Delta_\theta'}{\sqrt{\Delta_\theta}}\left(\partial_\theta + \frac{\cot\theta}2 \right) + \left(\sqrt{\Delta_\theta}\right)''\right) \slD,
	\end{align*}
as well as
	\begin{align*}
	C_2:=\comm{M_0}{\slD^2} &= ma M_+ \comm{\cos\theta}{\slD^2} = -ma M_+ \left(2\sin\theta\left(\partial_\theta + \frac{\cot\theta}2 \right) + \cos\theta \right).
	\end{align*}
To obtain $\comm{H_1}{\slD^2}$, we split the commutator into two parts. Set
	\begin{align*}
		C_3:=\comm{H_1}{\slD^2} &= \comm{\Gamma_x \frac{i \Delta_\theta'}{4\sqrt{\Delta_\theta}}}{\slD^2} + \left(a\lambda D_x + \frac{\Lambda a^2}{3} D_\varphi\right)\bigg[\frac{a\sin\theta}{\sqrt{\Delta_\theta}}  \Gamma_y ,{\slD^2}\bigg]\\
		&=:C_{3a}+\left(a\lambda D_x + \frac{\Lambda a^2}{3} D_\varphi\right)C_{3b}.
	\end{align*}
	The first part can be written as 
	\begin{equation}\label{eq:gamma x term}
		C_{3a}=
		\slD\tilde{C}_{3a} + \tilde{C}_{3a}\slD,
	\end{equation}
where
	\begin{align*}
	\tilde{C}_{3a}:=\comm{\Gamma_x \frac{i \Delta_\theta'}{4\sqrt{\Delta_\theta}}}{\slD} = \frac14 \left( \left(\frac{{\Delta_\theta}'}{\sqrt{\Delta_\theta}}\right)' + 2i \Gamma_{zz} \frac{{\Delta_\theta}'}{\sqrt{\Delta_\theta}} \frac{\partial_\varphi}{\sin\theta} \right).
	\end{align*}
For the second part, we have 
\begin{align*}
	C_{3b}&= \Bigg[\hspace{-4 pt}\Bigg[2 \, \Gamma_x \frac{\cot\theta}{\sqrt{\Delta_\theta}} \partial_\varphi + 2\Gamma_y \frac{\cos\theta}{\sqrt{\Delta_\theta}}\left(\partial_\theta + \frac{\cot\theta}2 \right) \\
	&\hspace{3.27cm} + 2\Gamma_y \left(\frac{1}{\sqrt{\Delta_\theta}}\right)' \sin\theta \left(\partial_\theta + \frac{\cot\theta}2 \right)  + \Gamma_y \left(\frac{\sin\theta}{\sqrt{\Delta_\theta}}\right)'' \Bigg]\hspace{-4 pt}\Bigg]
	\\
	&=\Bigg[\hspace{-4 pt}\Bigg[ \frac{-2i\cos\theta}{\sqrt{\Delta_\theta}}\Gamma_{zz} \left(-\Gamma_y \frac{1}{\sin\theta} \partial_\varphi + \Gamma_x \left(\partial_\theta + \frac{\cot\theta}2 \right) \right)\\
	&\hspace{3.12cm}+ \Gamma_y \left(2 \sin\theta \left(\frac{1}{\sqrt{\Delta_\theta}}\right)' \left(\partial_\theta + \frac{\cot\theta}2 \right)  + \left(\frac{\sin\theta}{\sqrt{\Delta_\theta}}\right)'' \right) \Bigg]\hspace{-4 pt}\Bigg]
	\\
	&=\Bigg[\hspace{-4 pt}\Bigg[ \frac{-2\cos\theta}{\sqrt{\Delta_\theta}}\Gamma_{zz} \slD + \Gamma_y \left(2 \sin\theta \left(\frac{1}{\sqrt{\Delta_\theta}}\right)' \left(\partial_\theta + \frac{\cot\theta}2 \right)  + \left(\frac{\sin\theta}{\sqrt{\Delta_\theta}}\right)'' \right) \Bigg]\hspace{-4 pt}\Bigg].
	\end{align*}
	
	We now estimate each of the above commutator terms sandwiched by $\Bm$ to obtain the required bound on 
	$$\norm{B^{-1/2}\comm{H(\tau)}{B}B^{-1/2}}_{\mathcal{B}(\H)}.$$ 
	Recall that $\Delta_\theta \eqsim 1$ and note that $\abs{\Delta_\theta''} \lesssim 1$, in addition to $\Delta_\theta'$ being admissible as $F$ in \Cref{lemma:sin theta} since $\abs{\Delta_\theta'} \lesssim \abs{\sin\theta}$. Finally, we will  repeatedly use \Cref{lemma:B norm} and the unitarity of the $\Gamma$ matrices. Let $\psi\in\H$, from the above calculations, we have:
	
\noindent First,
	\begin{align*}
	\Bnorm{\f C_1}{\psi}_\H&\lesssim\f \, \Bnorm{\sin\theta \left(\partial_\theta + \frac{\cot\theta}2 \right) \slD}{\psi}_{\H}\\
	&\hspace{3cm}+ \f \, \Bnorm{\left(\sqrt{\Delta_\theta}\right)''\slD}{\psi}_{\H} \\
	&\lesssim \f \norm{\psi}_{\H}.
	\end{align*}
Then, 
	\begin{align*}
	& \Bnorm{\f C_2}{\psi}_\H\lesssim \f \norm{ \left(2\sin\theta\left(\partial_\theta + \frac{\cot\theta}2 \right) + \cos\theta \right)\Bm\psi}_{\H} \lesssim \f \norm{\psi}_{\H}.
	\end{align*}
Next for $C_{3a}$, note that 
\begin{align*}
	\norm{\tilde{C}_{3a}\Bm}\lesssim\norm{\psi}_\H , \qquad
\norm{	\Bm	\tilde{C}_{3a}}\lesssim\norm{\psi}_\H ,	
\end{align*}
and therefore,
	\begin{align*}
	\Bnorm{\f C_{3a}}{\psi}_\H\lesssim&\f\left(  \Bnorm{\slD\tilde{C}_{3a}}{\psi}_\H + \Bnorm{\tilde{C}_{3a}\slD}{\psi}_\H \right)\\
	\lesssim& \f \norm{\psi}_{\H}.
	\end{align*}
Finally,
	\begin{align*}
\Bnorm{\f \left(a\lambda D_x + \frac{\Lambda a^2}{3} D_\varphi\right)C_{3b}}{\psi}_\H\lesssim \f \norm{C_{3b}\Bm\psi}_\H\lesssim \f \norm{\psi}_\H.
	\end{align*}
%	and similarly for the converse terms of \eqref{eq:gamma x term};
%	\begin{align*}
%	& \Bnorm{a\lambda \f D_x \frac{-2\cos\theta}{\sqrt{\Delta_\theta}}\Gamma_{zz} \slD}{\psi}_{\H} \lesssim \f \Bnorm{D_x \Gamma_{zz} \slD}{\psi}_{\H} \lesssim \f \norm{\psi}_{\H},
%	\end{align*}
%	and similarly for $D_\varphi$ in lieu of $D_x$;
%	\begin{align*}
%	& \Bnorm{a\lambda \f D_x \Gamma_y \left(2 \sin\theta \left(\frac{1}{\sqrt{\Delta_\theta}}\right)' \left(\partial_\theta + \frac{\cot\theta}2 \right)  + \left(\frac{\sin\theta}{\sqrt{\Delta_\theta}}\right)'' \right)}\psi_{\H} \lesssim \\ 
%	&
%	\f \Bnorm{D_x \Gamma_y \sin\theta \left(\partial_\theta + \frac{\cot\theta}2 \right) }\psi_{\H} + \f \Bnorm{D_x \Gamma_y}\psi_{\H} \lesssim \f \norm{\psi}_{\H},
%	\end{align*}
%	and similarly for the $D_\varphi$ term.
 Hence, adding all the terms together, we have shown that
	\begin{equation*}\label{eq:bound on com}
	\Bnorm{\comm{H(\tau)}{B}}{\psi}_{\H} \lesssim \f \norm{\psi}_{\H}.
	\end{equation*}
Integrating over $\tau\in\R$ and recalling that $\f(r(\tau))\in L^1(\R_\tau)$, we get
	\begin{equation*}
	\int_\R \Bnorm{\comm{H(\tau)}{B}}{\psi}_{\H} \diff \tau \lesssim \norm{\psi}_{\H},
	\end{equation*}
as promised.
\end{proof}

We are in position to prove \Cref{prop:slD u bounded}.

\begin{proof}[Proof of \Cref{prop:slD u bounded}]
%	\label{lemma:gronwall}
	We will first show that the evolution system $\mathcal{U}(\tau,s)$ satisfies
	\begin{equation*}
	\norm{B^{1/2}\mathcal{U}(\tau, s) \Bm}_{\B(\H)} \lesssim 1.
	\end{equation*}
	for all $\tau, s \in \R$, from which the claim of the Proposition can be proven. 
	
	To this aim we follow \cite{derezinski_scattering_1997}. Fix $s \in \R$. Let $\varepsilon >0$ and $\psi \in D$. Put
	\begin{equation*}
	k_{\varepsilon}(\tau):= \norm{B^{1/2}\left(1+\varepsilon B\right)^{-1/2} \mathcal{U}(\tau, s) \psi}_{\H}^2.
	\end{equation*}
Note that since $\mathcal{U}(\tau,s)$ is unitary,
\begin{align*}
		k_{\varepsilon}(\tau)&=\frac{1}{\varepsilon}\inner{(-1+1+\varepsilon B)\left(1+\varepsilon B\right)^{-1/2} \mathcal{U}(\tau, s) \psi}{\left(1+\varepsilon B\right)^{-1/2} \mathcal{U}(\tau, s) \psi}_\H \\ &=\frac1\varepsilon\norm{\psi}_{\H}^2 - \frac1\varepsilon\norm{\left(1+\varepsilon B\right)^{-1/2} \mathcal{U}(\tau, s) \psi}_{\H}^2.
\end{align*}
	\Cref{lem:property of B} allows us to take $\tau$ derivatives. Therefore, we have
	\begin{align*}
	\frac{\diff}{\diff \tau} k_{\varepsilon}(\tau) &= - \frac1\varepsilon \frac{\diff}{\diff \tau} \inner{\,\mathcal{U}(s,\tau) \left(1+\varepsilon B\right)^{-1} \mathcal{U}(\tau, s) \psi}{\psi}_\H
	\\
	&= -\frac1\varepsilon \inner{\,\mathcal{U}(s,\tau) \comm{iH(\tau)}{\left(1+\varepsilon B\right)^{-1}} \mathcal{U}(\tau, s) \psi}{\psi}_\H
	\\
	&= \inner{\,\mathcal{U}(s,\tau) {\left(1+\varepsilon B\right)^{-1}} \comm{B}{iH(\tau)} {\left(1+\varepsilon B\right)^{-1}} \mathcal{U}(\tau, s)\psi}{\psi}_\H,
	\end{align*}
where the last equality follows from 
\begin{equation*}
	\comm{B}{H}=\frac{1}{\varepsilon}\comm{1-1+\varepsilon B}{H}.
\end{equation*}
	 Cauchy-Schwarz then gives us
	\begin{align*}\hspace{-0.4cm}
	\abs{\frac{\diff}{\diff \tau} k_{\varepsilon}(\tau)} &\le \abs{\inner{\Bm \comm{H(\tau)}{B} \Bm B^{1/2} {\left(1+\varepsilon B\right)^{-1}} \mathcal{U}(\tau, s)\psi}{B^{1/2} {\left(1+\varepsilon B\right)^{-1}} \mathcal{U}(\tau, s)\psi}_\H}
	\\
& \le  \norm{\Bm \comm{H(\tau)}{B} \Bm}_{\B(\H)}\norm{(1+\varepsilon B)^{-1}}_{\B(\H)} k_{\varepsilon}(\tau) 
%\lesssim \f\left(r(\tau)\right) k_\varepsilon(\tau), 
\quad \forall \tau \in \R.
	\end{align*}
 Hence, by Grönwall's lemma evaluated at $s$,
	\begin{align}
	k_{\varepsilon}(\tau) &\leq \exp\left(\norm{(1+\varepsilon B)^{-1}}_{\B(\H)}\int_{s}^{\tau}\norm{\Bm \comm{H(\zeta)}{B} \Bm}_{\B(\H)} \diff \zeta\right) k_\varepsilon(s) \nonumber\\ &\lesssim  \norm{B^{1/2}\left(1+\varepsilon B\right)^{-1/2} \psi}_{\H}^2, \label{eq:Gronwallconstant}
	\end{align}
by \Cref{lem:H-B com} and the fact that $\mathcal{U}(s,s)=\mathrm{Id}_\H$. Now as $\varepsilon \to 0$, we have
	\begin{align*}
	\norm{B^{1/2}\mathcal{U}(\tau, s) \psi}_{\H} \lesssim \norm{B^{1/2} \psi}_{\H}.
	\end{align*}
	From the fact that $\Bm$ is a bijection from $\H$ onto $D=D(B^{1/2})$, we have shown that for all $\tilde\psi\in\H$,
	\begin{align*}
	\norm{B^{1/2}\mathcal{U}(\tau, s) \Bm \tilde\psi}_{\H} \lesssim \norm{ \tilde\psi}_{\H}.
	\end{align*}
Using this, together with \Cref{lemma:B norm} and \eqref{eq:norm equiv Bm}, we have
	\begin{align*}
	\norm{\slD \, \mathcal{U}(\tau, s) \psi}_{\H} 
	= \norm{\slD \Bm B^{1/2} \, \mathcal{U}(\tau, s) \psi}_{\H}
	&\lesssim \norm{ B^{1/2} \, \mathcal{U}(\tau, s) \Bm B^{1/2} \psi}_{\H}
	\\
	&\lesssim \norm{B^{1/2} \psi}_{\H} \eqsim \norm{\psi}_{\Hsp{1}(\Sigma)},
	\end{align*}
proving \Cref{prop:slD u bounded}.
\end{proof}

\appendix

\section{Conditions on the KN(A)dS parameters}\label{appendix:Config}

We give the conditions on the free parameters $M,Q,a$ and $\Lambda$ of the KN(A)dS family of metrics equivalent to fulfilling Hypotheses \ref{hypo:1}, namely, parts (\textit{ii}), (\textit{iv}) and (\textit{v}) of \Cref{sec:configurations}, the rest have been fully discussed in that section.

%
%\aux{\textbf{Non-black hole spacetimes ($M=0$).} We first analyze the special case $M=0$,  the horizon function becomes
	%\begin{equation}\label{eq:hor fn M=0}
	%	\Delta_r(r)=\left(1-\frac{\Lambda r^2}{3}\right)(r^2+a^2)+\left(1+\frac{\Lambda a^2}{3}\right)^2Q^2.
	%\end{equation}
	%Note that \eqref{eq:hor fn M=0} is, in effect, a quadratic in $r^2$. Hence if $r_0$ is a root of $\Delta_r$, so is $-r_0$. When $\Lambda \le 0$, we have that $\Delta_r \ge 0 $ for all $r \in \R^+$. When $\Lambda > 0$, the fact that there are at most two positive distinct roots means that either $\Delta(r) > 0$ between these two roots or that there do not exist two positive distinct roots, hence \textit{H2} of \Cref{hypo:2} fails for all values of $\Lambda$.
	%Thus the case $M=0$ is not in congruence with our hypotheses.
	%We henceforth assume that $\aux{M\ne 0,\Lambda \in \R},$ and $a,Q\in \R$.}

Recall,
\begin{equation}\label{horizonfntinAppendix}
	\Delta_r(r)=\left(1-\frac{\Lambda r^2}{3}\right)(r^2+a^2)-2Mr+\left(1+\frac{\Lambda a^2}{3}\right)^2Q^2.
\end{equation}
Suppose first $\Lambda\le0$. We may immediately disregard the $M<0$ case as it renders $\Delta_r \ge 0$ for all $r \in \R^+$ which is in conflict with \textit{(h1)} of \Cref{hypo:1}. What is left to analyze are the cases $$(M >0 , \Lambda < 0) \qquad \text{and} \qquad (M \ne 0 , \Lambda > 0).$$ Set $\Lambda = 3\epsilon/\ell^2$ with $\epsilon = \pm 1$ depending on the sign of $\Lambda$. Now set $K := a^2 + \lambda^2 Q^2 \ge 0$ and
\begin{equation}
	P(r) := -\epsilon\ell^2 \Delta_r(r) = r^4 + Ar^2 + Br + C	
\end{equation}
with
\begin{align*}
	A = A(\epsilon) &= \ell^2 - \epsilon a^2, & A(-1) > 0,\\
	B = B(\epsilon) &= 2 \epsilon M\ell^2, & B(-1) < 0,  && B(+1) \ne 0,\\
	C = C(\epsilon) &= - \epsilon \ell^2 K, & C(-1) \ge 0, && C(+1) \le 0.
\end{align*}
The discriminant $\Delta_P$ of $P(r)$ is given by
\begin{equation}\label{discrDeltaP 2}
	\Delta_P=-27B^4+A(144C-4A^2)B^2+256C^3-128A^2C^2+16A^4C,
\end{equation}
which may be rewritten as a quadratic polynomial on $X := B^2$ as
\begin{equation}
	\Delta_P(X):=\Delta_P=-27X^2+\beta X+\gamma ,
\end{equation}
with
\begin{align*}
	\beta = \beta(\epsilon) &= 4A(36C-A^2),\\
	\gamma = \gamma(\epsilon) &= 16C(A^2-4C)^2, & \gamma(-1) \ge 0, && \gamma(+1) \le 0.
\end{align*}
It is then a straightforward calculation to see that the discriminant of $\Delta_P(X)$ is 
\begin{equation}\label{eq:discriminant of discriminant}
	\delta = \delta(\epsilon) = 16(A^2+12C)^3,
\end{equation}which is readily seen to be positive when $\epsilon = -1$. 

It is a well-known result (see, e.g. \cite{rees_graphical_1922}) that the three possible cases regarding the sign of $\Delta_P$ are:
\begin{enumerate}
	\item $\Delta_P<0$: the roots are simple, two real and one complex conjugate pair;
	\item $\Delta_P>0$: the roots are simple:
	\begin{enumerate}
		\item $A<0 \text{ and } C<\frac{A^2}{4}$: four  real roots;
		\item $A\ge 0 \text{ or } C>\frac{A^2}{4}$: two complex conjugate pairs;
	\end{enumerate}
	\item $\Delta_P=0$: at least one double root:
	\begin{enumerate}
		\item $A<0 \text{ and } -\frac{A^2}{12}<C<\frac{A^2}{4}$: the roots are real with two simple and one double;
		\item $A<0 \text{ and } C=\frac{A^2}{4}$: two double real roots;
		\item $A<0 \text{ and } C=-\frac{A^2}{12}$: the roots are real with one simple and one cubic;
		\item $A>0 \text{ and } C=\frac{A^2}{4} \text{ and } B=0$: two double roots, conjugates of each other;
		\item $\left(A>0 \text{ and } \left(C\ge0 \text{ and } B \neq 0\right)\right) \text{ or }\left(A=0 \text{ and } C>0\right)$ or \\ $\left(A<0 \text{ and } 4C>A^2\right) $: one double real root and one complex conjugate pair;
		\item $A=0 \text{ and } C=0$: a quartic real root.
	\end{enumerate}
\end{enumerate}

Now note that in the case $\Lambda >0$, it is necessary for $\Delta_r$ to have three distinct positive roots for \Cref{hypo:1} to hold due to the negative asymptotics of $\Delta_r$, and therefore all four roots must be real. While for $\Lambda<0$, \Cref{hypo:1} can, a priori,  hold if the largest two positive roots are simple or if there is a third larger double root.  
From here, in the case $\Lambda < 0$, the only root configurations suitable to our hypotheses are 1., 2.(a) and 3.(a). However, we have $A(-1)>0$ which immediately narrows our analysis to 1. For $\Lambda > 0$, 2.(a) is the only case compatible with our hypotheses. Note that $ 4C(+1) < A^2(+1) $ already holds as $C(+1) \le 0$. Hence, the requirements for \Cref{hypo:1} are
\begin{align}\label{eq:force}
	(i) \quad\Delta_P < 0 \quad \text{for}\quad \Lambda < 0; \qquad \text{and} \qquad &(ii)\quad \Delta_P > 0 \quad \text{and} \quad A < 0, \quad \text{for} \quad \Lambda > 0 .
\end{align}
Denote by $X_\pm$ the roots of $\Delta_P(X)$ whose expressions are given explicitly by
\begin{equation}\label{eq:Xpm}
	X_\pm=\frac{2}{27}\left(-A^3+36AC\pm (A^2+12C)^{3/2}\right).
\end{equation}
From the conditions in \eqref{eq:force} on the discriminant $\Delta_P$, we see that $X_\pm \in \R$ since in (\textit{i}), $\delta(-1)>0$, while in (\textit{ii}), $\Delta_P<0$ otherwise. Hence, (\ref{eq:force}) becomes
%and $\Delta_P(X)>0$ if and only if $\delta(+1)>0$ and $X_-<X<X_+$. 
\begin{equation}\label{eq:force 2}
	\begin{aligned}
		(i)&\quad X < X_- \quad \text{or} \quad X > X_+  \quad \text{for}\quad \Lambda < 0, \\
		(ii) &\quad \delta(+1)>0 \quad \text{and}\quad X_- < X < X_+ \quad \text{for} \quad \Lambda > 0 .
	\end{aligned}
\end{equation}

By Vieta's formulae for the quadratic polynomial $\Delta_P(X)$, we have that 
\begin{equation}\label{eq:vieta Xpm}
	X_- + X_+ = \frac{\beta}{27}  \quad \text{and} \quad X_- X_+ = -\frac{\gamma}{27},
\end{equation}
i.e., $X_-X_+\le 0$ for $\epsilon = -1$, therefore $X_- \le 0$ and so $X<X_-$ is not attained because $X=B^2\in\R^+$. A similar argument for $\epsilon = +1$ yields that $X_+ > X_- \ge 0$.

%The rest of condition \eqref{eq:force 2} is fulfilled by requiring
%\begin{align}\label{eq:force 3}
%4M^2\ell^4 > X_+  \quad \text{for}\quad \Lambda < 0 \qquad \text{and} \qquad & X_- < 4M^2\ell^4 < X_+ \quad \text{for} \quad \Lambda > 0 
%\end{align}

%Therefore, for $\Lambda>0$,
\begin{lemma}[$\Lambda>0$]\label{lem:Lambda pos vieta}
	Let $P(r)=r^4+Ar^2+Br+C$  with $A\in\R$, $B>0$ and $C<0$, and assume that its discriminant $\Delta_P > 0$. Then exactly one root is negative and the others are positive. 
\end{lemma}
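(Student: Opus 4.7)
The plan is to split the proof into three steps: first upgrade the hypotheses to obtain $A<0$, then invoke the root-classification already collected in the excerpt to conclude four real roots, and finally count signs via Descartes' rule of signs.

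For the first step, I would show that the stated hypotheses $B>0$, $C<0$, $\Delta_P>0$ already force $A<0$. Writing $\Delta_P$ as a quadratic polynomial in $X:=B^2>0$, namely
\[
\Delta_P(X) \;=\; -27X^2 + \beta X + \gamma, \qquad \beta = 4A(36C - A^2),\quad \gamma = 16C(A^2-4C)^2,
\]
a sign-chase under the contrary assumption $A\ge 0$ gives $36C-A^2<0$, hence $\beta\le 0$; and $\gamma<0$ since $C<0$ and the square is strictly positive. Therefore each summand of $\Delta_P(X)$ is nonpositive with at least one strict, so $\Delta_P(X)<0$ for $X>0$, contradicting $\Delta_P>0$. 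Thus $A<0$.

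With $A<0$ in hand and $C<0<A^2/4$, the hypotheses of case 2(a) in the classification listed before the lemma are satisfied, and accordingly $P$ has four distinct real roots. It then remains to count how many are positive and how many negative, which I plan to do via Descartes' rule of signs. The nonzero coefficients of $P(r)=r^4+Ar^2+Br+C$ come with signs $(+,-,+,-)$, giving three sign changes, hence $1$ or $3$ positive real roots. For $P(-r)=r^4+Ar^2-Br+C$ the nonzero coefficients have signs $(+,-,-,-)$, giving exactly one sign change and therefore exactly one negative real root. Combined with the four real roots obtained in the previous step, this forces three positive roots and one negative root, as claimed.

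The main obstacle is the first step: the conclusion $A<0$ is not assumed, yet it is needed both to enter case 2(a) of the classification and to make the sign pattern in Descartes' rule come out as above. Once that sign-analysis of $\Delta_P$ as a quadratic in $B^2$ is carried out, the remaining ingredients are direct applications of facts already recorded in the excerpt.
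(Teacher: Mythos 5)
Your proof is correct, but it follows a genuinely different route from the paper's. The paper's own proof asserts that the four roots are real and simple by appealing to case 2.(a) of the classification and then works purely with Vieta's formulae: setting $s=r_3+r_4$ and $p=r_3r_4>0$ after choosing $r_1<0<r_2$, it reduces the third symmetric function to $\tfrac{C}{p}s-ps=-B$, and since $C<0<p$ and $B>0$ this forces $s>0$, hence $r_3,r_4>0$. You instead (i) prove that the hypotheses force $A<0$ by the sign analysis of $\Delta_P(X)=-27X^2+\beta X+\gamma$ at $X=B^2>0$, (ii) invoke case 2.(a) for the four distinct real roots, and (iii) count signs by Descartes' rule (three sign changes for $P(r)$, exactly one for $P(-r)$, and no zero root since $P(0)=C\neq0$). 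Your step (i) is a genuine addition: the lemma's hypotheses do not include $A<0$, and the paper's proof cites 2.(a) ``since $C<0$'' even though 2.(a) also requires $A<0$ --- in the paper this is covered only because $A<0$ is imposed separately in condition (ii) of \eqref{eq:force} (or, alternatively, one could note that two complex conjugate pairs would give product of roots $C>0$, contradicting $C<0$). So your argument is self-contained where the paper's is contextual. What each approach buys: the paper's Vieta computation is elementary and identifies directly which sum of roots is positive, and it parallels the companion argument used for $\Lambda<0$ in \Cref{lem:Lambda neg vieta}; your Descartes count is quicker once realness of the roots is secured, at the cost of importing the rule of signs and the preliminary derivation of $A<0$.
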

\begin{proof}
	Let $r_1,r_2,r_3,$ and $r_4$ be the roots of $P(r)$, which are real and simple by 2.(a) since $C<0$. By Vieta's formulae: 
	\begin{align}
		&r_1+r_2+r_3+r_4=0,\label{Vieta1}\\
		&r_1r_2r_3+r_1r_2r_4+r_1r_3r_4+r_2r_3r_4=-B,\label{Vieta2}\\
		&r_1r_2r_3r_4=C. \label{Vieta3}
	\end{align}
	Put $p=r_3r_4$ and $s=r_3+r_4$. From \eqref{Vieta3} we can assume without loss of generality that $r_1<0$ and $r_2>0$, and thus $p>0$ and $r_1r_2=C/p$, and from \eqref{Vieta1}, $r_1+r_2=-s$. Therefore \eqref{Vieta2} becomes:
	\begin{equation}\label{eq:Vieta cons}
		\frac{C}{p}s-ps=-B,
	\end{equation}
	and hence $s>0$, proving the claim.
\end{proof}
\begin{remark}
	From the proof of the previous lemma, we recall that the case $M<0$ in de Sitter yields a positive $B$. Then \eqref{eq:Vieta cons} forces $s<0$, i.e. three out of four real roots are negative. Thus we disregard the case $M<0$.
\end{remark}

{
	%And for $\Lambda<0$,	
	\begin{lemma}[$\Lambda<0$]\label{lem:Lambda neg vieta}
		Let $P(r)=r^4+Ar^2+Br+C$  with $A>0$, $B<0$ and $C>0$ and assume its discriminant $\Delta_P < 0$. Then $P(r)$ admits two simple \underline{positive} roots and a pair of complex conjugated roots.
	\end{lemma}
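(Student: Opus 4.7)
The plan is to mirror the strategy of \Cref{lem:Lambda pos vieta}: factor $P$ over $\R$ as the product of two real quadratics, one carrying the two real roots and the other carrying the complex conjugate pair, and then read off the signs of the symmetric functions of the real roots from Vieta-style coefficient matching.

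Since $\Delta_P<0$ places us in case 1 of the classification recalled in the appendix (two simple real roots, one complex conjugate pair), there exist real numbers $s,p,u,v$ with $s^2-4p>0$ and $u^2-4v<0$ such that
\[
P(r)=(r^2-sr+p)(r^2-ur+v),
\]
where $s=r_1+r_2$, $p=r_1r_2$ for the real roots $r_1,r_2$, and $u=z+\bar z$, $v=z\bar z$ for the complex pair $z,\bar z$. Expanding and matching with $r^4+Ar^2+Br+C$ gives the four Vieta-type relations
\[
u=-s,\qquad p+v-s^2=A,\qquad s(v-p)=-B,\qquad pv=C.
\]

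The proof then reduces to a three-step sign argument. First, $u^2-4v<0$ forces $v>0$, and the relation $pv=C>0$ then forces $p>0$; in particular the two real roots share a common sign. Second, the distinctness $s^2>4p$ of the real roots combined with the strict non-reality $4v>u^2=s^2$ of the complex pair yields the chain $p<s^2/4<v$, and in particular $v-p>0$. Third, the hypothesis $B<0$ gives $s(v-p)=-B>0$; combined with $v-p>0$ this forces $s>0$. Since both $s$ and $p$ are positive, Vieta's formulas for $r^2-sr+p$ give $r_1,r_2>0$, as claimed.

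I do not expect any serious obstacle: the whole content sits in the brief chain $p<s^2/4<v$, which is exactly the place where the two strict inequalities coming from simplicity of the real roots and genuine non-reality of the complex pair both enter. Note that the hypothesis $A>0$ does not actually intervene in the argument, which is consistent with the fact that in the appendix setting one has $A(-1)=\ell^2+a^2>0$ automatically.
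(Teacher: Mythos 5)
Your proposal is correct, and it sits in the same Vieta-type sign-chasing framework as the paper's proof; the place where you genuinely diverge is the derivation of the key inequality $v-p>0$ (in the paper's notation, $\abs{z}^2 - r_+r_- > 0$). The paper obtains it by invoking the hypothesis $A>0$: its relation (Vieta~2) gives $-s^2+\abs{z}^2+p>0$, which combined with $s^2>2p$ (a consequence of $p>0$, i.e.\ $r_\pm\neq 0$) forces $\abs{z}^2>p$. You instead factor $P$ into the two real monic quadratics and use the two strict discriminant inequalities -- $s^2>4p$ from the distinctness of the real roots (guaranteed by $\Delta_P<0$) and $u^2<4v$ from the non-reality of the conjugate pair -- to get the chain $p<s^2/4<v$, after which $s(v-p)=-B>0$ forces $s>0$ exactly as in the paper. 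The trade-off: your route never touches $A$, so it proves a slightly stronger statement (the hypothesis $A>0$ is indeed superfluous, as your closing remark observes, and in the KN-AdS application it holds automatically since $A(-1)=\ell^2+a^2>0$), while the paper's route uses only the weaker inequality $s^2>2p$ but must feed in the sign of $A$. Both arguments are complete and of comparable length; yours makes more transparent that the conclusion is driven solely by $B<0$, $C>0$ and $\Delta_P<0$.
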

	\begin{proof}
		Since $\Delta_P$ < 0, it immediately follows that $P(r)$ admits two distinct real roots ($r_- < r_+$, say) and two complex conjugated roots ($z$ and $\bar z$, say). Then, Vieta's formulae yield
		\begin{align}
			&r_++r_-+z+\bar z=0,\label{Vieta 1}\\
			&r_-r_+ + r_-z + r_- \bar z + r_+ z + r_+ \bar z + z \bar z = A, \label{Vieta 2} \\
			&r_+r_-z+r_+r_-\bar z+r_+z\bar z+r_-z\bar z=-B,\label{Vieta 3}\\
			&r_+r_-z\bar z=C. \label{Vieta 4}
		\end{align}
		We write $z := x + iy$ (note that $y\ne0$) and set $s:=r_+ + r_-, p:= r_-r_+$. Then Vieta's formulae imply
		\begin{align*}
			&2x + s = 0, \\
			&2xs + \abs{z}^2 + p > 0, \\
			&\abs{z}^2 s + 2xp > 0, \\
			&\abs{z}^2p > 0,
		\end{align*}
		and thus,
		\begin{align}
			&-s^2 + \abs{z}^2 + p > 0, \label{eq:v1} \\
			&s(\abs{z}^2-p)> 0, \label{eq:v2}\\
			&p > 0.	\label{eq:v3}
		\end{align}
		Note that $s^2 = r_+^2 + 2p + r_-^2 > 2p$ since $r_\pm \ne 0$ by \eqref{eq:v3}. Hence $p-s^2 < -p$. Coupled with \eqref{eq:v1}, this inequality yields $\abs{z}^2-p > 0$, whence $s>0$.
	\end{proof}
}
For \Cref{lem:Lambda pos vieta,,lem:Lambda neg vieta}, we need $C\ne0$, hence, we do not allow $K=0$.\footnote{Note that $K=0$ corresponds to a Schwarzschild-type black hole.}
In summary, we  have shown that:
\begin{proposition}\label{prop:distinct horizons}
	If $M>0, \Lambda \in \R$ and $K>0$, then the horizon function $\Delta_r(r)$ satisfies Hypothesis \ref{hypo:1} if and only the following conditions on the free parameters hold:
	\begin{itemize}
		\item For $\Lambda < 0$
		\begin{align}
			& 36M^2 > \Lambda^2 X_+ .\label{Cond}
		\end{align}	
	\end{itemize}
	\begin{itemize}
		\item For $\Lambda > 0$
		\begin{align}
			&\Lambda^2 X_- < 36M^2 < \Lambda^2 X_+, \label{Cond3} \\ 
			& Q^2 < \frac{\Lambda^2a^4-42\Lambda a^2+9}{4\Lambda(\Lambda a^2+3)^2}, \label{Cond1} \\
			& \Lambda a^2 < 3.  \label{Cond2}
		\end{align}
	\end{itemize}
	
\end{proposition}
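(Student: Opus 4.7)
The plan is to recast \Cref{hypo:1} as a condition on the root configuration of the rescaled quartic $P(r)=-\epsilon\ell^2\Delta_r(r)=r^4+Ar^2+Br+C$ already introduced, apply the standard classification of quartics via the discriminant to single out the admissible patterns, and translate them into inequalities on $(M,Q,a,\Lambda)$. All the algebraic ingredients---the coefficients $A,B,C$, the auxiliary quadratic $\Delta_P(X)$ in $X=B^2$ with roots $X_\pm$, and its discriminant $\delta$---are already prepared above, so the proof essentially amounts to a case analysis assembled with the positivity \Cref{lem:Lambda pos vieta,lem:Lambda neg vieta}.

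The first step is to scan the eleven cases 1, 2.(a), 2.(b), 3.(a)--3.(f) and keep only those compatible with (h1)--(h2). The leading coefficient of $\Delta_r$ is $-\Lambda/3$. For $\Lambda>0$, one has $\Delta_r(0)=K>0$ while $\Delta_r(r)\to-\infty$, so (h1) combined with the simplicity in (h2) forces at least a third simple positive root past $r_+$, which selects case 2.(a); the degenerate cases 3.(a)--3.(f) are eliminated by a direct sign-pattern inspection using that $\Delta_r$ flips sign across a simple root but not across a double one, so no placement of a multiple root can bound a strictly negative interval between two simple roots. For $\Lambda<0$, the sign $A(-1)=\ell^2+a^2>0$ rules out cases 2.(a) and 3.(a)--3.(c); case 3.(d) conflicts with $B(-1)\neq 0$, case 3.(f) with $C(-1)>0$, and case 3.(e) has a unique real (double) root, forcing $\Delta_r\ge 0$ everywhere and thus contradicting (h1). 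Only case 1 survives, and its sign pattern does yield a bounded negative interval. \Cref{lem:Lambda pos vieta,lem:Lambda neg vieta} then guarantee the positivity of the required real roots in each regime.

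It remains to convert the case-defining inequalities into the claimed conditions. Writing $X=B^2=4M^2\ell^4=36M^2/\Lambda^2$ and using~\eqref{eq:Xpm}: for $\Lambda<0$, $\Delta_P<0$ means $X\notin[X_-,X_+]$, but $\gamma(-1)\ge 0$ forces $X_-\le 0$ via Vieta's relations, so the condition collapses to $X>X_+$, i.e.~\eqref{Cond}. For $\Lambda>0$, the requirement $\Delta_P>0$ in the downward-opening quadratic $\Delta_P(X)$ forces $X_\pm$ to be real, equivalently $\delta(+1)=16(A^2+12C)^3>0$; substituting the explicit expressions for $A$ and $C$ in terms of $a$, $\ell$, $K$, and then using $\ell^2=3/\Lambda$ and $\lambda^2=(3+\Lambda a^2)^2/9$, unfolds the inequality $A^2+12C>0$ precisely into~\eqref{Cond1}, while $X_-<B^2<X_+$ becomes~\eqref{Cond3}; finally, the case 2.(a) requirement $A<0$ translates to $\ell^2>a^2$, i.e.~\eqref{Cond2}. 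The main delicacy lies in the case-elimination step rather than in the translation of inequalities: one must verify systematically that none of the degenerate patterns 3.(a)--3.(f) produces two simple positive roots bounding a strictly negative interval. Once that is settled, the sharpness of the quartic classification combined with the positivity lemmas delivers both necessity and sufficiency of the stated conditions in one stroke.
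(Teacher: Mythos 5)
Your proposal follows the paper's own proof almost verbatim: rescale $\Delta_r$ to the monic quartic $P$, classify its root pattern by the discriminant $\Delta_P$, discard every configuration except case~1 ($\Lambda<0$) and case~2.(a) ($\Lambda>0$), and translate the surviving inequalities using $X=B^2$, the roots $X_\pm$, the sign of $\delta$, and \Cref{lem:Lambda pos vieta,lem:Lambda neg vieta}. Your case-elimination step is slightly more explicit than the paper's (which just asserts the suitable cases), which is an improvement. One thing worth flagging: your translation ``$A<0$ iff $\ell^2>a^2$'' is indeed what yields \eqref{Cond2}, but it is consistent only with the direct expansion $A=a^2-\epsilon\ell^2$, not with the formula $A(\epsilon)=\ell^2-\epsilon a^2$ displayed in the paper; the latter appears to carry a sign typo at $\epsilon=+1$ (the two agree at $\epsilon=-1$, which is why $A(-1)>0$ still reads correctly). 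Since $\delta=16(A^2+12C)^3$ and $A^2+12C$ are even in $A$, \eqref{Cond1} is unaffected, but if one were to evaluate $X_\pm$ from \eqref{eq:Xpm} with the typo'd sign of $A$ at $\epsilon=+1$, the bounds in \eqref{Cond3} would come out with the wrong signs; so you should state the corrected $A$ explicitly rather than inherit it silently.
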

\begin{remark}
	We note that when $Q=0$, conditions \eqref{Cond1} and \eqref{Cond2} reduce to $a\sqrt{\Lambda}<\sqrt{3}(2-\sqrt{3})$, which is the way they are stated in \cite{borthwick_maximal_2018}. While when $a=0$ they reduce to $4\Lambda Q^2<1$, and we recover the condition originally found\footnote{In \cite{mokdad_reissner-nordstrom-sitter_2017} $\Lambda/3$ was replaced by $\Lambda$.} in \cite{mokdad_reissner-nordstrom-sitter_2017}. Similarly, condition \eqref{Cond3} agrees with the previous works \cite{borthwick_maximal_2018} (for $Q=0$) and \cite{mokdad_reissner-nordstrom-sitter_2017} (for $a=0$). Finally, when $a=0$, \eqref{Cond} with $K>0$ agrees with the condition given in\footnote{Again, in \cite{hafner_scattering_2021} $\Lambda/3$ was replaced by $\Lambda$.} \cite{hafner_scattering_2021}.
\end{remark}

\section{Spin coefficients}\label{sec:spin coeffs}
The spin coefficients \eqref{eq:spin coeffs} used throughout the paper are obtained in this section. As far as we are aware of, this is a novel calculation for cosmological Kerr spacetimes, and certainly for the tetrad $\{{e_i}^a\}_{i=1,2,3,4} = \mathbb{T}$ defined in \eqref{eq:tetrad}. In what follows, we will use $i, j, k$ for tetrad indices and $a, b, c$ for vector indices. Denote by
\begin{equation}\label{eq:lambda}
	\lambda_{ijk} = -\lambda_{kji} = \big(e_{j a,b} - e_{j b,a}\big) {{e_{i}}^a} {{e_{k}}^b}
\end{equation}
the connection coefficients $\lambda_{ijk}$ which then define the Ricci rotation coefficients
\begin{equation}\label{eq:gamma}
	\gamma_{ijk} = 
	\frac12 \bigg(
	\lambda_{ijk} + 
	\lambda_{kij} - 
	\lambda_{jki}
	\bigg).
\end{equation}
In the calculations to come, we repeatedly make use of the vector fields
\begin{gather*}
	v := \left(r^2+a^2\right) \partial_t + a \partial_\varphi = \frac{\Delta_r}{ \lambda^{2}} \alpha^\sharp\,, \\ w := \partial_t + \frac{1}{a\sin^2\theta} \partial_\varphi = \frac{\Delta_\theta}{a \lambda^{2}}  \beta^\sharp\,,
\end{gather*}
where, evidently, $\g(v, w) = 0$. Also, we shall denote for convenience
\begin{equation}\label{eq:h}
h(r,\theta):= \frac{-\Delta_r(r)}{\rho^2} = - \frac{\lambda (r^2+a^2)}{\rho^2} f(r), \quad h' := \partial_r h.
\end{equation}
We provide the calculations of $\lambda_{112}$, $\lambda_{113}$, $\lambda_{114}$ and $\lambda_{314}$ to illustrate the relevant computational steps.
\begin{align*}
\lambda_{112}
&= 
\big(\partial_b l_{a} - \partial_a l_{b}\big) l^a n^b
= 
\frac{-\Delta_r}{2\rho^2}
\big(\partial_b l_{a} 
- 
\partial_a l_{b}\big)
\overbrace{
	\bigg(\partial_r^a + \frac{\lambda}{\Delta_r} v^a\bigg)\bigg(\partial_r^b - \frac{\lambda}{\Delta_r} v^b\bigg)
}^{\text{symmetric part vanishes}}
\\
&= 
-\frac{\lambda}{2\rho^2}
\big(\partial_b l_{a} - \partial_a l_{b}\big) \bigg({\delta_1}^b v^a - {\delta_1}^a v^b\bigg) 
= 
-\frac{\lambda}{\rho^2}
\big(v^a \partial_r l_{a} - \overbrace{v^a\partial_a l_{1}}^{=0}\big)
\\
&=
\frac{\lambda}{\rho^2}
v^a \partial_r \bigg( \frac{{\delta_a}^1 + \frac{\lambda}{\Delta_r}v_a}{\sqrt{2 g_{rr}}}\bigg) 
=
\frac{1}{\rho^2}
v^a \partial_r \bigg( \frac{ \alpha_a}{\sqrt{2 g_{rr}}}\bigg)
=
\frac{\alpha(v)}{\rho^2\sqrt 2} \partial_r (-\rho^2/\Delta_r)^{-\frac12}
\\[10pt]
&=
\frac{\rho}{2\sqrt{-2\Delta_r}} \partial_r \frac{-\Delta_r}{\rho^2}
=
\frac{-1}{2\sqrt{-2\Delta_r}} \frac{\rho^2 \Delta_r' - 2r\Delta_r}{\rho^3}
=
-\frac{\rho^2 \Delta_r' - 2r\Delta_r}{2\rho^3\sqrt{-2\Delta_r}}.
\end{align*}
\begin{align*}
\lambda_{113}
&= 
\big(\partial_b l_{a} - \partial_a l_{b}\big) l^a m^b 
= 
-m^2 \overbrace{l_a \partial_\theta l^a}^{=0} + l^1 \overbrace{l_a \partial_r m^a}^{=0} = 0, \quad \lambda_{114} = \overline{\lambda_{113}} = 0.
\end{align*}
\begin{align*}
\lambda_{314}
&=
\big(\partial_j l_{i} - \partial_i l_{j}\big) m^i \overline m ^j
=
m^2\big(m^i - \overline m^i\big)\partial_\theta l_i 
=
\frac{\sqrt h}{\sqrt 2 \rho} \frac{2ia\lambda\sin\theta}{\sqrt{2h}\rho} \overbrace{w^i \partial_\theta l_i}^{=-l_i\partial_\theta w^i}
\\
&=
-\frac{ia\lambda\sin\theta}{\rho^2} \Bigg(-\sqrt\frac{-\Delta_r}{2\rho^2}\Bigg)\frac{1}{\lambda}\alpha\bigg(-\frac{2 \cos\theta}{a \sin^3\theta} \partial_\phi\bigg) 
=
\frac{ia\sqrt{-2\Delta_r}\cos\theta}{\rho^3}.
\end{align*}
The other spin coefficients are obtained similarly, all by hand. We present only the results of these calculations. The omitted coefficients are either vanishing or can be obtained by the antisymmetric property of $\lambda_{ijk}$ with respect to $i$ and $k$ as in \eqref{eq:lambda}.
\begin{align*}
 & \lambda_{112} = - \lambda_{122} = -\frac{\rho^2 \Delta_r' - 2r\Delta_r}{2\rho^3\sqrt{-2\Delta_r}};
 & \lambda_{123} = \lambda_{124} = -\frac{a^2\sin\theta \cos\theta}{\rho^3} \sqrt{\frac{h}{2}};
 \\
 & \lambda_{132} = -\lambda_{142} = \frac{i ar \sin\theta \sqrt{2h}}{\rho^3};
 & \lambda_{134} = \lambda_{143} = \lambda_{234} = \lambda_{243} = -\frac{r}{\rho^3}\sqrt{\frac{-\Delta_r}{2}};
 \\
 & \lambda_{213} = \lambda_{214} = -\frac{a^2\sin\theta \cos\theta}{\rho^3} \sqrt{\frac{h}{2}};
 & \lambda_{314} = -\lambda_{324} = \frac{ia\sqrt{-2\Delta_r}\cos\theta}{\rho^3};
\end{align*}
\begin{equation*}
	\lambda_{334} = -\lambda_{344} = \frac{\cot\theta}{\rho}\sqrt\frac{h}{2} \,
	\left[1 + \frac{a^2 \sin^2\theta}{h \rho^2}\Big(1 - \frac{\Lambda r^2}{3}\Big)\right].
\end{equation*}

The general formulae for the spin coefficients in terms of the Ricci rotation coefficients are available in, say, \cite[Chapter~10,~Section 102]{chandrasekhar_mathematical_1998} and read
\begin{align*}
\kappa_s &= \gamma_{311}
= 
\frac12 \bigg(\lambda_{311} + \lambda_{131} - \lambda_{113}\bigg)
=
-\lambda_{113}
=
0;
\\
\rho_s &= \gamma_{314}
=
\frac12 \bigg(
\lambda_{314} + 
\lambda_{431} - 
\lambda_{143}
\bigg)
=
\frac12 \bigg(
\frac{ia\sqrt{-2\Delta_r}\cos\theta}{\rho^3}+\frac{r}{\rho^3}\sqrt{\frac{-\Delta_r}{2}}+\frac{r}{\rho^3}{\sqrt{\frac{-\Delta_r}{2}}}
\bigg)
\\
&=
\frac{r + ia\cos\theta}{\rho^3}
\sqrt{\frac{-\Delta_r}{2}};
\\
\epsilon_s &= 
\frac12 \bigg(
\gamma_{211} + \gamma_{341} \bigg)
=
\frac14 \bigg(
\lambda_{211} + \lambda_{121} - \lambda_{112} + \lambda_{341} + \lambda_{134} - \lambda_{413}
\bigg)
\\
&= \frac14
\bigg(
\frac{\rho^2 \Delta_r' - 2r\Delta_r}{\sqrt2 \sqrt{-\Delta_r}  \rho^3}
+
\frac{r}{\rho^3}\sqrt{\frac{-\Delta_r}{2}}
-
\frac{r}{\rho^3}\sqrt{\frac{-\Delta_r}{2}}
+
\frac{ia\sqrt{-2\Delta_r}\cos\theta}{\rho^3}
\bigg)
\\[5 pt]
&=
\frac{\rho^2 \Delta_r' - 2(r + ia \cos\theta) \Delta_r}{4\rho^3\sqrt{-2\Delta_r}};
\\
\sigma_s &= \gamma_{313}
=
\frac12\bigg(
\lambda_{313} + 
\lambda_{331} - 
\lambda_{133}
\bigg)
=
0;
\\
\mu_s &= \gamma_{243}
=
\frac12\bigg(
\overbrace{\lambda_{243}}^{=\overline{\lambda_{234}}} + 
\lambda_{324}  
\overbrace{-\lambda_{432}}^{=+\lambda_{234}}
\bigg)
=
\frac12
\bigg(
-\frac{2r}{\rho^3}\sqrt{\frac{-\Delta_r}{2}}
+
-\frac{ia\sqrt{-2\Delta_r}\cos\theta}{\rho^3}
\bigg)
\\
&=
-\frac{r+ia\cos\theta}{\rho^3} \sqrt{\frac{-\Delta_r}{2}};
\\
\gamma_s
&= 
\frac12 \bigg(\gamma_{212} + \gamma_{342}\bigg)
=
\frac14
\bigg(
\overbrace{
	\lambda_{212} + \lambda_{221} - \lambda_{122}
}
^
{-2\lambda_{122}}
+ 
\overbrace{\lambda_{342}}^{=-\overline{\lambda_{234}}}
+ \lambda_{234} - \lambda_{423}
\bigg)
\\
&=
\frac14
\bigg(
-\frac{\rho^2 \Delta_r' - 2r\Delta_r}{\sqrt2 \sqrt{-\Delta_r}  \rho^3}
-\frac{ia\sqrt{-2\Delta_r}\cos\theta}{\rho^3}
\bigg) 
= 
-\frac{\rho^2 \Delta_r' - 2(r + ia \cos\theta) \Delta_r}{4\rho^3\sqrt{-2\Delta_r}};
\\
\lambda_s &= \gamma_{244}
=
\frac12 \bigg(
\lambda_{244} + \lambda_{424} - \lambda_{442}
\bigg)
=
0;
\\
\tau_s &= \gamma_{312}
=
\frac12 \bigg(
\lambda_{312} + \lambda_{231} - \lambda_{123}
\bigg)
=
\frac12
\bigg(
\frac{2a^2\sin\theta \cos\theta}{\rho^3} \sqrt{\frac{h}{2}}
-
\frac{i ar \sin\theta \sqrt{2h}}{\rho^3}
\bigg)
\\
&=
-ia\sin\theta\,\frac{r + i a \cos\theta}{\rho^3}\sqrt{\frac{h}{2}};
\\
\nu_s &= \gamma_{242}
=
\frac12 \bigg(
\lambda_{242} +
\lambda_{224} -
\lambda_{422}
\bigg)
=
0;
\\
\pi_s &= \gamma_{241}
=
\frac12 \bigg(
\lambda_{241} +
\lambda_{124} -
\lambda_{412}
\bigg)
=
\frac12 \bigg(
\frac{i ar \sin\theta \sqrt{2h}}{\rho^3}
-\frac{2a^2\sin\theta \cos\theta}{\rho^3} \sqrt{\frac{h}{2}}
\bigg)
\\
&=
ia\sin\theta \, \frac{r + i a \cos\theta}{\rho^3}\sqrt{\frac{h}{2}};
\\
\alpha_s &= \frac12 \bigg(\gamma_{214} + \gamma_{344}\bigg)
=
\frac14 \bigg(
\lambda_{214} + \lambda_{421} - \lambda_{142} + \overbrace{\lambda_{344} + \lambda_{434} - \lambda_{443}}^{=2\lambda_{344}}
\bigg)
\\
&=
\frac14 \bigg(
\frac{i ar \sin\theta \sqrt{2h}}{\rho^3}
-
\frac{2\sqrt h \cot\theta}{\sqrt{2}\rho}
\Big[1 + \frac{a^2 \sin^2\theta}{h \rho^2}\Big(1 - \frac{\Lambda r^2}{3}\Big)\Big]
\bigg)
\\
&=
\frac{\sqrt{2h}}{4\rho^3h} \bigg(
iarh\sin\theta - \cot\theta \Big[\rho^2h + a^2 \sin^2\theta\Big(1 - \frac{\Lambda r^2}{3}\Big)\Big]
\bigg)
\\
&=-\frac{\cot\theta}{2 \rho} \sqrt{\frac h2}
+
\frac{a\sin\theta}{2\sqrt{2h}\rho^3}\bigg(
irh - a\cos\theta\Big(1 - \frac{\Lambda r^2}{3}\Big)
\bigg)
\\
&=-\frac{\cot\theta}{2 \rho} \sqrt{\frac h2}
+
\frac{ia\sin\theta}{2\sqrt{2h}\rho^3}\bigg(
r + ia\cos\theta 
\bigg)
\bigg(1 - \frac{i\Lambda a r\cos\theta}{3}\bigg);
\\
\beta_s &= \frac12 \bigg(\overbrace{\gamma_{213}}^{=\overline{\gamma_{214}}} + \gamma_{343}\bigg)
=
\frac14 \bigg(
\overline{\lambda_{214} + \lambda_{421} - \lambda_{142}} + \overbrace{\lambda_{343} + \lambda_{334} - \lambda_{433}}^{=2\lambda_{334} = -2\overline{\lambda_{344}}}
\bigg)
\\
&=\frac{\cot\theta}{2 \rho} \sqrt{\frac h2}
-
\frac{ia\sin\theta}{2\sqrt{2h}\rho^3}\bigg(
r + ia\cos\theta 
\bigg)
\bigg(1 - \frac{i\Lambda a r\cos\theta}{3}\bigg).
\end{align*}
%\edit{I think this is redundant now, but it would be easier for the referee to calculate with $\downarrow$}
%
%Evidently, within the Dirac equation, the spin coefficients only appear as differences which are equal up to sign and/or conjugation. We therefore define the following helpful quantities
%\begin{align}
%F_B &:= \epsilon_s - \rho_s = \mu_s - \gamma_s = 
%\frac{\frac{\Delta_r' \rho^2}{2} + (r+ia\cos\theta)\Delta_r}{2\sqrt{-2\Delta_r} \rho^3},
%\label{eq:Jack's F}
%\\
%G_B &:= \pi_s - \alpha_s = \beta_s - \tau_s = 
%\frac{ia\sin^2\theta\,(r+ia\cos\theta)\Delta_\theta + \rho^2 \cos\theta (1+\frac{\Lambda a^2}{3}\cos2\theta)}{2\sqrt{2\Delta_\theta} \sin\theta\rho^3},
%\label{eq:Jack's G}
%\end{align}
%which match to those defined by J. Borthwick in \cite{borthwick_scattering_2022} after the required tetrad transformation.\footnote{There, $ \Lambda/3 $ is denoted by $ \ell^2 $.}

\section*{Statements and Declarations}

\subsection*{Funding and/or Conflicts of interests/Competing interests}

M. Mokdad has received partial financial support from the London Mathematical Society through the Atiyah-UK-Lebanon fellowship award 2022-2023. M. Provci did not receive support from any organization for the submitted work.  The authors declare they have no financial interests.

\subsection*{Data Availability}
Data sharing not applicable to this article as no datasets were generated or analyzed during the current study.

%\pagestyle{empty}
%\bibliographystyle{utphys}
%\bibliography{MyLibrary}

\printbibliography[heading=bibintoc] 
\end{document}